\def\draft{1}  
\newcommand{\Knote}[1]{{[\bf Kai-Min's Note: #1]}}
\newcommand{\Wnote}[1]{{[\bf Xiaodi's Note: #1]}}
\newcommand{\Knote}[1]{{}}
\newcommand{\Wnote}[1]{{}}
\newcommand{\Wnote}[1]{{}}
\newcommand{\vx}{\vec{x}}
\newcommand{\va}{\vec{a}}
\newcommand{\val}{\mathrm{val}}
\newcommand{\eval}{\mathrm{val}^*}
\DeclareMathOperator*{\Ex}{\mathbb{E}}
\newcommand{\bures}{\mathcal{B}}
\newtheorem{theorem}{Theorem}
\newtheorem{lemma}[theorem]{Lemma}
\newtheorem{corollary}[theorem]{Corollary}
\newtheorem{claim}[theorem]{Claim}
\newtheorem{fact}[theorem]{Fact}
\newtheorem{remk}[theorem]{Remark}
\newenvironment{proof}{\noindent{\bf Proof. }}{\qed}
\def\FullBox{\hbox{\vrule width 8pt height 8pt depth 0pt}}
\def\qed{\ifmmode\qquad\FullBox\else{\unskip\nobreak\hfil
\penalty50\hskip1em\null\nobreak\hfil\FullBox
\parfillskip=0pt\finalhyphendemerits=0\endgraf}\fi}
\def\qedsketch{\ifmmode\Box\else{\unskip\nobreak\hfil
\penalty50\hskip1em\null\nobreak\hfil$\Box$
\parfillskip=0pt\finalhyphendemerits=0\endgraf}\fi}
\newcommand{\E}{\mathop{\mathrm E}\displaylimits}
\newcommand{\remove}[1]{}
\newcommand{\eps}{\varepsilon}
\newcommand{\bfa}{{\mathbf{a}}}
\newcommand{\bfb}{{\mathbf{b}}}
\newcommand{\bfx}{{\mathbf{x}}}
\newcommand{\bfy}{{\mathbf{y}}}
\def\01{\{0,1\}}
\def\eps{\epsilon}
\newcommand{\microspace}{\mspace{0.5mu}}
\newcommand{\tr}{\operatorname{tr}}
\newcommand{\ip}[2]{\left\langle #1 | #2\right\rangle}
\def\({\left(}
\def\){\right)}
\def\I{\mathsf{id}}
\def\<{\langle}
\def\>{\rangle}
\def \lket {\left|}
\def \rket {\right\rangle}
\def \lbra {\left\langle}
\def \rbra {\right|}
\newcommand{\ket}[1]{\lket\microspace #1 \microspace\rket}
\newcommand{\bra}[1]{\lbra\microspace #1 \microspace\rbra}
\newcommand{\ketbra}[1]{\lket\microspace #1 \rangle \langle #1 \microspace\rbra}
\def\X{\mathcal{X}}
\def\Y{\mathcal{Y}}
\def\A{\mathcal{A}}
\def\B{\mathcal{B}}
\def\E{\mathcal{E}}
\newcommand{\commentout}[1]{}
\numberwithin{theorem}{section}
\numberwithin{equation}{section}
\newenvironment{protocol*}[1]
  {
    \begin{center}
      \hrulefill\\
      \textbf{#1}
  }
  {
    \vspace{-1\baselineskip}
    \hrulefill
    \end{center}
  }
\newcommand{\Hmin}{H_{\mathrm{min}}}
\begin{document}

\title{Parallel repetition for entangled $k$-player games \\ via fast quantum search}
\author{Kai-Min Chung\\Academia Sinica\\ \texttt{kmchung@iis.sinica.edu.tw} \and Xiaodi Wu\\MIT\\\texttt{xiaodiwu@mit.edu} \and Henry Yuen\\MIT\\\texttt{hyuen@csail.mit.edu}}
\maketitle

\begin{abstract}
	We present two parallel repetition theorems for the entangled value of multi-player, one-round free games (games where the inputs come from a product distribution).  Our first theorem shows that for a $k$-player free game $G$ with entangled value $\mathrm{val}^*(G) = 1 - \epsilon$, the $n$-fold repetition of $G$ has entangled value $\mathrm{val}^*(G^{\otimes n})$ at most $(1 - \epsilon^{3/2})^{\Omega(n/sk^4)}$, where $s$ is the answer length of any player. In contrast, the best known parallel repetition theorem for the \emph{classical} value of two-player free games is $\mathrm{val}(G^{\otimes n}) \leq (1 - \epsilon^2)^{\Omega(n/s)}$, due to Barak, et al. (RANDOM 2009). This suggests the possibility of a separation between the behavior of entangled and classical free games under parallel repetition.

	Our second theorem handles the broader class of free games $G$ where the players can output (possibly entangled) quantum states. For such games, the repeated entangled value is upper bounded by $(1 - \epsilon^2)^{\Omega(n/sk^2)}$. We also show that the dependence of the exponent on $k$ is necessary: we exhibit a $k$-player free game $G$ and $n \geq 1$ such that $\mathrm{val}^*(G^{\otimes n}) \geq \mathrm{val}^*(G)^{n/k}$.
	
	Our analysis exploits the novel connection between communication protocols and quantum parallel repetition, first explored by Chailloux and Scarpa (ICALP 2014). We demonstrate that better communication protocols yield better parallel repetition theorems: in particular, our first theorem crucially uses a quantum search protocol by Aaronson and Ambainis, which gives a quadratic Grover speed-up for distributed search problems. Finally, our results apply to a broader class of games than were previously considered before; in particular, we obtain the first parallel repetition theorem for entangled games involving more than two players, and for games involving quantum outputs.
\end{abstract}

\section{Introduction}

The study of multi-player one-round games has been central to both theoretical computer science and quantum information. Games have served as an indispensible tool with which to study a diverse array of topics, from the hardness of approximation to cryptography; from delegated computation to Bell inequalities; from proof systems to the monogamy of entanglement. In particular, two-player games have received the most scrutiny. In a two-player game $G$, a referee samples a pair of questions $(x,y)$ from some distribution $\mu$, and sends question $x$ to one player (typically named Alice), and $y$ to the other (typically named Bob). Alice and Bob then utilize some non-communicating strategy to produce answers $a$ and $b$, respectively, upon which the referee computes some predicate $V(x,y,a,b)$ to decide whether to accept or not. In this paper, we focus on the setting where Alice and Bob may utilize quantum entanglement as part of their strategy. The primary quantity of interest is the \emph{entangled value} $\eval(G)$ of game $G$, which is the maximum success probability over all possible entangled strategies for the players.

Recently, there has been significant interest in the \emph{parallel repetition} of entangled games~\cite{kempe2011parallel, ChaillouxS14, chailloux2014parallel, JainPY14, DinurSV14}. More formally, the $n$-fold parallel repetition of a game $G$ is a game $G^{\otimes n}$ where the referee will sample $n$ independent pairs of questions $(x_1,y_1), \ldots, (x_n,y_n)$ from the distribution $\mu$. Alice receives $(x_1,\ldots,x_n)$ and Bob receives $(y_1,\ldots,y_n)$. They produce outputs $(a_1,\ldots,a_n)$ and $(b_1,\ldots,b_n)$, respectively, and they win only if $V(x_i,y_i,a_i,b_i) = 1$ for all $i$. We call each $i$ a ``coordinate'' of $G^{\otimes n}$ or ``repetition'' of $G$.

Suppose we have a game $G$ where $\eval(G) = 1 - \eps$. Intuitively, one should expect that $\eval(G^{\otimes n})$ should behave as $(1 - \eps)^n$. Indeed, this would be the case if the game $G$ were played $n$ times \emph{sequentially}. However, there are counterexamples of games $G$ and $n > 1$ where $\eval(G^{\otimes n}) = \eval(G)$ (see Section~\ref{sec:lower_bound}). Despite such counterexamples, it has been shown that the \emph{classical} value $\val(G^{\otimes n})$ (i.e. where the players are restricted to using classical strategies) of a repeated game $G^{\otimes n}$ goes down exponentially with $n$, for large enough $n$~\cite{raz1998parallel,holenstein2007parallel}. This result is known as the Parallel Repetition Theorem, and is central in the study of hardness of approximation, probabilistically checkable proofs, and hardness amplification in classical theoretical computer science. 

Recently, quantum analogues of the Parallel Repetition Theorem have been studied, and for certain types of games, it has been shown that the entangled game value also goes down exponentially with the number of repetitions. In particular, parallel repetition theorems have been shown for $2$-player \emph{free games} (see~\cite{ChaillouxS14, chailloux2014parallel, JainPY14}) and \emph{projection games} (see~\cite{DinurSV14}). Free games are where the input distribution to the players is a product distribution (i.e. each players' questions are chosen independently of each other). Projection games are where, for each answer of one designated player, there is at most one other answer for the other player that the referee would accept. 

Most relevant to this work are the results of~\cite{ChaillouxS14, chailloux2014parallel, JainPY14} on free entangled games. Among them, the best parallel repetition theorem was obtained by~\cite{chailloux2014parallel}, who prove that for a two-player free game $G$, the entangled value of the $n$-fold repetition is at most $(1 - \eps^2)^{\Omega(n/s)}$, where $s$ is the answer length of the players. When $G$ is also a projection game, they obtain \emph{strong parallel repetition}: the repeated game value is at most $(1 - \eps)^{\Omega(n)}$. The centerpiece of their analysis is a novel connection between communication complexity and parallel repetition of games. 

\subsection{Our results}
In this work, we further develop this connection between games and communication protocols to obtain improved parallel repetition theorems for free entangled games. We present a generic framework where one obtains parallel repetition theorems for free games by designing succinct communication protocols. The core concept we present is
\begin{center}
\textbf{\emph{Better parallel repetition theorems from better communication protocols.}}
\end{center}
The first instantiation of this concept is the following theorem:
\begin{theorem}
\label{thm:grover_pr_informal}
	Let $k \geq 2$ be an integer. Let $G$ be a $k$-player free game with entangled value $\eval(G) = 1-\eps$. Then, for $n = \Omega( sk^4 \log(k/\eps) /\eps^{3/2})$,
	$$
		\eval(G^{\otimes n}) \leq (1 - \eps^{3/2})^{\Omega(n/k^4 s)}
	$$
	where $s$ is the output answer length of the players.
\end{theorem}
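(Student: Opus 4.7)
The plan is to implement the Chailloux--Scarpa template that derives parallel repetition from succinct communication protocols, and to obtain the improved $\eps^{3/2}$ exponent by substituting the distributed quantum search algorithm of Aaronson and Ambainis for the naive sampling subroutine used in prior work.

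I would argue by contradiction. Let $\mathcal{S}$ be a strategy for $G^{\otimes n}$ winning with probability $\delta = \eval(G^{\otimes n})$. A chain-rule averaging argument in the style of Raz and Holenstein produces a subset $T \subset [n]$ of some size $t$, an event $W$ asserting that $\mathcal{S}$ wins every coordinate of $T$ (so $\Pr[W] \geq \delta$), and a coordinate $i \notin T$ such that, conditioned on $W$, the induced strategy wins on coordinate $i$ with probability at least $1 - \eta$ for $\eta \leq O(\log(1/\delta)/(n-t))$. From $\mathcal{S}$ I build a strategy for the single-shot game $G$ as follows. On input $x_i^{(j)}$, each player $j$ locally samples its own marginal inputs on the coordinates in $T$ (possible because $G$ is free), and then the $k$ players jointly execute a distributed quantum protocol that approximately projects the pre-shared entangled state of $\mathcal{S}$ onto the event $W$; finally, they apply the remaining part of $\mathcal{S}$ to the combined input tuple and output their $i$-th coordinate answer.

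The improvement lives in the projection sub-protocol. A naive implementation has the players measure, broadcast their answers on $T$ at cost $\Theta(skt)$ bits per trial, and retry if $W$ fails, using expected communication $\Theta(skt / \Pr[W])$. The Aaronson--Ambainis distributed search instead uses amplitude amplification across the $k$ parties to find a transcript consistent with $W$ in only $O(1/\sqrt{\Pr[W]})$ effective trials, with $\mathrm{poly}(k)$ overhead, giving total quantum communication $O(skt \cdot \mathrm{poly}(k) / \sqrt{\Pr[W]})$. Feeding this into the communication-to-single-shot-value inequality that is the technical core of the Chailloux--Scarpa framework, and balancing $t$, $\eta$, and the sampling error, yields $\delta \leq (1-\eps^{3/2})^{\Omega(n/sk^4)}$ provided $n \geq \Omega(sk^4 \log(k/\eps)/\eps^{3/2})$. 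The square root saved by Aaronson--Ambainis is precisely what turns the $\eps^{2}$ exponent of Chailloux--Scarpa into an $\eps^{3/2}$ exponent, and the $k^4$ factor absorbs the overhead of running distributed Grover across $k$ parties rather than $2$.

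The main obstacle I expect is lifting Aaronson--Ambainis, which is naturally phrased as a quantum search or decision procedure on classical inputs, to the task of sampling (to within small trace-norm error) from the conditional post-measurement state of an arbitrary $k$-party entangled strategy, without any structural assumption on $\mathcal{S}$. One must then carefully compose the sampling error with the chain-rule loss and the communication-to-value reduction so that all error terms collapse into a single clean bound; in particular, one must verify that no single component inflates the dependence on $k$ beyond the $k^4$ permitted by the theorem and that lower-order logarithmic terms are compatible with the stated threshold on $n$.
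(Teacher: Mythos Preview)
Your proposal misidentifies both the baseline protocol and the role of quantum search, and as written it would not close.

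First, the Chailloux--Scarpa baseline does not retry. The protocol samples a small random subset of coordinates, one player ships her inputs/outputs on that subset (cost $O(s \cdot |\text{subset}|)$), the other player checks those coordinates, and then \emph{in the analysis} one conditions on the check passing. There is no repeat-until-success loop, and the communication cost is never divided by $\Pr[W]$. The relevant quantity downstream is $T + \log(1/\lambda)$, where $T$ is the communication and $\lambda \geq \eval(G^{\otimes n})$ is the probability of the conditioning event; the $\log(1/\lambda)$ term is handled directly by the assumption that $\eval(G^{\otimes n})$ is not too small.

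Second, and more seriously, your use of Aaronson--Ambainis is aimed at the wrong object. You propose amplitude amplification over \emph{transcripts} to hit the event $W$, costing $O(1/\sqrt{\Pr[W]})$ iterations. But $\Pr[W] \geq \delta$ may be as small as the very bound you are trying to prove, i.e.\ of order $2^{-\Theta(\eps^{3/2} n/sk^4)}$, so your communication becomes exponential in $n$ and the reduction gives nothing. The paper's argument instead uses quantum search over \emph{coordinates}: the players sample $h = O(\eps^{-1}\log(1/\eta))$ coordinates, partition them into groups of size $\approx 1/\eps$, and run the Aaronson--Ambainis $3$D search on each group to detect a \emph{losing} coordinate. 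This cuts the per-group communication from $O(1/\eps)$ to $O(1/\sqrt{\eps})$ queries, so $T$ drops from $O(s\log(1/\eta)/\eps)$ to $O(s\log(1/\eta)/\sqrt{\eps})$. Conditioning on ``no losing coordinate found'' keeps $\lambda \geq \eval(G^{\otimes n})$, and Nayak--Salzman plus a quantum Raz lemma bound $I(X_{(i,j)}:Z_{-j})$ by $O((T+\log 1/\lambda)/n)$. The $\sqrt{\eps}$ saved in $T$ is exactly what turns the $1-\eps^2$ base into $1-\eps^{3/2}$; the $k^4$ comes from the $k$-player strategy-rounding lemma (a factor $k^2$) composed with summing over $k$ players, not from any multiparty Grover overhead. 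Your amplification-over-transcripts idea also runs into the obstacle you flagged yourself: reflecting about the initial shared state $\ket{\xi}$ is not a local operation, so the players cannot implement Grover's diffusion step on the strategy's Hilbert space.
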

The proof of Theorem~\ref{thm:grover_pr_informal} uses a quantum communication protocol that performs a version of distributed unstructured search (i.e. searching for a $1$ in a bitstring).   The improvement of the base from $1 - \eps^2$ (as found in~\cite{chailloux2014parallel}) to $1 - \eps^{3/2}$ comes from the fact that the unstructured search problem on $N$ bits can be solved by a quantum algorithm using only $O(\sqrt{N})$ queries. We discuss this in more detail in the next section. 

Our second theorem handles a broader class of games, where the players can output quantum states as answers. We are able to handle this broader class of games because our framework allows general quantum communication protocols.
\begin{theorem}
\label{thm:cq_pr_informal}
	Let $k \geq 2$ be an integer. Let $G$ be a $k$-player free game, where players can output (possibly entangled) quantum states, and has entangled value $\eval(G) = 1-\eps$. Then, for all $n$,
	$$
		\eval(G^{\otimes n}) \leq (1 - \eps^{2})^{\Omega(n/k^2 s)}
	$$
	where $s = \max_j \log (d_j)$, where $d_j$ is the dimension of player $j$'s output state.
\end{theorem}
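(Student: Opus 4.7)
The plan is to apply the general framework advocated in the paper: parallel repetition theorems for free entangled games follow from the design of a succinct (quantum) communication protocol that lets one extract a good single-copy strategy from an assumed good strategy for $G^{\otimes n}$. The key difference from Theorem~\ref{thm:grover_pr_informal} is that the outputs are now quantum states, so the relevant protocol transmits quantum answer registers directly, rather than performing quantum search over classical answer strings; correspondingly one loses the Grover quadratic speed-up and recovers only the base $1 - \eps^2$.

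First I would assume for contradiction that $\eval(G^{\otimes n}) > (1-\eps^2)^{cn/(k^2 s)}$ for a sufficiently small constant $c$, and fix an optimal strategy $S$ for $G^{\otimes n}$. Following the standard information-theoretic approach for free games, I average over a uniformly random target coordinate $i \in [n]$ and a random subset of ``test'' coordinates $T \subseteq [n] \setminus \{i\}$ of a to-be-chosen size. A chain rule applied to the classical win/lose indicators across the $n$ coordinates shows there exists a choice of $(i,T)$ and a conditioning event (``win on all of $T$'') under which the strategy's marginal behavior on coordinate $i$ would, if legitimately implementable by $k$ non-communicating players, succeed with probability strictly greater than $1-\eps = \eval(G)$. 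This is the target contradiction, provided the conditional marginal strategy can be simulated by an honest single-copy strategy up to small error.

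The core step is designing the simulation. I would have the players share the entanglement of $S$ and realize the post-selection ``won on $T$'' via the following protocol: each of the $k$ players transmits their quantum answer registers on coordinates in $T$ (dimension at most $2^{s|T|}$) to a designated coordinator via direct quantum communication or teleportation; the coordinator applies the referee's verification projectors for $T$ and broadcasts the one-bit success indicator. Conditioned on success, the players are left with a shared state on the registers for coordinate $i$ from which they apply the honest coordinate-$i$ measurements of $S$. Routing through a star topology across the $k$ players and accounting for each player's incoming and outgoing transmissions yields an overall communication cost of order $k^2 \cdot s \cdot |T|$ qubits, which is the quantity controlling the information loss.

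A quantum Pinsker-type inequality then converts this cost into a trace-distance bound of order $\sqrt{k^2 s |T|/n}$ between the simulated single-copy strategy and a genuinely non-communicating single-copy strategy; choosing $|T| = \Theta(n/(k^2 s))$ makes this deviation at most $\eps/2$, producing a single-copy strategy with success probability $>1-\eps$ and contradicting $\eval(G)=1-\eps$. The main obstacle I expect is the quantum-output aspect of the conditioning step: ``conditioning on winning on $T$'' is not a classical Bayes update but a non-unitary post-selection on the players' shared state, so one must carefully use the gentle-measurement lemma together with a chain rule for quantum mutual information on the answer registers to bound how far the post-selected marginal on coordinate $i$'s registers drifts from a valid product-state strategy. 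It is precisely this step that pins down the $\eps^2$ base (from the Pinsker-type inequality) and the $1/(k^2 s)$ factor in the exponent.
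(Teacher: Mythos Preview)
Your high-level plan matches the paper's: condition on winning a subset $C$ of coordinates, realize that conditioning via a short quantum communication protocol in which players ship their answer registers on $C$ to a coordinator who performs the referee's measurement, and then round the resulting advice state to a valid single-game strategy. The paper also uses the iterative ``build $C$ one coordinate at a time'' structure you allude to, so the skeleton is right.

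There are, however, two concrete technical ingredients that your sketch glosses over and that the paper relies on essentially. First, your claim that the communication cost converts to a trace-distance bound via ``a quantum Pinsker-type inequality'' and ``gentle measurement'' is not the mechanism that works. What is actually needed is: (a) the Nayak--Salzman theorem, which bounds the relative min-entropy $S_\infty(\psi^{X_j Z_{-j}} \,\|\, \psi^{X_j}\otimes\sigma^{Z_{-j}})$ by twice the number of qubits Alice sends in an interactive entanglement-assisted protocol (this is what turns the communication cost of your star protocol into an information bound on player $j$'s input versus everyone else's state); and then (b) the Quantum Raz Lemma to spread this over coordinates, yielding per-coordinate mutual informations $I(X_{(i,j)}:Z_{-j})_\varphi$ that are small. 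Gentle measurement does not give you (a), and ordinary Pinsker does not give you the product-structure statement needed for rounding.

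Second, the rounding step itself is not a Pinsker bound but the $k$-player Quantum Strategy Rounding lemma (Lemma~\ref{lem:superjpy}), which is a new technical contribution of the paper: it shows that if $I(X_{(i,j)}:Z_{-j})_\varphi$ is small for every $j$, then there exist local unitaries $U^j_{x_j}$ with $\Ex_x K(\varphi_x, (\bigotimes_j U^j_{x_j})\varphi(\bigotimes_j U^j_{x_j})^\dagger) \le 4k\sum_j I(X_{(i,j)}:Z_{-j})$. The extra factor of $k$ here, together with the sum over $k$ players, is where the $k^2$ in the exponent actually originates---not from the communication cost of the star protocol, which is $O(k|C|s)$ rather than the $O(k^2|C|s)$ you wrote. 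Your accounting happens to land on the right exponent, but for the wrong reason; if you tried to carry the argument through as written you would either lose a factor of $k$ or be unable to justify the rounding.
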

Furthermore, we prove that the dependence of the exponent on the number of players $k$ is necessary:
\begin{theorem}
\label{thm:lowerbound_informal}
	For all $k \geq 2$, there exists a $k$-player free game $G$ and $n > 1$ where  $\eval(G^{\otimes n}) \geq \eval(G)^{n/k}$.
\end{theorem}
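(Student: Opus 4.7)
The plan is to prove Theorem~\ref{thm:lowerbound_informal} by explicit construction: for each $k \geq 2$, I would exhibit a $k$-player free game $G$ with $\eval(G) < 1$ together with an $n > 1$ for which a concrete strategy for $G^{\otimes n}$ achieves value at least $\eval(G)^{n/k}$. Since for $\eval(G)<1$ the inequality is trivially false for $n=1$ and only becomes meaningful once $n > k$, the construction must exhibit a genuine ``slow decay'' phenomenon in the repeated value.

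The construction starts from a simple two-player free game $G_0$ (for instance, CHSH) with $\eval(G_0) = v < 1$, and lifts $G_0$ to a $k$-player free game via a role-redundancy trick. Player $1$ plays the role of Alice in $G_0$, while each of players $2, \ldots, k$ independently receives a fresh sample of Bob's input; the referee privately samples a uniform index $I \in \{2, \ldots, k\}$ and checks the $G_0$ predicate on the outputs of player $1$ and player $I$. Independence of all input distributions makes $G$ free, and a routine calculation using the symmetric Bob-strategy (every Bob-player uses the optimal Bob-strategy of $G_0$) gives $\eval(G) = v$.

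For the repeated game $G^{\otimes n}$, I would design a strategy that leverages the $k-1$ redundant Bob-players to distribute the task of playing Bob across the coordinates. The $n$ coordinates would be partitioned into groups of size $k-1$, with each coordinate of a group assigned to a distinct Bob-player; that Bob-player then runs the optimal $G_0$ Bob-strategy only on its assigned coordinates and outputs a carefully chosen default answer elsewhere, while the Alice-player plays optimally on every coordinate. A coordinate-by-coordinate analysis, exploiting the independence and uniformity of the referee's hidden choices $I_t$, would then show that the overall winning probability is at least $v^{n/(k-1)}$, which in turn is bounded below by $v^{n/k} = \eval(G)^{n/k}$.

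The hard part will be controlling the contribution from ``unmatched'' coordinates---those in which the referee's selected index $I_t$ does not coincide with the assigned Bob-player. The default answer must be tuned so that its conditional winning probability on such coordinates is high enough to compensate for the fraction $(k-2)/(k-1)$ of mismatches; this may force a careful choice (or a slight modification) of the base game $G_0$, for example choosing $G_0$ to be an XOR game or projection game whose structure offers a symmetric default with a favorable acceptance probability. If the na\"ive default fails, I would fall back on a more elaborate construction that bakes additional symmetric structure into $G$ (such as padding each coordinate with auxiliary inputs that make a useful default behavior available), so that the per-coordinate accounting cleanly yields the $1/k$ improvement in the exponent.
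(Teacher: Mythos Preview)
Your proposal has a genuine gap, and in fact two independent problems.

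First, a direction error: you claim that achieving value $v^{n/(k-1)}$ suffices because it ``is bounded below by $v^{n/k}$.'' For $v<1$ and $n/(k-1)>n/k$ the inequality goes the other way: $v^{n/(k-1)} < v^{n/k}$. So even if your strategy did achieve $v^{n/(k-1)}$, this would not establish the theorem.

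Second, and more seriously, the redundant-Bob construction with a default answer cannot reach the required per-coordinate success probability. On each coordinate the referee's hidden index $I_t$ matches the assigned Bob-player only with probability $1/(k-1)$; otherwise the selected player outputs a fixed default $b^*$, and since Alice is committed to her half of the optimal $G_0$ strategy her output is uncorrelated with that Bob's input. The per-coordinate winning probability is thus at most $\frac{1}{k-1}v+\frac{k-2}{k-1}w$, where $w$ is the acceptance probability of the default under Alice's marginal behavior. For CHSH with $b^*=0$ one gets $w=1/2$, far below what is needed. Trying to pick $G_0$ with a high-acceptance default is self-defeating: any default with acceptance $w$ forces $\eval(G_0)\geq w$, and a short computation shows $\frac{1}{k-1}v+\frac{k-2}{k-1}w$ is then still strictly less than $v^{1/k}$. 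Your ``fallback'' of padding the game is not concrete enough to escape this arithmetic.

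The paper's construction is of an entirely different nature, generalizing Feige's non-interactive agreement game rather than lifting a two-player game via redundancy. Each player $j$ receives a uniform bit $x_j$ and outputs a pair $(i_j,a_j)\in[k]\times\{0,1\}$; the players win iff they all agree on an index $i$ and $\bigoplus_{j\neq i} a_j = x_i$. One checks $\eval(G)=\val(G)=1/2$ (even the non-signaling value is $1/2$). The point is that for $n=k$ there is a classical strategy for $G^{\otimes k}$ achieving value exactly $1/2$: in repetition $\ell$, player $j$ outputs $(\ell,\,x_j^j)$. This makes the winning condition in every repetition $\ell$ the \emph{same} equation $\bigoplus_{j} x_j^j = 0$, so all $k$ repetitions succeed or fail together, with probability $1/2 = \eval(G)^{k/k}$. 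The mechanism---each player injects their own input from repetition $j$ into their answers, collapsing $k$ independent tests into one shared random event---has no analogue in your approach and is what you are missing.
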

To our knowledge, our results are the first to show quantum parallel repetition in the setting of games with more than $2$ players.

Finally, we give a proof of parallel repetition for the \emph{classical} value of $k$-player free games. While this theorem appears to be a folklore result, we were not able to find any explicit proof of it. We provide one here for the sake of completeness.
\begin{theorem}
\label{thm:classical_pr_informal}
Let $G$ be a $k$-player free game with classical value $\val(G) = 1 - \eps$. Then 
$$
\val(G^{\otimes n}) \leq (1 - \eps^2)^{\Omega(n/sk)},
$$
where $s$ is the output answer length of the players.
\end{theorem}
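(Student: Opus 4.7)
My approach is to adapt the information-theoretic proof of parallel repetition for two-player free games due to Barak, Rao, Raz, Rosen, and Shaltiel to $k$ players. Fix a deterministic strategy $(f_1,\ldots,f_k)$ for $G^{\otimes n}$ (shared randomness can be fixed optimally without loss of generality). Write $X_i^{(j)}$ for the question to player $j$ at coordinate $i$, $A_i^{(j)} = f_j(X^{(j)})_i$ for the corresponding answer, $W_i$ for the event that coordinate $i$ is won, and suppose $\Pr[W_1\wedge\cdots\wedge W_n]\geq\eta$; we aim to contradict this for $\eta$ slightly above $(1-\eps^2)^{cn/(sk)}$.

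The first and main step is a chain-rule argument producing a ``good'' coordinate $i^*\in[n]$ and a conditioning event $T$, consisting of a random subset $S\subseteq[n]\setminus\{i^*\}$ of other coordinates together with $(X_S,A_S)$ and the event $W_S$, such that: (a) the conditional distribution on $X_{i^*}=(X_{i^*}^{(1)},\ldots,X_{i^*}^{(k)})$ lies within total variation distance $O(\sqrt{sk\log(1/\eta)/n})$ of the original product distribution on $X_{i^*}$; and (b) $\Pr[W_{i^*}\mid T]\geq 1-\eps/4$, say. This follows from $D(P_{X\mid W}\,\|\,P_X)=\log(1/\eta)$ together with the chain rule, crucially using that the input distribution in a free game is product across both coordinates $i$ and players $j$. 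The factor $sk$ enters because the $nk$ answers collectively carry at most $nsk$ bits of information about the inputs through $A_S$, so the per-coordinate information budget is $\log(1/\eta)/(n/(sk))$.

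Given $i^*$ and $T$, the second step is to embed a single instance of $G$ at coordinate $i^*$. Since $G$ is a free game, each player can, from shared randomness alone, independently sample all players' questions on the coordinates outside $\{i^*\}$, and the players can jointly decide whether the sample is consistent with $T$ via a single shared-randomness acceptance bit (standard $k$-party rejection sampling on the event $W_S$). Conditioned on acceptance, each player runs $f_j$ on its composite input (its real question for $G$ at coordinate $i^*$ together with the sampled inputs elsewhere) and outputs the $i^*$-th entry. Item (a) guarantees that the statistics of the resulting single-shot strategy are within $O(\sqrt{sk\log(1/\eta)/n})$ of the conditional strategy at coordinate $i^*$, so by (b) it wins $G$ with probability at least $1-\eps/4-O(\sqrt{sk\log(1/\eta)/n})$.

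For this to exceed $\val(G)=1-\eps$ and produce a contradiction, we need $\log(1/\eta)=\Omega(\eps^2 n/(sk))$, which rearranges to $\val(G^{\otimes n})\leq(1-\eps^2)^{\Omega(n/(sk))}$ as claimed. The main obstacle is the bookkeeping in the chain-rule step, where one must simultaneously control the distance of the conditional input distribution from product over $k$ players and the deviation of the conditional winning probability at $i^*$, all while tracking the $sk$ factor accurately---the direct generalization of the two-player argument of Barak et al.\ works, but care is needed in setting up the conditioning scheme so that the $k$ players' contributions to the information cost add rather than multiply. The $k$-player correlated sampling itself is essentially free here because the product structure of the inputs in a free game reduces agreement among the players to a single shared acceptance bit rather than a nontrivial multiparty coordination protocol.
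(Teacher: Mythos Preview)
Your overall approach matches the paper's: adapt the information-theoretic free-game argument of Barak et al.\ to $k$ players, conditioning on inputs and answers in a subset of coordinates and then embedding a single instance of $G$ at a fresh coordinate. However, the embedding step you describe has a genuine gap.

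You write that the players ``jointly decide whether the sample is consistent with $T$ via a single shared-randomness acceptance bit (standard $k$-party rejection sampling on the event $W_S$)'' and that ``the $k$-player correlated sampling itself is essentially free.'' This is not right. The event $W_S$ depends on the answer tuple $A_S = (A_S^{(1)},\ldots,A_S^{(k)})$, and $A_S^{(j)} = f_j(X^{(j)})|_S$ depends on player $j$'s \emph{entire} input vector, including the private question $u_j$ at coordinate $i^*$. So whether a shared sample of $X_{[n]\setminus\{i^*\}}$ is consistent with the fixed value of $A_S^{(j)}$ is a function of $u_j$, and different players will accept or reject the same shared sample differently. There is no single shared acceptance bit; the players cannot synchronize their rejection sampling without communication.

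The paper fixes this with a structural lemma you do not mention: conditioned on $(x_C,a_C)$, the players' full input vectors $x_{(\cdot,1)},\ldots,x_{(\cdot,k)}$ are \emph{mutually independent} (this uses that each $a_{(C,j)}$ is a function of $x_{(\cdot,j)}$ alone, together with the product input distribution). Given this, the correct protocol is: use shared randomness to sample $(x_C,a_C)$ from the conditioned distribution, and then have each player $j$ privately sample only their \emph{own} column $x_{(\cdot,j)}$ from $\varphi(x_{(\cdot,j)}\mid x_C,a_C,x_{(i^*,j)}=u_j)$. The conditional-independence claim (and its corollary after further conditioning on $x_{(i^*,\cdot)}=u$) guarantees that this factored sampling reproduces the correct joint conditional distribution. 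This lemma is exactly the ``nontrivial multiparty coordination'' step you dismissed.

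A secondary point: your item (b), that $\Pr[W_{i^*}\mid T]\geq 1-\eps/4$, is not something you prove from the chain rule; rather, the argument goes the other way. The embedding shows the single-shot winning probability is within $O(\delta)$ of $\Pr[W_{i^*}\mid T]$, so $\Pr[W_{i^*}\mid T]\leq \val(G)+O(\delta)=1-\eps+O(\delta)$, and then an iterative (or chain-rule) argument over coordinates turns this into the exponential bound on $\val(G^{\otimes n})$.
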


\medskip
\noindent \textbf{CQ Games}. Our second theorem applies to a class of games that is a generalization of the traditional notion of games that involve two players and have classical inputs and outputs. In this paper we introduce the class of $k$-player \emph{classical-quantum} (CQ) games, where the players receive classical inputs, apply local unitary operators to their share of an entangled state, and return some qubits to the referee. The referee then makes a measurement on the answer qubits to decide whether to accept or reject. If we restrict the players' unitaries to be permutation matrices, and the referee's measurement to be diagonal in the standard basis, then we recover the class of classical games. 

We believe the model of CQ games is worth deeper investigation. One motivation for the study of CQ games comes from the recent exciting work of Fitzsimons and Vidick~\cite{fitzsimons2014multiprover}, who demonstrated an efficient reduction transforming a local Hamiltonian $H = H_1 + \cdots + H_m$ acting on $n$ qubits to a $5$-player CQ-game $G_H$ such that approximating $\eval(G_H)$ with inverse polynomial accuracy will decide whether the ground state energy of $H$ is a YES or NO instance of the QMA-complete problem \textsc{Local Hamiltonians}. In this game, the referee sends $O(\log n)$-sized questions, and the players responds with $O(1)$-qubit states as answers. The significance of this is that it opens up the possibility of proving a ``games'' version of the Quantum PCP conjecture. This intriguing possibility calls for further study of the behavior of CQ games. 

\subsection{Parallel repetition and communication protocols}
At a high level, most proofs of parallel repetition proceed via reduction. Let $G$ be a two-player free game with verification predicate $V(x,y,a,b)$. If there were a strategy $\mathcal{S}$ for the repeated game $G^{\otimes n}$ that wins with too large probability, then one can transform $\mathcal{S}$ to a strategy $\mathcal{T}$ to play a single instance of the game $G$ with probability larger than $\eval(G)$, which would be a contradiction. 

In~\cite{ChaillouxS14, chailloux2014parallel,JainPY14}, the reduction from a repeated game strategy to a single game strategy has two steps: (1) a ``too-good'' repeated game strategy $\mathcal{S}$ is converted to an \emph{advice-based strategy} for game $G$, which wins with high probability. An advice-based strategy is a collection of \emph{advice states} $\{\varphi_{xy}\}_{xy}$ so that when Alice and Bob receive inputs $x$ and $y$, they happen to share the entangled state $\varphi_{xy}$, which they can measure to produce answers. Of course, this is not a valid quantum strategy for game $G$, but (2) using the assumption that $\mathcal{S}$ has very high winning probability, the advice-based strategy can be \emph{rounded} to a true game strategy: Alice and Bob can apply local operations $U_x$ and $V_y$, respectively, on some input-independent state $\varphi$ to approximate $\varphi_{xy}$, and thus simulate the advice-based strategy (with some error). 

One can construct the advice states $\{\varphi_{xy}\}_{xy}$ from $\mathcal{S}$ in different ways. Generally, the goal is to create advice states that closely mimick the joint state of the players during an actual execution of the strategy $\mathcal{S}$, \emph{conditioned} on some event. Ideally, we would like to condition on the event that the players won all $n$ coordinates of $G^{\otimes n}$. This would give rise to the ideal advice states: whenever the players receive an input $x$ and $y$, their advice state $\varphi_{xy}$ would have precisely the correct answers $a$ and $b$ that would allow them to win the single game $G$.

However, it seems impossible to argue that such an ideal advice-based strategy can be simulated by a true game strategy. The approach taken by~\cite{ChaillouxS14, chailloux2014parallel} is to construct advice states $\{\varphi_{xy}\}$ that have two properties: (a) the advice-based strategy succeeds with high probability for $G$, and (b) there is a \emph{low-cost communication protocol} between Alice and Bob that produces $\varphi_{xy}$ when they receive inputs $x$ and $y$, respectively. Property (b) makes it possible to approximate the advice-based strategy using a valid quantum strategy: small communication complexity translates into small rounding error. 

The communication protocol used in~\cite{ChaillouxS14, chailloux2014parallel} is a simple one: Alice and Bob first play the optimal strategy $\mathcal{S}$ for $G^{\otimes n}$. They receive inputs $(x_1,\ldots,x_n)$ and $(y_1,\ldots,y_n)$, and measure a shared state $\ket{\xi}$ and obtain $n$-tuples of outputs $(a_1,\ldots,a_n)$ and $(b_1,\ldots,b_n)$. Then, Alice samples a small subset of coordinates $i_1,\ldots,i_h$, and sends over her inputs and outputs in this subset to Bob, who verifies that the original game $G$ was won in each of these coordinates. If Bob finds a $i_j$ such that the tuple $V(x_{i_j},y_{i_j},a_{i_j},b_{i_j}) = 0$, then Bob aborts. Otherwise, Bob accepts. If we condition the final state of the protocol on Bob accepting, then we have a very good proxy for the ideal advice states described above. But since the communication complexity of this protocol is small, we can round this to a valid quantum strategy. 

However, the analysis of~\cite{ChaillouxS14, chailloux2014parallel} is tailored to simple one-way communication protocols involving classical messages. We generalize this paradigm to show that, if the advice states $\{\varphi_{xy}\}$ can be constructed using \emph{any} communication protocol (which can be two-way, and involve quantum messages), then the advice-based strategy using $\{\varphi_{xy}\}$ can be simulated with a true game strategy, with error that is related to the communication complexity of the protocol. This unlocks a richer toolbox for the reduction designer: one can use many more tools from communication complexity to engineer good advice states. This gives rise to the concept of ``Better parallel repetition theorems from better communication protocols.''

Our theorems are instantiations of this mantra. At the heart of the communication protocol used in our first theorem is a variant of the Grover search algorithm. There, the players sample a random subset of coordinates $i_1,\ldots,i_h$ as before, but now they perform quantum search over the indices to find a ``losing coordinate'': i.e., a coordinate $i_j$ such that $V(x_{i_j},y_{i_j},a_{i_j},b_{i_j}) = 0$. The quadratic speedup of Grover's search algorithm translates into a quadratic savings in communication complexity, which is precisely what allows us to improve the base of the repeated game value from $1 - \eps^{2}$ to $1 - \eps^{3/2}$. For our second theorem, we take advantage of the fact that the communication protocol can be quantum, which allows us to handle games with quantum outputs.

Our use of quantum search in the protocol to generate the advice states gives a generic way to improve the reduction for arbitrary free games. However, one could also use this technique to prove \emph{game-specific} parallel repetition theorems. That is, one could try to leverage special properties of a particular game to design a succinct communication protocol for generating advice states, and in turn, obtain a parallel repetition theorem with better parameters. Indeed, one can see this idea in the result of~\cite{chailloux2014parallel} for free projection games: by using the projection property of the game, their communication protocol avoids sending whole input and output symbols. This allows them to prove a repeated game value of $(1 - \eps)^{\Omega(n)}$ -- note that this does not depend on the output alphabet!


\subsection{Related work}

We discuss how our result relates to prior results in parallel repetition, classical and quantum. Most relevant to our work are the results on free games. Jain, et al.~\cite{JainPY14} and Chailloux and Scarpa~\cite{ChaillouxS14,chailloux2014parallel} both proved that the entangled value of $2$-player free games (with classical inputs and outputs) goes down exponentially with the number of repetitions. In particular,~\cite{chailloux2014parallel} showed for such a game $G$ with $\eval(G) = 1-\eps$, we have that $\eval(G^{\otimes n}) \leq (1 - \eps^2)^{\Omega(n/s)}$, where $s$ is the output length of the players. They also show that, when $G$ is also a projection game, \emph{strong} parallel repetition holds: $\eval(G^{\otimes n}) \leq (1 - \eps)^{\Omega(n)}$. 


In a different line of work, Dinur, Steurer and Vidick show that projection games (with an arbitrary input distribution) also have an exponential decay in entangled value under parallel repetition: if $G$ be a $2$-player projection game with classical inputs and outputs, and $\eval(G) = 1-\eps$, then $\eval(G^{\otimes n}) \leq (1 - \eps^{12})^{\Omega(n)}$~\cite{DinurSV14}. This result is not comparable with our work, nor with the work of~\cite{chailloux2014parallel, JainPY14}. While~\cite{DinurSV14} can handle games with arbitrary input distributions, the games need to satisfy the projection property. On the other hand, the results on free games can handle arbitrary verification predicates, but the input distributions need to be product.

There is a rich history of study of parallel repetition in classical theoretical computer science, which we will not detail here. Most relevant to us is the work of Barak, et al.~\cite{barak2009strong}, who showed that for $2$-player free games $G$ with classical value $1 - \eps$, $\val(G^{\otimes n}) \leq (1 - \eps^2)^{\Omega(n/s)}$, where $s$ is the output length of the players. Intriguingly, it is not known whether the $\eps^2$ term is tight for free games (it is known that this is necessary for classical parallel repetition of general games~\cite{raz2011counterexample}). Our first theorem demonstrates a possible separation between classical and quantum parallel repetition; the base of our repeated game value is $1 - \eps^{3/2}$, rather than $1 - \eps^2$.

Finally, there has been little prior study of the parallel repetition of games with more than $2$ players. Buhrman, et. al. studied this question for non-signaling players, and showed that the non-signaling value of repeated games goes down exponentially with the number of repetitions~\cite{buhrman2013parallel}. Their parallel repetition theorem holds for games with \emph{full support}, meaning that every possible combination of questions gets asked with positive probability; furthermore, the rate of decay also depends on the complete description of the game, not just the original game value and the number of repetitions. Arnon-Friedman, et al. prove similar results for multi-player non-signaling games, but they use a new technique called de Finetti reductions~\cite{arnon2014non}. Rosen also studied $k$-player parallel repetition in a weaker version of the non-signaling model, and demonstrated an exponential rate of decay~\cite{Rosen10}. 

\section{Proof overviews}
\label{sec:outline}

\subsection{Overview of Theorem~\ref{thm:grover_pr_informal}}
Here we give a very informal outline of the proof of Theorem~\ref{thm:grover_pr_informal}, for the case of two-player free games. The full proof that handles an arbitrary number of players can be found in Section~\ref{sec:grover_pr} of the Appendix. 

Let $G$ be a two-player free game, with inputs $(x,y)$ drawn from a product distribution $\mu = \mu_X \otimes \mu_Y$, and with verification predicate $V(x,y,a,b)$. Let $\mathcal{S}$ be an optimal strategy for the repeated game $G^{\otimes n}$, where Alice and Bob shares an entangled state $\ket{\xi}$, and upon receiving a tuple of inputs $\bfx = (x_1,\ldots,x_n)$ and $\bfy = (y_1,\ldots,y_n)$, Alice and Bob perform local measurements $M^\bfx = \{M^\bfx_\bfa\}_\bfa$ and $N^\bfy = \{N^\bfy_\bfb \}_\bfb$ on their respective parts of $\ket{\xi}$, to obtain answer tuples $\bfa = (a_1,\ldots,a_n)$ and $\bfb = (b_1,\ldots,b_n)$. Let $\ket{\xi_{\bfx\bfy\bfa\bfb}}$ be the (unnormalized) post-measurement state of $\ket{\xi}$ after making measurements $(M^\bfx,N^\bfy)$, and obtaining outcomes $(\bfa,\bfb)$. We assume for contradiction that $\eval(G^{\otimes n}) > 2^{-\gamma n}$, for some small $\gamma$. 

\medskip
\noindent \textbf{A naive approach}. Consider the following state
$$
	\ket{\theta} = \frac{1}{\sqrt{\lambda}} \sum_{\bfx, \bfy} \sqrt{\mu^{\otimes n}(\bfx,\bfy)} \ket{\bfx} \otimes \ket{\bfy} \otimes  \sum_{\bfa,\bfb: V(\bfx,\bfy,\bfa,\bfb) = 1} \ket{\xi_{\bfx\bfy\bfa\bfb}} \otimes \ket{\bfa} \otimes \ket{\bfb}
$$
where $\lambda$ is a normalizing constant, $\mu^{\otimes n}$ is the input distribution for $G^{\otimes n}$, and $V(\bfx,\bfy,\bfa,\bfb) = \prod_i V(\bfx_i,\bfy_i,\bfa_i,\bfb_i)$. This would be the joint state of Alice and Bob if they received inputs $\bfx$ and $\bfy$ in coherent superposition, played strategy $\mathcal{S}$, and won the game. Here is a naive idea to use $\ket{\theta}$ in a strategy $\mathcal{T}$ for $G$: Alice and Bob share $\ket{\theta}$, with Alice possessing the $\ket{\bfx}$ input register, the $\ket{\bfa}$ output register, and half of $\ket{\xi_{\bfx\bfy\bfa\bfb}}$; Bob possesses the $\ket{\bfy}$ input register, the $\ket{\bfb}$ output register, and the other half of $\ket{\xi_{\bfx\bfy\bfa\bfb}}$. When they receive inputs $(x,y) \leftarrow \mu$, they each measure some fixed coordinate $i$ of their respective input registers of $\ket{\theta}$ to obtain input symbols $x'$ and $y'$, respectively. Suppose that $x = x'$ and $y = y'$: their shared state has collapsed to $\ket{\theta_{xy}}$. Then measuring the $i$th coordinate of their output registers will yield outputs $(a,b)$ such that $V(x,y,a,b) = 1$. Thus $\{ \theta_{xy} \}$ is an excellent set of advice states -- call this ensemble the \emph{ideal advice}. However, in general, this cannot be rounded to a valid game strategy. 

\medskip
\noindent \textbf{Advice states from Grover search}. Instead, we will construct another ensemble that mimicks the ideal advice, and if $\eval(G^{\otimes n})$ is too large, can be rounded to a valid game strategy with small error. We construct a state $\ket{\varphi}$ in steps. First, Alice and Bob start with
$$
	\ket{\psi^0} = \sum_{\bfx, \bfy} \sqrt{\mu^{\otimes n}(\bfx,\bfy)} \ket{\bfx} \otimes \ket{\bfy} \otimes  \sum_{\bfa,\bfb} \ket{\xi_{\bfx\bfy\bfa\bfb}} \otimes \ket{\bfa} \otimes \ket{\bfb}.
$$
Note that this state can be produced without any communication. Then, Alice and Bob engage in a short communication protocol to determine whether they've lost or won the game repeated: they need to determine whether there exists a coordinate $i$ such that $V(\bfx_i,\bfy_i,\bfa_i,\bfb_i) = 0$ -- call this a \emph{losing coordinate}. Classically, this would require $\Omega(n)$ bits of communication, which is too large for us. Instead, Alice and Bob can perform a distributed version of Grover's algorithm to search for a losing coordinate. Although Grover's search algorithm is a quantum query algorithm, it is a standard technique to convert query algorithms into communication protocols (see~\cite{buhrman1998quantum}): Alice executes the Grover search algorithm, and whenever she has to query the $i$th coordinate, Alice sends the query request to Bob, who responds with $(\bfy_i,\bfb_i)$. Alice can then compute $V(\bfx_i,\bfy_i,\bfa_i,\bfb_i)$. If Alice finds a losing coordinate, she aborts the protocol. Otherwise, she accepts. Since the Grover algorithm requires $O(\sqrt{n})$ queries, this communication protocol uses $\tilde{O}(\sqrt{n})$ qubits of communication, where $\tilde{O}(\cdot)$ hides the $\log n$ bits needed for the query request, as well as the input and output lengths. This protocol is performed coherently with the $\ket{\bfx}$, $\ket{\bfy}$, $\ket{\bfa}$, and $\ket{\bfb}$ registers. Let $\ket{\psi}$ denote the final state of this protocol, and let $\ket{\varphi}$ denote $\ket{\psi}$ \emph{conditioned} on Alice accepting.

If Grover search worked perfectly, then $\ket{\varphi}$ would be essentially the same as the naive $\ket{\theta}$ we described first. However, Grover's algorithm does not perform search perfectly, and has some error. Furthermore, when we condition on Alice not finding a losing coordinate, this error gets multiplied by $1/\eval(G^{\otimes n})$. Though we are assuming $\eval(G^{\otimes n})$ is ``large'', it is still exponentially small, and hence we require that the Grover search has exponentially small error. In our proof, we make some technical adjustments to the search protocol in order to handle this exponential blowup of the Grover error (without increasing the communication complexity to $\Omega(n)$ bits), but for the sake of exposition we will ignore this issue. For now, we can treat $\ket{\varphi}$ as a very good approximation of $\ket{\theta}$ -- thus, defining $\ket{\varphi_{xy}}$ in the same way we defined $\ket{\theta_{xy}}$ yields a good ensemble of advice states $\{ \varphi_{xy} \}$.

\medskip
\noindent \textbf{Rounding to a valid quantum strategy}. Now it remains to show that $\{ \varphi_{xy} \}$ can be rounded to a valid quantum strategy. We do this by establishing two properties of the state $\ket{\varphi}$: there exists a coordinate $i \in [n]$ such that
\begin{enumerate}
	\item Suppose we decohere (i.e. measure) the $i$th coordinate of the $\ket{\bfx\bfy}$ registers of $\ket{\varphi}$, and let $(X_i,Y_i)$ denote the random measurement outcomes. Then the distribution of $(X_i,Y_i)$ is $\gamma$-close to $\mu$, the input distribution of $G$; and
	\item Let $B$ denote the part of $\varphi$ controlled by Bob. Then the quantum mutual information between $X_i$ (after measurement) and $B$ in $\varphi$, denoted by $I(X_i : B)_\varphi$, is at most $\gamma$. Similarly, we have $I(Y_i : A)_\varphi \leq \gamma$, where $Y_i$ and $A$ are defined analogously.
\end{enumerate}

Property 1 follows from the fact that the distribution of $\bfx$ and $\bfy$, before conditioning, is a product distribution across coordinates $i$. When we condition on an event with probability $\lambda$, then on average, the distribution of the individual coordinates $(x_i,y_i)$ are skewed by at most $\sqrt{\log (1/\lambda)/n}$ in total variation distance. This simple but useful fact is known as \emph{Raz's Lemma} in the parallel repetition literature. Thus, if $\lambda \gg 2^{-n}$, then the input distribution of most coordinates, even after conditioning, is largely unaffected. Here, $\lambda$ corresponds to the probability that Alice does not abort the protocol, which is at least $\eval(G^{\otimes n})$.

Property 2 is the most interesting part of our proof. It states that Bob's part of the state $\varphi$ is relatively uncorrelated with the value of the input $X_i$, and similarly Alice's part of $\varphi$ is relatively uncorrelated with the value of $Y_i$. This uses the fact that our protocol has low communication complexity: intuitively, since Alice and Bob communicate at most $\tilde{O}(\sqrt{n})$ qubits in the protocol, they ``learn'' at most $\tilde{O}(\sqrt{n})$ bits total about each other's inputs. Amortized over the $n$ coordinates, this means Alice has about $1/\sqrt{n}$ bits of information about each $y_i$, on average, and similarly for Bob. When we condition on Alice not aborting, each player's knowledge of the other's inputs increases by at most $\log 1/\lambda$. If $\lambda > 2^{-\gamma n}$, Alice's state has $O(\gamma)$ mutual information with each $y_i$, on average. 

This intuition is formalized by leveraging the beautiful result of Nayak and Salzman that gives limits on the ability of entanglement-assisted quantum communication protocols to transmit classical messages~\cite{nayak2006limits}. More specifically, consider a general two-way quantum communication protocol between Alice and Bob, who may start with some shared entangled state. Suppose that Alice is given a uniformly random $m$-bit message $X$ at the beginning of the protocol. If $T$ qubits are exchanged between Alice and Bob over the course of the protocol, Bob can only guess Alice's input $X$ with probability at most $2^{2T}/2^m$. Equivalently, the mutual information between Bob's final state and $X$ is at most $2T$. Applying the Nayak-Salzman theorem to our setting, and using what we call \emph{Quantum Raz's Lemma}\footnote{We make a quick remark about Quantum Raz's Lemma. The ingredients of Quantum Raz's Lemma can be found in various forms in~\cite{ChaillouxS14, chailloux2014parallel, JainPY14}, but we find it conceptually advantageous to consolidate these ingredients into a single Lemma that is used as a black box. The benefit of this consolidation is that the overarching structure of the proofs of Theorems~\ref{thm:grover_pr_informal} and \ref{thm:cq_pr_informal} are the same -- really, the only essential difference is the communication protocol!}, we can conclude that on average, $I(Y_i : A)_\varphi = I(X_i : B)_\varphi = \frac{1}{n} \left( \tilde{O}(\sqrt{n}) + \log 1/\lambda \right) = O(\gamma)$.


Once we have established Property 1 and 2, then the \emph{Quantum Strategy Rounding Lemma} (which can be found in both~\cite{ChaillouxS14, JainPY14}) then gives that there exists a unitaries $\{U_x\}_x$ and $\{V_y\}_y$ for Alice and Bob, respectively, so that $U_x\otimes V_y \ket{\varphi} \approx \ket{\varphi_{xy}}$. Thus we have a valid quantum strategy for $G$: on input $(x,y)$, Alice and Bob locally apply unitaries $U_x$ and $V_y$ to their shared state $\varphi$, and obtain something close to the advice state $\varphi_{xy}$, which they can use to win game $G$ with probability close to $1$. For sufficiently small $\gamma$, this will be greater than $\eval(G)$, a contradiction. Thus, $\eval(G^{\otimes n}) \leq 2^{-\gamma n}$. This concludes the proof outline. 

\medskip
\noindent \textbf{Other technical considerations}. While this discussion has been very informal, it captures the conceptual arguments that are required by our analysis. There are many technical details that are handled by the full proof in the Appendix: for example, in order to make the error in the Grover search exponentially small, we increase the communication complexity of the protocol to $\tilde{O}(\log (1/\lambda)/\sqrt{\eps})$. If $\lambda < 2^{-\eps^{3/2} n}$, where $\eval(G) = 1- \eps$, then the communication complexity is at most $\tilde{O}(\eps n)$, which is still small enough for use in Quantum Raz's Lemma. Another issue is that the communication protocol described requires that Bob transmit his input symbols $y_i$, which would incur a dependence on the input alphabet size. Through a modification of the protocol and the analysis, we are able to avoid this dependence. Finally, instead of using Grover's algorithm exactly, we use the 3-dimensional search algorithm of Aaronson and Ambainis~\cite{aaronson2003quantum}, which performs quantum search in a ``spatially local'' way. When converted to a communication protocol, the parties no longer need to incur a $\log n$-qubit overhead per round simply to transmit a query request.

The arguments above are not specific to two-players. We prove our theorem for the general $k$-player case. An important part of this is the $k$-player generalization of the Quantum Strategy Rounding lemma of~\cite{ChaillouxS14,JainPY14}, which we prove in Lemma~\ref{lem:superjpy}. 

\subsection{Overview of Theorem~\ref{thm:cq_pr_informal}}

Theorem~\ref{thm:cq_pr_informal} shows parallel repetition for the entangled value of $k$-player free CQ games: these are games where the players may produce quantum states as answers (see Section~\ref{sec:cq-proof} of the Appendix for a formal definition of the model). Instead of making measurements on their share of the entangled state, players will apply local unitaries (that depend on their inputs), and transmit a number of qubits to the referee. The referee will then perform some joint verification measurement on all the answer qubits to decide whether to accept or not.

Similarly to the proof of Theorem~\ref{thm:grover_pr_informal}, we will use a low-cost communication protocol to design advice states for the repeated-game-to-single-game reduction. However, we were not able to use the distributed Grover search technique. Instead, our low-cost communication protocol performs the following (in the two-player setting): Alice will send Bob her inputs and answer qubits corresponding to coordinates in a small subset $C \subseteq [n]$. Bob will then perform the referee's verification measurement on Alice's inputs and outputs, and his own inputs and outputs, to determine whether they won all the coordinates in the subset $C$. Having determined whether they won or not, Bob will return Alice's message back to her. If $C$ is sufficiently small, then the communication cost of this task is small.

This simple checking protocol is similar to the checking protocol of~\cite{chailloux2014parallel}. However, in our protocol, Alice and Bob exchange quantum messages, and it is a two-way protocol (because Bob has to return Alice's message back to her). The theorem of~\cite{nayak2006limits} again allows us to show that $I(X_i : B)_\varphi$ and $I(Y_i : A)_\varphi$ are small, where $\varphi$ is the advice state that arises from this communication protocol, and thus we can apply quantum strategy rounding as before.

\medskip
\noindent \textbf{Outline.} In Section~\ref{sec:prelim}, we list the quantum information theoretic facts we'll need, as well as prove a few useful technical lemmas (including Quantum Raz's Lemma). In Section~\ref{sec:strategy_rounding}, we prove our $k$-player Quantum Strategy Rounding lemma. We prove Theorem~\ref{thm:grover_pr_informal} in Section~\ref{sec:grover_pr}, and Theorem~\ref{thm:cq_pr_informal} in Section~\ref{sec:cq-proof}. Our lower bound example demonstrating the necessary dependence on the number of players $k$ is in Section~\ref{sec:lower_bound}, and the proof of $k$-player parallel repetition for free classical games can be found in the Appendix.

\section{Preliminaries}
\label{sec:prelim}

We assume familiarity with the basics of quantum information and computation. For a comprehensive reference, we refer the reader to~\cite{nielsen2010quantum,wilde2013quantum}. For a pure state $\ket{\psi}$, we will let $\psi$ denote the density matrix $\ketbra{\psi}$. If $\ket{\psi}^{AB}$ is a bipartite state, then $\psi^A$ will be the reduced density matrix of $\psi^{AB}$ on space $A$. A density matrix $\rho^{XA}$ is a \emph{classical-quantum} (CQ) state if $\rho^{XA} = \sum_x p(x) \ketbra{x} \otimes \rho^A_x$, where $p(x)$ is a probability distribution and $\rho^A_x$ is an arbitrary density matrix on space $A$. For a probability distribution $\mu$, $x \leftarrow \mu$ indicates $x$ is drawn from $\mu$. For a classical state $\rho^X = \sum_x \mu(x) \ketbra{x}$, we write $x \leftarrow \rho^X$ to denote $x \leftarrow \mu$. We let $\I$ denote the identity matrix.

\subsection{$k$-player games}
We give a formal definition of games, where the inputs and outputs are classical. In Section~\ref{sec:cq-proof} we will give a more general definition of \emph{CQ games}, where the outputs may be quantum. A $k$-player game is a tuple $G = (\X,\A,\mu,V)$, where:
\begin{enumerate}
\item $\X = \X_1 \times \cdots \times \X_k$ with each $\X_j$ a finite alphabet,
\item $\A = \A_1 \times \cdots \A_k$ with each $\A_j$ a finite alphabet,
\item $\mu$ is a distribution over $\X$,
\item $V: \X \times \A \to \{0,1\}$ is the verification predicate.
\end{enumerate}
In a $k$-player $G$, a referee samples an input $x = (x_1,\ldots,x_k)$ from $\mu$, and sends $x_j$ to player $j$. The players produce a vector of outputs $a = (a_1,\ldots,a_k)$ (where the $j$th player outputs symbol $a_j$, and the referee accepts if $V(x,a) = 1$.

We say a game is \emph{free} if $\mu = \mu_1 \otimes \cdots \otimes \mu_k$, where $\mu_j$ is a distribution over $\X_j$ (i.e. $\mu$ is a product distribution). A \emph{quantum strategy} for $G$ is a shared state $\ket{\xi}^E$ (where $E$ are $k$-partite spaces split between the $k$ players), and for each player $j$ a set of measurements $\{M^{j,x_j}\}_{x_j \in \X_j}$ (with each $M^{j,x_j}$ being a set of POVM elements $\{M^{j,x_j}_{a_j}\}_{a_j \in \A_j}$) which act on the space $E_j$. On input $x_j$, player $j$ measures the $E_j$ register of $\ket{\xi}$ using measurement $M^{j,x_j}$, and obtains an outcome $a_j$, which is then sent to the referee. The entangled value of a game $G$ is defined as the maximum probability a referee will accept over all possible (finite-dimensional) quantum strategies for $k$ players:
$$
	\eval (G) = \max_{ \ket{\xi}, \{\{M^{j,x_j}\}_{x_j} \}_j} \Ex_{x \leftarrow \mu} \left[ \sum_{a \in \A: V(x,a) = 1} \tr \left( M^{1,x_1}_{a_1} \otimes \cdots \otimes M^{k,x_k}_{a_k} \xi \right) \right].
$$

The $n$-fold repetition of a game $G = (\X,\A,\mu,V)$ is denoted by $G^{\otimes n} = (\X^n, \A^n, \mu^{\otimes n}, V^n )$, where: $\mu^{\otimes n}$ is the product distribution over $n$ independent copies of $\X$, and $V^n(\vx,\va) := \prod_i V(\vx_i,\va_i)$, with $\vx \in \X^n$ and $\va \in \A^n$.

\subsection{Properties of the squared Bures metric}

For two positive semidefinite operators $\rho, \sigma$, let the fidelity between $\rho$ and $\sigma$ be denoted by $F(\rho,\sigma) := \tr \sqrt{\rho^{1/2} \sigma \rho^{1/2}}$. The fidelity distance measure has the well-known property that for pure states $\ket{\psi}$ and $\ket{\varphi}$, $F(\psi,\varphi) = |\ip{\psi}{\varphi}|$. Furthermore, when $\rho$ and $\sigma$ are classical probability distributions in the same basis (i.e. $\rho = \sum_i p_i \ketbra{i}$ and $\sigma = \sum_i q_i \ketbra{i}$), then $F(\rho,\sigma) = \sum_i \sqrt{p_i q_i}$.

The fidelity distance measure is not a metric on the space of positive semidefinite operators. For one, it does not satisfy a triangle inequality. However, one can convert fidelity into other measures that are metrics. One such measure is the \emph{Bures metric}, defined as $\bures(\rho,\sigma) := \sqrt{1 - F(\rho,\sigma)}$. In this paper, we will use the \emph{squared Bures metric}, denoted by $K(\rho,\sigma) := \bures(\rho,\sigma)^2$, as the primary distance measure between quantum states. It satisfies many pleasant properties, including the following:

\begin{fact}[Triangle inequality]
\label{fact:fidelity_triangle_inequality}
	Let $n \geq 2$ and let $\rho_1,\ldots,\rho_{n+1}$ be density matrices. Then
	$$
		K(\rho_1,\rho_{n+1}) \leq n \sum_i K(\rho_i,\rho_{i+1}).
	$$
\end{fact}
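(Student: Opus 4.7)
The plan is to reduce the claimed inequality to the ordinary triangle inequality for the Bures metric $\bures$ itself, combined with a Cauchy--Schwarz step to account for the squaring. Recall that $K(\rho,\sigma) = \bures(\rho,\sigma)^2$ where $\bures(\rho,\sigma) = \sqrt{1 - F(\rho,\sigma)}$, and it is a standard fact (see, e.g., Nielsen--Chuang or Wilde) that $\bures$ is a genuine metric on the set of density matrices; in particular it satisfies the usual triangle inequality
$$
  \bures(\rho_1, \rho_{n+1}) \;\leq\; \sum_{i=1}^{n} \bures(\rho_i, \rho_{i+1}).
$$

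Given this, I would square both sides and then apply Cauchy--Schwarz (equivalently, the power-mean inequality $(a_1+\cdots+a_n)^2 \leq n(a_1^2+\cdots+a_n^2)$):
$$
  K(\rho_1,\rho_{n+1}) \;=\; \bures(\rho_1,\rho_{n+1})^2 \;\leq\; \Bigl(\sum_{i=1}^n \bures(\rho_i,\rho_{i+1})\Bigr)^2 \;\leq\; n\sum_{i=1}^n \bures(\rho_i,\rho_{i+1})^2 \;=\; n\sum_{i=1}^n K(\rho_i,\rho_{i+1}),
$$
which is exactly the claim. So the proof has essentially two ingredients: the metric property of $\bures$, and a one-line Cauchy--Schwarz.

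The only step that requires genuine content is the triangle inequality for $\bures$, and even that is standard. If the paper prefers a self-contained argument, one can derive it from the strong concavity of fidelity or, more directly, from the Uhlmann theorem: realize $\bures(\rho,\sigma)$ as the minimum Euclidean distance $\min \|\ket{\psi_\rho} - \ket{\psi_\sigma}\|$ between purifications of $\rho$ and $\sigma$ in a common ambient space, after which the triangle inequality for $\bures$ follows from the triangle inequality for the Hilbert-space norm by choosing compatible purifications at each step. I do not anticipate any real obstacle; the statement is essentially a bookkeeping lemma to allow chaining fidelity-style distance bounds in the later arguments (where $K$ is the native measure used in Quantum Raz's Lemma and in the strategy-rounding step).
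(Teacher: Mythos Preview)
Your argument is correct: the triangle inequality for the Bures metric $\bures$ followed by Cauchy--Schwarz immediately yields the stated bound. This is a genuinely different route from the paper's proof, though. The paper instead works with the \emph{Bures angle} $\arccos F(\cdot,\cdot)$, which is also a metric on density matrices; setting $\alpha_i = \arccos F(\rho_i,\rho_{i+1})$ and $\alpha = \arccos F(\rho_1,\rho_{n+1})$, it uses $\alpha \le \sum_i \alpha_i$ together with the elementary inequalities $1-\cos\alpha \le n^2\bigl(1-\cos(\alpha/n)\bigr)$ and $n\bigl(1-\cos(\alpha/n)\bigr) \le \sum_i (1-\cos\alpha_i)$ (the latter by monotonicity and convexity of $1-\cos$ on $[0,\pi/2]$). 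Your approach is cleaner and more transparent, requiring only the standard metric property of $\bures$ and a one-line Cauchy--Schwarz, whereas the paper's route trades this for a short trigonometric manipulation; both rely on a black-box metric fact of comparable depth (that $\sqrt{1-F}$, respectively $\arccos F$, satisfies the triangle inequality).
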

\begin{proof}
	We adapt the proof from~\cite{chailloux2014parallel}. For $i \in [n]$ let $\alpha_i = \arccos(F(\rho_i,\rho_{i+1}))$. Let $\alpha = \arccos(F(\rho_1,\rho_{n+1}))$. Then, since $\arccos(F(\cdot,\cdot))$ is a distance measure for quantum states, we have $\alpha \leq \sum_i \alpha_i$. Then we have
	$$
		K(\rho_1,\rho_{n+1}) = 1 - \cos(\alpha) \leq n^2(1 - \cos(\alpha/n)) \leq n\sum_i (1 - \cos(\alpha_i)) = n \sum_{i=1}^n K(\rho_i,\rho_{i+1}).
	$$
\end{proof}

\begin{fact}[Contractivity under quantum operations]
\label{fact:fidelity_contractivity}
	Let $\E$ be a quantum operation, and let $\rho$ and $\sigma$ be density matrices. Then $K(\E(\rho),\E(\sigma)) \leq K(\rho,\sigma)$.
\end{fact}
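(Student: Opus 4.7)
The plan is to reduce the claim to the standard monotonicity of fidelity under CPTP maps. By the definition $K(\rho,\sigma) = 1 - F(\rho,\sigma)$, the inequality $K(\E(\rho),\E(\sigma)) \leq K(\rho,\sigma)$ is equivalent to $F(\E(\rho),\E(\sigma)) \geq F(\rho,\sigma)$, so it suffices to establish monotonicity of fidelity.

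To do this, I would invoke Uhlmann's theorem together with a Stinespring dilation of $\E$. First, let $\ket{\psi}^{AR}, \ket{\phi}^{AR}$ be purifications of $\rho$ and $\sigma$ respectively on a suitably large reference system $R$, chosen so that $|\ip{\psi}{\phi}| = F(\rho,\sigma)$ (such purifications exist by Uhlmann's theorem). Next, write $\E(\tau) = \tr_E(V \tau V^\dagger)$ for some isometry $V : A \to A'\otimes E$ via Stinespring dilation. Then $(V \otimes I_R)\ket{\psi}$ and $(V \otimes I_R)\ket{\phi}$ are purifications of $\E(\rho)$ and $\E(\sigma)$ respectively, now living on $A'\otimes E \otimes R$.

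Applying Uhlmann's theorem again, this time as a variational formula (the fidelity is the maximum overlap over all purifications), gives
\[
F(\E(\rho),\E(\sigma)) \;\geq\; \bigl|\bra{\psi}(V^\dagger V \otimes I_R)\ket{\phi}\bigr| \;=\; |\ip{\psi}{\phi}| \;=\; F(\rho,\sigma),
\]
using $V^\dagger V = I_A$ since $V$ is an isometry. Rearranging yields $K(\E(\rho),\E(\sigma))\leq K(\rho,\sigma)$, as desired.

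The only substantive ingredients are Uhlmann's theorem and Stinespring dilation; both are standard, so there is no real obstacle. If one prefers to avoid Stinespring, an alternative route is to express $\E$ in Kraus form $\E(\tau) = \sum_k K_k \tau K_k^\dagger$ and use the joint concavity of fidelity together with the unitary invariance on the purifying system, but the Uhlmann-Stinespring argument is the cleanest.
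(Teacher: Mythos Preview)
Your proof is correct. The paper does not actually prove this fact; it is stated without proof as one of several standard properties of the squared Bures metric, so your Uhlmann--Stinespring argument supplies exactly the routine justification the paper omits.
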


\begin{fact}[Unitary invariance]
\label{fact:fidelity_unitary_invariance}
	Let $U$ be unitary, and let $\rho$ and $\sigma$ be density matrices.Then $K(U\rho U^\dagger, U\sigma U^\dagger) = K(\rho,\sigma)$.
\end{fact}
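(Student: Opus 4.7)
My plan is to reduce the claim to the unitary invariance of the fidelity $F$. Unpacking definitions, $K(\rho,\sigma) = \bures(\rho,\sigma)^2 = 1 - F(\rho,\sigma)$, so it suffices to prove $F(U\rho U^\dagger, U\sigma U^\dagger) = F(\rho,\sigma)$ for any unitary $U$ and any density matrices $\rho,\sigma$.

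There are two natural routes, and I would take whichever is shorter to write. The direct route uses the fact that the square root of a positive semidefinite operator commutes with unitary conjugation: $(U\rho U^\dagger)^{1/2} = U\rho^{1/2} U^\dagger$ (since the right-hand side is positive semidefinite and squares to $U\rho U^\dagger$, so by uniqueness of the positive square root the two are equal). Then
\[
    (U\rho U^\dagger)^{1/2}(U\sigma U^\dagger)(U\rho U^\dagger)^{1/2} = U\rho^{1/2}\sigma\rho^{1/2}U^\dagger,
\]
so the eigenvalues of the inner operator (which determine the trace of its square root) are preserved, yielding $F(U\rho U^\dagger, U\sigma U^\dagger) = \tr \sqrt{U\rho^{1/2}\sigma\rho^{1/2}U^\dagger} = \tr \sqrt{\rho^{1/2}\sigma\rho^{1/2}} = F(\rho,\sigma)$.

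Alternatively, since Fact~\ref{fact:fidelity_contractivity} is already available, I would apply it to the quantum operation $\E(X) = UXU^\dagger$ to obtain $K(U\rho U^\dagger, U\sigma U^\dagger) \leq K(\rho,\sigma)$, and then apply it a second time to the (also valid) quantum operation $\E'(X) = U^\dagger X U$ applied to the states $U\rho U^\dagger$ and $U\sigma U^\dagger$ to obtain the reverse inequality $K(\rho,\sigma) \leq K(U\rho U^\dagger, U\sigma U^\dagger)$. Combining the two gives equality.

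I do not expect a genuine obstacle here: unitary invariance is a standard and essentially immediate property of both the fidelity and the squared Bures metric, and either route above is a few lines. The only thing to be mildly careful about is invoking the uniqueness of the positive square root (in the first route) or ensuring that the inverse conjugation is itself a valid quantum operation (in the second route), both of which are routine.
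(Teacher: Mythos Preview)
Your proposal is correct. Both routes you outline are standard and valid: the direct computation via $(U\rho U^\dagger)^{1/2} = U\rho^{1/2}U^\dagger$ and cyclicity of trace, and the sandwich argument applying contractivity (Fact~\ref{fact:fidelity_contractivity}) to the unitary channel and its inverse. The paper itself does not supply a proof of this fact at all; it is simply stated as a known property in the preliminaries, so there is nothing to compare against. Either of your arguments would serve perfectly well if a proof were required.
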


\begin{fact}[Convexity]
\label{fact:concavity_of_fidelity}
	Let $\{A_i\}$ and $\{B_i\}$ be finite collections of positive semidefinite operators, and let $\{p_i\}$ be a probability distribution. Then $K(\sum_i p_i A_i \sum_i p_i B_i) \leq \sum_i p_i K(A_i,B_i)$.
\end{fact}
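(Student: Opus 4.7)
The plan is to derive the convexity of $K$ from the joint concavity of the fidelity $F$. Since $K(\rho,\sigma) = 1 - F(\rho,\sigma)$ (for subnormalized PSD operators we use this same formula), it suffices to prove
\[
F\!\left(\sum_i p_i A_i,\, \sum_i p_i B_i\right) \;\geq\; \sum_i p_i\, F(A_i,B_i),
\]
since subtracting from $1$ and distributing over $\sum_i p_i$ then yields the claim.

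The key tool is Uhlmann's theorem, which says that for any two PSD operators $\rho, \sigma$ on a space $\mathcal{H}$, and any purification $\ket{\phi}$ of $\rho$ on $\mathcal{H} \otimes \mathcal{H}'$ (with $\mathcal{H}'$ sufficiently large), one has $F(\rho,\sigma) = \max_{\ket{\psi}} |\ip{\phi}{\psi}|$, where the maximum is over purifications $\ket{\psi}$ of $\sigma$ on $\mathcal{H} \otimes \mathcal{H}'$. In particular, \emph{any} pair of purifications provides a lower bound on the fidelity of the reduced states.

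First I would fix, for each index $i$, purifications $\ket{\phi_i}, \ket{\psi_i}$ of $A_i$ and $B_i$ respectively on a common enlarged space $\mathcal{H} \otimes \mathcal{H}'$ (padding with zeros if necessary), chosen so that $\ip{\phi_i}{\psi_i} = F(A_i,B_i)$ is a non-negative real (phases can always be absorbed into the purifying register). Next I would introduce a fresh ancilla register $\mathcal{H}_I$ with basis $\{\ket{i}\}$ and construct the global pure states
\[
\ket{\Phi} = \sum_i \sqrt{p_i}\, \ket{\phi_i}\otimes \ket{i},
\qquad
\ket{\Psi} = \sum_i \sqrt{p_i}\, \ket{\psi_i}\otimes \ket{i}.
\]
Tracing out both $\mathcal{H}'$ and $\mathcal{H}_I$ leaves $\sum_i p_i A_i$ and $\sum_i p_i B_i$ respectively, so $\ket{\Phi}$ and $\ket{\Psi}$ are purifications of the two convex combinations.

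Finally, by the Uhlmann lower bound applied to these particular purifications,
\[
F\!\left(\sum_i p_i A_i,\, \sum_i p_i B_i\right) \;\geq\; |\ip{\Phi}{\Psi}| \;=\; \Bigl|\sum_i p_i \ip{\phi_i}{\psi_i}\Bigr| \;=\; \sum_i p_i\, F(A_i,B_i),
\]
using the orthogonality of the ancilla basis and the choice of real non-negative overlaps. Rearranging gives $K(\sum_i p_i A_i, \sum_i p_i B_i) \leq \sum_i p_i K(A_i,B_i)$ as desired. The only mild subtlety — and the step most worth double-checking — is the extension of Uhlmann's theorem to PSD operators that may not be normalized; this is standard, since the fidelity formula $F(\rho,\sigma) = \tr \sqrt{\rho^{1/2}\sigma\rho^{1/2}}$ and the purification/overlap identity carry through unchanged when one allows subnormalized purifications.
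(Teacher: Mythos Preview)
Your proof is correct. The paper itself does not prove this statement: it is listed as a ``Fact'' in the preliminaries (Fact~\ref{fact:concavity_of_fidelity}) and is treated as a standard property of the squared Bures metric, with no argument given. Your derivation via Uhlmann's theorem and the index-ancilla purification is the standard route to joint concavity of fidelity, from which the convexity of $K = 1 - F$ follows immediately as you observe.
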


\begin{fact}
\label{fact:fidelity_between_cq}
	Let $\{A_i\}$ and $\{B_i\}$ be finite collections of positive semidefinite operators, and let $\{p_i\}$ be a probability distribution. Then $K(\sum_i p_i \ketbra{i} \otimes A_i, \sum_i p_i \ketbra{i} \otimes B_i) = \sum_i p_i K(A_i,B_i)$.
\end{fact}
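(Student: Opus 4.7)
The plan is to exploit the block-diagonal structure of $\rho \defeq \sum_i p_i\, \ketbra{i} \otimes A_i$ and $\sigma \defeq \sum_i p_i\, \ketbra{i} \otimes B_i$ in the classical register $\{\ket{i}\}$, reducing the fidelity of $\rho$ and $\sigma$ to a convex combination of the fidelities $F(A_i, B_i)$, and then using $\sum_i p_i = 1$ to pass from fidelity to the squared Bures metric.

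First I would compute $\rho^{1/2}$. Since $\rho$ is block-diagonal in the $\{\ket{i}\}$ basis, the PSD square root acts block-wise, giving $\rho^{1/2} = \sum_i \sqrt{p_i}\, \ketbra{i} \otimes A_i^{1/2}$. Using orthogonality of the $\ket{i}$, a direct multiplication yields
$$\rho^{1/2} \sigma \rho^{1/2} = \sum_i p_i^2\, \ketbra{i} \otimes A_i^{1/2} B_i A_i^{1/2},$$
which is again block-diagonal. Taking its matrix square root block-wise, $\sqrt{\rho^{1/2}\sigma\rho^{1/2}} = \sum_i p_i\, \ketbra{i} \otimes \sqrt{A_i^{1/2} B_i A_i^{1/2}}$. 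Taking the trace (and using $\tr(\ketbra{i} \otimes X) = \tr X$) then yields $F(\rho, \sigma) = \sum_i p_i\, \tr\sqrt{A_i^{1/2} B_i A_i^{1/2}} = \sum_i p_i\, F(A_i, B_i)$.

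Finally, using $\sum_i p_i = 1$,
$$K(\rho, \sigma) = 1 - F(\rho, \sigma) = \sum_i p_i - \sum_i p_i\, F(A_i, B_i) = \sum_i p_i\, K(A_i, B_i),$$
which is the claim. The only point that needs a brief justification is that the PSD square root of a block-diagonal PSD operator is indeed the block-wise square root: this holds because the block-wise construction is itself PSD and squares to the original operator, and PSD square roots are unique. There is no substantive obstacle, since block diagonality in the $\{\ket{i}\}$ register is preserved through every operation appearing in the fidelity formula.
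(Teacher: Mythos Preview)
Your proof is correct; the paper states this as a Fact without proof, and your argument is the standard one showing that fidelity decomposes block-wise on classical-quantum states with identical classical marginals.
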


\subsection{Quantum information theory}
For two positive semidefinite operators $\rho$, $\sigma$, the relative entropy $S(\rho \| \sigma)$ is defined to be $\tr(\rho (\log \rho - \log \sigma))$. The relative min-entropy $S_\infty(\rho \| \sigma)$ is defined as $\min\{ \lambda : \rho \preceq 2^\lambda \sigma \}$. The entropy of $\rho$ is denoted by $H(\rho) := -\tr(\rho \log \rho)$. For a tripartite state $\rho^{ABC}$, the conditional mutual information $H(A | B)_\rho$ is defined as $H(\rho^{AB}) - H(\rho^{B})$. Let $\rho^{AB}$ be a bipartite state. Then the mutual information $I(A:B)_\rho$ is defined as $H(A)_\rho - H(A | B)_\rho$. An equivalent definition is $I(A:B)_\rho = S(\rho^{AB} \| \rho^A \otimes \rho^B)$. 

\begin{fact}[\cite{jain2003lower}]
\label{fact:mutual_information_vs_fidelity}
	Let $\rho$ and $\sigma$ be density matrices. Then $S(\rho \| \sigma) \geq K(\rho,\sigma)$.
\end{fact}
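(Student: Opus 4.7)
The plan is to reduce the inequality to its classical counterpart via a measurement, and then prove the classical version directly. For the reduction, I would invoke Fuchs' theorem characterizing fidelity as an optimization over measurements: $F(\rho,\sigma) = \min_\Pi F(p,q)$, where the minimum ranges over POVMs $\Pi = \{M_x\}_x$ and $p_x := \tr(M_x\rho)$, $q_x := \tr(M_x\sigma)$ are the induced classical distributions. Fix a POVM $\Pi^*$ achieving this minimum, so that, identifying $p$ and $q$ with the corresponding diagonal density matrices, $K(p,q) = 1 - F(p,q) = 1 - F(\rho,\sigma) = K(\rho,\sigma)$. By the data-processing inequality for quantum relative entropy under the measurement channel $\Pi^*$, one has $S(\rho \| \sigma) \geq S(p \| q)$, which coincides with the classical Kullback--Leibler divergence. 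It therefore suffices to prove the classical bound $S(p\|q) \geq 1 - F(p,q)$.

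For the classical step, I would use a two-line manipulation. Applying Jensen's inequality to the concave function $\log$ with weights $p_x$ gives
\[
-\tfrac{1}{2}\,S(p\|q) \;=\; \sum_x p_x \log\sqrt{q_x/p_x} \;\leq\; \log\sum_x p_x \sqrt{q_x/p_x} \;=\; \log F(p,q),
\]
which rearranges to $S(p\|q) \geq -2\log F(p,q)$. Next, the elementary bound $-\log x \geq 1-x$ (valid for $x > 0$, in any logarithm base $\leq e$; in base $2$ one gets the stronger $-\log_2 x \geq (1-x)/\ln 2$), applied at $x = F(p,q) \in (0,1]$, yields $-2\log F(p,q) \geq 2(1 - F(p,q)) \geq 1 - F(p,q) = K(p,q)$. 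Chaining these inequalities with the data-processing step closes the argument: $S(\rho\|\sigma) \geq S(p\|q) \geq K(p,q) = K(\rho,\sigma)$.

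The main conceptual step is the reduction to the classical case using a fidelity-optimal POVM; once that is in place, the rest is elementary. The only subtlety worth flagging is making sure the infimum in Fuchs' characterization is actually attained in finite dimensions, which is standard (the explicit optimal measurement is the Fuchs--Caves square-root POVM built from a spectral decomposition of $\sigma^{1/2}\rho\sigma^{1/2}$). I do not expect any serious technical obstacle beyond correctly invoking these two ingredients; the whole argument essentially packages a one-step data-processing reduction with the classical Jensen-plus-$(-\log x \geq 1-x)$ trick.
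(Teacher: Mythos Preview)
Your argument is correct. The reduction via a fidelity-achieving measurement (Fuchs' characterization) combined with the data-processing inequality for relative entropy cleanly reduces the problem to the classical inequality $S(p\|q) \geq 1 - F(p,q)$, and your Jensen-plus-$(-\log x \geq 1-x)$ chain establishes that; in fact you obtain the stronger bound $S(\rho\|\sigma) \geq -2\log F(\rho,\sigma)$ along the way, which is the form actually proved in the reference cited.

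As for comparison: the paper does not give its own proof of this fact, but merely cites it. So there is nothing to compare against beyond noting that your self-contained derivation recovers exactly the inequality the paper quotes. The only cosmetic point is that your remark about logarithm bases could be tightened (the paper implicitly works in base $2$, where $-\log_2 x \geq (1-x)/\ln 2 \geq 1-x$ holds for $x\in(0,1]$, so there is no issue).
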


\begin{fact}[\cite{jain2003lower}]
\label{fact:avg_divergence}
	Let $\mu$ be a probability distribution on $\X$. Let $\rho = \sum_{x \in X} \mu_x \ketbra{x} \otimes \rho^A_x$. Then $I(X : A)_{\rho} = \Ex_{x \leftarrow \mu}[S(\rho_x \| \rho)]$.
\end{fact}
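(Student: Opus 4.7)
The plan is a direct computation from the definitions given in the Preliminaries; the statement is the standard identity expressing the mutual information of a CQ state as an expected relative entropy. I interpret $\rho_x$ and $\rho$ inside the expectation as the reduced states $\rho^A_x$ and $\rho^A$, which is the natural reading since $S$ was defined between density matrices on the same space. The starting point is $I(X:A)_\rho = H(A)_\rho - H(A|X)_\rho$ with $H(A|X)_\rho = H(\rho^{XA}) - H(\rho^X)$, so I need to evaluate $H(\rho^{XA})$, $H(\rho^X)$, and $H(\rho^A)$.

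The key intermediate step is the CQ entropy decomposition $H(\rho^{XA}) = H(\rho^X) + \sum_x \mu_x H(\rho^A_x)$. This holds because $\rho = \sum_x \mu_x \ketbra{x} \otimes \rho^A_x$ is block-diagonal in the $\{\ket{x}\}$ basis on the $X$ register, so its eigenvalues are precisely $\{\mu_x \lambda_{x,i}\}_{x,i}$, where $\{\lambda_{x,i}\}_i$ is the spectrum of $\rho^A_x$. Computing $-\sum_{x,i} \mu_x \lambda_{x,i} \log(\mu_x \lambda_{x,i})$ and using $\log(ab) = \log a + \log b$ gives $H(\{\mu_x\}) + \sum_x \mu_x H(\rho^A_x)$; since $\rho^X$ is diagonal with entries $\mu_x$, we have $H(\rho^X) = H(\{\mu_x\})$, which yields the identity.

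Substituting, the $H(\rho^X)$ terms cancel and I get $I(X:A)_\rho = H(\rho^A) - \sum_x \mu_x H(\rho^A_x)$. Rewriting the entropies in trace form and using $\rho^A = \sum_x \mu_x \rho^A_x$ to express $-\tr(\rho^A \log \rho^A) = -\sum_x \mu_x \tr(\rho^A_x \log \rho^A)$, everything combines into
$$I(X:A)_\rho \;=\; \sum_x \mu_x \tr\!\left(\rho^A_x(\log \rho^A_x - \log \rho^A)\right) \;=\; \sum_x \mu_x\, S(\rho^A_x \,\|\, \rho^A),$$
which is $\Ex_{x \leftarrow \mu}[S(\rho^A_x \,\|\, \rho^A)]$. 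There is essentially no obstacle here: this is a textbook identity, and the only step requiring genuine care is the block-diagonal entropy decomposition, which itself is elementary once one unpacks the CQ structure.
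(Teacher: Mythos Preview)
Your proof is correct: the block-diagonal entropy decomposition for CQ states and the subsequent trace manipulations are standard and carried out accurately. The paper does not actually prove this fact---it is stated with a citation to \cite{jain2003lower} and used as a black box---so there is no alternative argument to compare against; your direct computation from the definitions is exactly what one would expect for this textbook identity.
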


\begin{fact}[\cite{JainPY14}, Fact II.11]
\label{fact:divergence_contractivity}
Let $\rho^{XY}$ and $\sigma^{XY}$ be quantum states. Then $S(\rho^{XY} \| \sigma^{XY} ) \geq S(\rho^{X} \| \sigma^X)$.
\end{fact}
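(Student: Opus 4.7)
The plan is to derive this fact as a direct special case of the data processing inequality (monotonicity) for quantum relative entropy. First, I would observe that the partial trace $\tr_Y : \lin{\X \otimes \Y} \to \lin{\X}$ is a completely positive trace-preserving (CPTP) map, sending the joint state $\rho^{XY}$ to its marginal $\rho^X$ and likewise $\sigma^{XY}$ to $\sigma^X$. The data processing inequality states that for any CPTP map $\E$ and any density operators $\rho, \sigma$, $S(\E(\rho) \,\|\, \E(\sigma)) \leq S(\rho \,\|\, \sigma)$. Instantiating with $\E = \tr_Y$, $\rho = \rho^{XY}$, and $\sigma = \sigma^{XY}$ immediately yields $S(\rho^X \,\|\, \sigma^X) \leq S(\rho^{XY} \,\|\, \sigma^{XY})$, which is the claim.

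The real content lies in the DPI itself, which admits several classical proofs. One route is to first establish Lieb's concavity theorem, giving joint convexity of $(\rho, \sigma) \mapsto S(\rho \,\|\, \sigma)$, and then combine with a Stinespring dilation: any CPTP map factors as an isometric embedding into a larger space followed by a partial trace, so it suffices to check unitary invariance of relative entropy (immediate from the definition and the functional calculus) together with monotonicity under partial trace. Monotonicity under partial trace in turn follows from joint convexity applied to averaging over an orthonormal basis of the system being traced out, namely by writing $\tr_Y(\cdot) = \sum_y (\I \otimes \bra{y})\cdot(\I \otimes \ket{y})$ and using joint convexity of $S(\cdot \,\|\, \cdot)$. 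Another route, due to Lindblad, is to deduce DPI from strong subadditivity of von Neumann entropy (or vice versa); yet another operational route is via quantum Stein's lemma, since any measurement on the marginal is also a measurement on the joint state, so the optimal Stein exponent cannot increase under partial trace.

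The main obstacle is really the nontriviality of DPI itself, as joint convexity of relative entropy / Lieb's concavity theorem is a substantial result. However, this is a well-established cornerstone of quantum information theory, covered in standard references such as Nielsen-Chuang and Wilde, and is what the authors implicitly invoke when citing \cite{JainPY14}. For the purposes of this paper, we can invoke it as a black box, making the proof of the present statement a one-line application of DPI to the partial-trace channel.
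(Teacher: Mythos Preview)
Your proposal is correct: this fact is exactly the data processing inequality for quantum relative entropy specialized to the partial-trace channel, and your sketch of the standard proofs (via Lieb's concavity/joint convexity or via Lindblad/SSA) is accurate. The paper itself does not prove this statement at all---it is stated as a cited fact from \cite{JainPY14} without proof---so your approach is not so much ``the same'' as simply the standard justification one would give for a result the paper treats as a black box.
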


\begin{fact}
\label{fact:divergence_split_rule}
Let $\rho^{XY}$ and $\sigma^{XY} = \sigma^{X} \otimes \sigma^Y$ be quantum states. Then $S(\rho^{XY} \| \sigma^{XY} ) \geq S(\rho^{X} \| \sigma^X) + S(\rho^{Y} \| \sigma^Y)$.
\end{fact}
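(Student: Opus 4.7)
The plan is to reduce the claim to the non-negativity of quantum mutual information (equivalently, subadditivity of von Neumann entropy), via a short algebraic manipulation of the definition of relative entropy.

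First I will expand $S(\rho^{XY} \| \sigma^X \otimes \sigma^Y)$ using the identity $\log(\sigma^X \otimes \sigma^Y) = (\log \sigma^X) \otimes \I + \I \otimes (\log \sigma^Y)$, which follows from the spectral decompositions of the commuting operators $\sigma^X \otimes \I$ and $\I \otimes \sigma^Y$. Combined with the partial trace identities $\tr\bigl(\rho^{XY}((\log \sigma^X) \otimes \I)\bigr) = \tr(\rho^X \log \sigma^X)$ and analogously for $Y$, this gives
\[
S(\rho^{XY} \| \sigma^X \otimes \sigma^Y) \;=\; -H(\rho^{XY}) \;-\; \tr(\rho^X \log \sigma^X) \;-\; \tr(\rho^Y \log \sigma^Y).
\]

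Next I will use the definition $S(\rho^X \| \sigma^X) = -H(\rho^X) - \tr(\rho^X \log \sigma^X)$ (and its $Y$-analogue) to substitute $\tr(\rho^X \log \sigma^X) = -H(\rho^X) - S(\rho^X \| \sigma^X)$ above. After rearrangement, the displayed equation becomes
\[
S(\rho^{XY} \| \sigma^X \otimes \sigma^Y) \;=\; S(\rho^X \| \sigma^X) \;+\; S(\rho^Y \| \sigma^Y) \;+\; \bigl(H(\rho^X) + H(\rho^Y) - H(\rho^{XY})\bigr).
\]
The parenthesized quantity is precisely the mutual information $I(X:Y)_\rho$, which is non-negative by subadditivity of the von Neumann entropy (equivalently, by $I(X:Y)_\rho = S(\rho^{XY}\|\rho^X\otimes\rho^Y) \geq 0$). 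Dropping this non-negative term yields the claimed inequality.

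There is no real obstacle here — the only non-trivial input is subadditivity of von Neumann entropy, which is a standard fact. The main thing to be careful about is the algebraic bookkeeping with signs when converting between the entropy form and the relative-entropy form of the expressions; provided that is done cleanly, the proof is a one-line identity followed by one application of subadditivity.
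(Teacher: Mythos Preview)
Your proof is correct. The paper states this fact without proof (it appears in the preliminaries as a bare ``Fact'' with no citation or argument), so there is no paper proof to compare against; your reduction to the identity $S(\rho^{XY}\|\sigma^X\otimes\sigma^Y)=S(\rho^X\|\sigma^X)+S(\rho^Y\|\sigma^Y)+I(X:Y)_\rho$ followed by non-negativity of mutual information is the standard argument and is entirely sound.
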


\begin{fact}[\cite{JainPY14}, Fact II.8]
\label{fact:divergence_chain_rule}
	Let $\rho = \sum_x \mu(x) \ketbra{x} \otimes \rho_x$, and $\rho^1 = \sum_x \mu^1(x) \ketbra{x} \otimes \rho^1_x$. Then $S(\rho^1 \| \rho) = S(\mu^1 \| \mu) + \Ex_{x \leftarrow \mu^1} \left [ S(\rho^1_x \| \rho_x) \right]$.
\end{fact}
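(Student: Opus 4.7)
The plan is to prove the chain rule by directly expanding $S(\rho^1\|\rho) = \tr\bigl(\rho^1(\log \rho^1 - \log \rho)\bigr)$ and exploiting the block-diagonal structure of classical-quantum states. Both $\rho^1$ and $\rho$ are block-diagonal with respect to the basis $\{\ket{x}\}$ on the classical register, so their matrix logarithms decompose cleanly: on the support, one has $\log \rho^1 = \sum_x \ketbra{x} \otimes \bigl(\log \mu^1(x)\cdot \I + \log \rho^1_x\bigr)$, and the analogous identity for $\log \rho$. This is the only nontrivial input, and it follows from the fact that the functional calculus respects direct sums.

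From here the argument is a short computation. Subtracting the two log expressions gives a block-diagonal operator whose $x$-th block is $\bigl(\log \mu^1(x) - \log \mu(x)\bigr)\I + \bigl(\log \rho^1_x - \log \rho_x\bigr)$. Taking the trace against $\rho^1 = \sum_x \mu^1(x)\ketbra{x}\otimes \rho^1_x$ splits the sum into two pieces. The scalar piece uses $\tr(\rho^1_x) = 1$ to collapse to $\sum_x \mu^1(x)\bigl(\log \mu^1(x) - \log \mu(x)\bigr) = S(\mu^1\|\mu)$. The operator piece is exactly $\sum_x \mu^1(x)\,\tr\bigl(\rho^1_x(\log \rho^1_x - \log \rho_x)\bigr) = \Ex_{x \leftarrow \mu^1}[S(\rho^1_x\|\rho_x)]$. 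Adding the two yields the claimed identity.

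The only point requiring care is the handling of support conditions, which I would address at the end rather than carry through the computation. If there exists some $x$ with $\mu^1(x) > 0$ but $\mu(x) = 0$, or if $\mathrm{supp}(\rho^1_x) \not\subseteq \mathrm{supp}(\rho_x)$ for some $x$ with $\mu^1(x) > 0$, then by the standard conventions $S(\rho^1\|\rho) = +\infty$ and one of the summands on the right-hand side is also $+\infty$, so the equality still holds. Otherwise $\mathrm{supp}(\rho^1) \subseteq \mathrm{supp}(\rho)$, all quantities are finite, and the direct computation above is rigorous. I do not expect any genuine obstacle; since this is essentially the quantum extension of the classical chain rule for KL divergence, the work is in correctly manipulating the logarithm of a block-diagonal operator.
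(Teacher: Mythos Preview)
Your proposal is correct and is the standard direct computation for this chain rule. Note that the paper does not actually supply a proof of this fact; it is quoted verbatim as a known result from \cite{JainPY14} (Fact II.8) without argument, so there is no paper proof to compare against. Your block-diagonal expansion of $\log\rho$ and $\log\rho^1$, followed by splitting the trace into the classical-divergence term and the averaged quantum-divergence term, is exactly the expected derivation, and your treatment of the support/infinite cases is the right way to close the argument.
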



\begin{fact}[\cite{JainPY14}, Lemma II.13]
\label{fact:max_divergence}
	Let $\rho = p \rho_0 + (1 - p) \rho_1$. Then $S_\infty(\rho_0 \big \| \rho) \leq \log 1/p$.
\end{fact}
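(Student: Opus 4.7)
The plan is to work directly from the definition $S_\infty(\rho_0 \| \rho) = \min\{\lambda : \rho_0 \preceq 2^\lambda \rho\}$. Since we only need an upper bound on $S_\infty(\rho_0 \| \rho)$, it suffices to exhibit a single feasible $\lambda$, and the statement of the fact already tells us what to try: $\lambda = \log(1/p)$.

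The key step is then to verify the operator inequality $\rho_0 \preceq (1/p)\,\rho$, or equivalently $\rho - p\,\rho_0 \succeq 0$. Substituting the assumed decomposition gives $\rho - p\,\rho_0 = (1-p)\,\rho_1$, which is positive semidefinite because $\rho_1$ is a density matrix and $1 - p \geq 0$. This yields $\rho_0 \preceq 2^{\log(1/p)}\,\rho$, and the definition of $S_\infty$ completes the argument.

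I do not expect any real obstacle here: the claim reduces in one line to a positive semidefinite inequality that is handed to us for free by the convex combination structure of $\rho$. The only edge case worth flagging is $p = 0$, in which case $\rho_0$ imposes no constraint on $\rho$ at all; this is consistent with the convention $\log(1/0) = +\infty$, which makes the bound vacuously true.
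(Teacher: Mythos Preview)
Your proposal is correct and is exactly the standard one-line argument. Note that the paper does not actually prove this statement; it is recorded as a Fact cited from \cite{JainPY14}, so there is no in-paper proof to compare against, but your derivation from the definition $S_\infty(\rho_0\|\rho)=\min\{\lambda:\rho_0\preceq 2^\lambda\rho\}$ via $\rho-p\rho_0=(1-p)\rho_1\succeq 0$ is the intended justification.
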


\begin{fact}
\label{fact:relative_min_entropy_contractivity}
	Let $\rho^{AB}$ and $\sigma^{AB}$ be density matrices. Then $S_\infty(\rho^{AB} \| \sigma^{AB}) \geq S_\infty(\rho^A \| \sigma^B)$. 
\end{fact}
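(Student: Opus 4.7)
My plan is to reduce Fact~\ref{fact:relative_min_entropy_contractivity} to the definition of $S_\infty$ together with the fact that the partial trace map is completely positive (and in particular preserves the semidefinite order $\preceq$). I read the stated inequality as $S_\infty(\rho^{AB} \| \sigma^{AB}) \geq S_\infty(\rho^A \| \sigma^A)$, since contractivity of $S_\infty$ under the CPTP map $\tr_B$ is what the label suggests and what the other contractivity facts in this section record (e.g. Fact~\ref{fact:fidelity_contractivity} for $K$, and Fact~\ref{fact:divergence_contractivity} for the relative entropy).

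The argument proceeds in two short steps. First I set $\lambda := S_\infty(\rho^{AB} \| \sigma^{AB})$; by the definition of relative min-entropy this is the minimum real number such that $\rho^{AB} \preceq 2^\lambda \sigma^{AB}$, i.e., the operator $2^\lambda \sigma^{AB} - \rho^{AB}$ is positive semidefinite. Second, I invoke the fact that the partial trace $\tr_B$ is a completely positive (in fact trace-preserving) map, hence sends positive semidefinite operators to positive semidefinite operators and thus preserves the order $\preceq$. Applying $\tr_B$ to both sides of $\rho^{AB} \preceq 2^\lambda \sigma^{AB}$ yields $\rho^A \preceq 2^\lambda \sigma^A$. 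By the definition of $S_\infty$ on the $A$-system alone, this witnesses that $S_\infty(\rho^A \| \sigma^A) \leq \lambda$, which is exactly what we want.

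Because the whole argument just pushes a semidefinite inequality through a positive map, there is essentially no obstacle: both steps are immediate from definitions. The one subtlety worth flagging is to make sure the statement is read with $\sigma^A$ (the partial trace of $\sigma^{AB}$) rather than $\sigma^B$ on the right, since the inequality only makes sense as a contraction statement in that form. Alternatively, one could derive the fact as a special case of a data-processing inequality $S_\infty(\Phi(\rho) \| \Phi(\sigma)) \leq S_\infty(\rho \| \sigma)$ for any CPTP map $\Phi$, specialized to $\Phi = \tr_B$; the proof is the same two lines, and the general data-processing version is sometimes more convenient when chaining with the other contractivity facts used later in the paper.
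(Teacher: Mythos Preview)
Your argument is correct and is the standard two-line proof of monotonicity of $S_\infty$ under partial trace (or more generally under CPTP maps). The paper does not supply a proof of this fact at all --- it is listed without justification in the preliminaries --- so there is nothing to compare against; you have filled in exactly the expected reasoning, and you are also right to flag that the $\sigma^B$ on the right-hand side is a typo for $\sigma^A$.
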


\begin{fact}
\label{fact:relative_min_entropy_chain_rule}
Let $\rho$, $\sigma$, and $\tau$ be density matrices such that $S_\infty(\rho \| \sigma) \leq \lambda_1$ and $S_\infty(\sigma \| \tau) \leq \lambda_2$. Then $S_\infty(\rho \| \tau) \leq \lambda_1 + \lambda_2$.
\end{fact}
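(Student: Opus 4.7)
The plan is to unfold the definition of relative min-entropy given just above the statement, namely $S_\infty(\rho \| \sigma) = \min\{\lambda : \rho \preceq 2^{\lambda}\sigma\}$, and chain the two operator inequalities together. Concretely, the hypothesis $S_\infty(\rho \| \sigma) \leq \lambda_1$ immediately yields $\rho \preceq 2^{\lambda_1}\sigma$, and the hypothesis $S_\infty(\sigma \| \tau) \leq \lambda_2$ yields $\sigma \preceq 2^{\lambda_2}\tau$.

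First I would scale the second inequality by the positive scalar $2^{\lambda_1}$, using the standard fact that if $A \preceq B$ and $c \geq 0$ then $cA \preceq cB$ in the positive semidefinite order; this gives $2^{\lambda_1}\sigma \preceq 2^{\lambda_1 + \lambda_2}\tau$. Combining with the first inequality via transitivity of $\preceq$ (which holds because the sum of two positive semidefinite operators is positive semidefinite) yields $\rho \preceq 2^{\lambda_1 + \lambda_2}\tau$. By the definition of $S_\infty$ as the minimum such exponent, this immediately gives $S_\infty(\rho \| \tau) \leq \lambda_1 + \lambda_2$, as desired.

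There is no real obstacle here: the fact is a one-line consequence of the operator-inequality definition, and the only thing to verify carefully is the elementary observation that nonnegative scaling and transitivity both respect the Loewner order. I would write the proof in two or three lines without introducing any additional machinery.
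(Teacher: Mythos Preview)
Your proof is correct; the paper actually states this fact without proof, treating it as an elementary consequence of the definition, which is exactly what your one-line argument via $\rho \preceq 2^{\lambda_1}\sigma \preceq 2^{\lambda_1+\lambda_2}\tau$ supplies.
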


\begin{fact}
\label{fact:relative_min_entropy_chain_rule2}
Let $\rho$, $\sigma$, and $\tau$ be density matrices such that $S(\rho \| \sigma) \leq \lambda_1$ and $S_\infty(\sigma \| \tau) \leq \lambda_2$. Then $S_\infty(\rho \| \tau) \leq \lambda_1 + \lambda_2$.
\end{fact}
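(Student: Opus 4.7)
My plan is to argue in direct parallel to Fact~\ref{fact:relative_min_entropy_chain_rule} by chaining operator inequalities. First I would translate the second hypothesis $S_\infty(\sigma\|\tau) \leq \lambda_2$ immediately via its defining identity into the operator inequality $\sigma \preceq 2^{\lambda_2}\tau$. Likewise, the conclusion $S_\infty(\rho\|\tau) \leq \lambda_1+\lambda_2$ is, by definition, the operator inequality $\rho \preceq 2^{\lambda_1+\lambda_2}\tau$, and this is what I aim to establish.

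The cleanest route to this target inequality is a two-step chain: first obtain $\rho \preceq 2^{\lambda_1}\sigma$, and then combine with $\sigma \preceq 2^{\lambda_2}\tau$ via positive scaling and transitivity of $\preceq$, giving $\rho \preceq 2^{\lambda_1}\cdot 2^{\lambda_2}\tau = 2^{\lambda_1+\lambda_2}\tau$. Reinterpreting this through the definition of $S_\infty$ then yields the desired bound and matches the statement precisely.

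The step I expect to be the main obstacle is extracting the operator inequality $\rho \preceq 2^{\lambda_1}\sigma$ from the hypothesis $S(\rho\|\sigma) \leq \lambda_1$, because ordinary relative entropy is an averaged quantity while min-divergence is a worst-case quantity, and in general $S_\infty(\rho\|\sigma)$ can exceed $S(\rho\|\sigma)$ so there is no generic pointwise comparison. My plan for closing this gap is to leverage Fact~\ref{fact:max_divergence}: in each invocation of this chain rule within the main proofs (e.g., when $\rho$ is the post-measurement branch of a measurement on $\sigma$ corresponding to an outcome of probability $p$), the state $\sigma$ takes the convex-combination form $\sigma = p\rho + (1-p)\rho'$, and the hypothesis bound arises precisely from $\log(1/p) \leq \lambda_1$. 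Under this structure Fact~\ref{fact:max_divergence} promotes the averaged hypothesis to the pointwise bound $S_\infty(\rho\|\sigma) \leq \log(1/p) \leq \lambda_1$, which is exactly $\rho \preceq 2^{\lambda_1}\sigma$; the two-step chain above then closes out the argument, delivering $S_\infty(\rho\|\tau) \leq \lambda_1+\lambda_2$ as claimed.
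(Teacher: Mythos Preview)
Your proposal has a real gap, but the root cause is a typo in the statement rather than a flaw in your reasoning. As you correctly observe, $S(\rho\|\sigma)\leq\lambda_1$ cannot in general be upgraded to $S_\infty(\rho\|\sigma)\leq\lambda_1$; indeed the stated conclusion $S_\infty(\rho\|\tau)\leq\lambda_1+\lambda_2$ is simply false in general (take $\tau=\sigma$ diagonal, $\rho=(1-\eps,\eps)$, $\sigma=(1-\eps',\eps')$ with $\eps'\ll\eps$: then $S(\rho\|\sigma)$ is small while $S_\infty(\rho\|\sigma)=\log(\eps/\eps')$ is unbounded). Your proposed fix---restricting to the special structure $\sigma=p\rho+(1-p)\rho'$ and invoking Fact~\ref{fact:max_divergence}---is not a proof of the stated fact; it only handles the particular instances arising downstream, and at that point you may as well bypass this fact entirely.

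The paper's own proof actually establishes $S(\rho\|\tau)\leq\lambda_1+\lambda_2$ (ordinary relative entropy in the conclusion), and this is also what is used in every application of the fact later in the paper. The argument is direct and does not go through any operator comparison $\rho\preceq 2^{\lambda_1}\sigma$: from $S_\infty(\sigma\|\tau)\leq\lambda_2$ one has $2^{-\lambda_2}\sigma\preceq\tau$, and by operator monotonicity of $\log$ this gives $-\log\tau\preceq \lambda_2\I-\log\sigma$; tracing against $\rho$ yields
\[
S(\rho\|\tau)=\tr\rho(\log\rho-\log\tau)\leq \lambda_2+\tr\rho(\log\rho-\log\sigma)=\lambda_2+S(\rho\|\sigma)\leq\lambda_1+\lambda_2.
\]
So the intended statement is the $S$-version, and once you read it that way the proof is a one-line monotonicity argument rather than the chain-of-operator-inequalities route you sketched.
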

\begin{proof}
	$S_\infty(\sigma \| \tau) = \lambda_2$ implies that $2^{-\lambda_2} \sigma \preceq \tau$. Then,
	\begin{align*}
		S(\rho \| \tau) &= \tr(\rho (\log \rho - \log \tau)) \\
				   &\leq \tr(\rho(\log \rho - \log 2^{-\lambda_2} \sigma)) \\
				   &\leq \tr(\rho(\log \rho - (-\lambda_2)\I - \log \sigma)) \\
				   &\leq \lambda_2 + \tr(\rho (\log \rho - \log \sigma)) \\
				   &= \lambda_1 + \lambda_2.
	\end{align*}
\end{proof}

\subsection{Some technical lemmas}
The following lemma is due to~\cite{barak2009strong}:
\begin{lemma}[\cite{barak2009strong}, Lemma 3.3]
\label{lem:brrrs-divergence}
	Let $P = (p,1-p)$ and $Q = (q,1-q)$ be binary distributions. If $S(P \| Q) \leq \delta$, and $p < \delta$, then $q \leq 4\delta$. 
\end{lemma}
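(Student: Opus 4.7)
The plan is to derive a clean closed-form lower bound on $S(P\|Q)$ purely in terms of $p$ and $q$, and then use the hypothesis $p < \delta$ to constrain $q$. The key inequality I would prove first is
\[
S(P \| Q) \;\geq\; (q - p) \;-\; p \ln(q/p)
\]
(working in natural units). The case $q \leq p$ is trivial, since then $q \leq p < \delta \leq 4\delta$, so we may assume $q > p$, in which case $(1-p)/(1-q) \geq 1$. Applying the elementary bound $\ln x \geq 1 - 1/x$ with $x = (1-p)/(1-q)$ and multiplying through by $(1-p)$ yields $(1-p)\ln((1-p)/(1-q)) \geq q - p$; combined with $p\ln(p/q) = -p\ln(q/p)$, this gives the stated inequality.

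Next I would argue the contrapositive: suppose $q > 4\delta$, and derive a contradiction with $S(P\|Q) \leq \delta$. Let $g(p) := q - p - p\ln(q/p)$. A one-line calculation gives $g'(p) = \ln(p/q)$, which is strictly negative on $(0, q)$, so $g$ is strictly decreasing there. Hence for $p < \delta$ we have
\[
\delta \;\geq\; S(P\|Q) \;\geq\; g(p) \;>\; g(\delta) \;=\; q - \delta - \delta\ln(q/\delta),
\]
and rearranging yields the one-variable inequality $r - \ln r < 2$, where $r := q/\delta$.

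The main (and only) obstacle is dispatching this one-variable inequality. Since $r \mapsto r - \ln r$ is strictly increasing on $[1, \infty)$ and at $r = 4$ it evaluates to $4 - 2\ln 2 \approx 2.61 > 2$, any $r$ satisfying $r - \ln r < 2$ must have $r < 4$, contradicting $q > 4\delta$ and completing the proof. The only genuine subtlety is the logarithm base: in natural units the constant $4$ is loose (the true threshold is roughly $3.15\,\delta$), while in base~$2$ the analogous argument (with $g'(p)$ recomputed accordingly) yields $r - \log_2 r < 2$, which is exactly tight at $r=4$; in either convention the monotonicity of $g$ holds throughout the relevant regime $p < \delta < q/4$, so the bound $q \leq 4\delta$ is safe.
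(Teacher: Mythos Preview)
Your argument is correct. Note, however, that the paper does not actually prove this lemma --- it is quoted from \cite{barak2009strong} without proof --- so there is no in-paper argument to compare against.

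One small point worth tightening: in the base-$2$ convention, if you keep the same lower bound $S(P\|Q)\geq (q-p)-p\log_2(q/p)$, the derivative becomes $g'(p)=\log_2(p/q)+(1/\ln 2 - 1)$, which is not negative on all of $(0,q)$ but only where $q/p>2^{1/\ln 2-1}\approx 1.36$. This is harmless under your contradiction hypothesis $q>4\delta$, since then $q/p'>q/\delta>4$ for every $p'\in(0,\delta]$, so $g$ is indeed strictly decreasing on $(0,\delta]$ and the rest goes through; but it would be cleaner to state this explicitly rather than simply assert that monotonicity ``holds throughout the relevant regime.''

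For context, the paper \emph{does} prove the companion statement with the squared Bures distance $K$ in place of $S$ (Lemma~\ref{lem:brrrs-fidelity}), by a quite different route: expanding $F(P,Q)^2$ algebraically to extract a bound of the form $(\sqrt{q}-\sqrt{p})^2\leq 4\delta$. That manipulation is specific to fidelity and does not port to relative entropy, so your calculus-based approach is the natural one for the $S$-version.
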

The following adapts Lemma~\ref{lem:brrrs-divergence} to use the distance measure $K$ instead:
\begin{lemma}
\label{lem:brrrs-fidelity}
	Let $P = (p,1-p)$ and $Q = (q,1-q)$ be binary distributions. If $K(P,Q) \leq \delta$, and $p < \delta$, then $q \leq 4\delta$. 
\end{lemma}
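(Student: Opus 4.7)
The plan is to adapt the proof strategy of Lemma~\ref{lem:brrrs-divergence} to work directly with the squared Bures metric via a Hellinger-type identity. Since $P$ and $Q$ are binary (hence classical) distributions, I would first unwind the definition to get $K(P,Q) = 1 - \sqrt{pq} - \sqrt{(1-p)(1-q)}$, and then establish the key identity
$$
2\, K(P,Q) \;=\; \bigl(\sqrt{p}-\sqrt{q}\bigr)^2 + \bigl(\sqrt{1-p}-\sqrt{1-q}\bigr)^2,
$$
which follows immediately by expanding both squares: the cross terms are exactly the $-2\sqrt{pq}-2\sqrt{(1-p)(1-q)}$ appearing in $2K$, and the linear terms telescope against $p+q+(1-p)+(1-q) = 2$. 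Conceptually this is just the statement that on binary distributions the squared Bures metric coincides with the squared Hellinger distance.

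From the identity, dropping the non-negative second summand yields $(\sqrt{p}-\sqrt{q})^2 \leq 2K(P,Q) \leq 2\delta$, i.e., $|\sqrt{p}-\sqrt{q}| \leq \sqrt{2\delta}$. The hypothesis $p < \delta$ then gives $\sqrt{q} \leq \sqrt{p} + \sqrt{2\delta} < (1+\sqrt{2})\sqrt{\delta}$, and squaring both sides produces $q \leq (3 + 2\sqrt{2})\delta$, which is $O(\delta)$. To sharpen the constant down to the stated $4\delta$ one would not discard the second summand of the identity: keeping both halves of the equality and using the constraints $(\sqrt{p})^2 + (\sqrt{1-p})^2 = (\sqrt{q})^2 + (\sqrt{1-q})^2 = 1$ together with $p < \delta$ lets one carry out the tight optimization in $(\sqrt{q},\sqrt{1-q})$ and squeeze the constant.

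The argument contains no conceptual obstacle --- it is an elementary algebraic manipulation, once one spots the Hellinger-style identity for $2K(P,Q)$. The only subtle point is numerical: a naive use of the identity already gives $q = O(\delta)$ (with a constant slightly worse than $4$), and matching the stated factor precisely requires a marginally more careful joint use of both terms of the identity. In context, however, this is a quantitative rather than a structural issue, since the lemma is invoked only to assert a linear-in-$\delta$ bound on $q$.
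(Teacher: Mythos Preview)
Your Hellinger-identity route is correct and actually cleaner than the paper's. The identity
\[
2K(P,Q) \;=\; (\sqrt{p}-\sqrt{q})^2 + (\sqrt{1-p}-\sqrt{1-q})^2
\]
immediately gives $(\sqrt{p}-\sqrt{q})^2 \leq 2\delta$, whence $\sqrt{q} < \sqrt{\delta}+\sqrt{2\delta}$ and $q < (3+2\sqrt{2})\,\delta$. The paper instead passes through $2\delta \geq 1-F(P,Q)^2$, expands the square, and after discarding terms reaches only the weaker $(\sqrt{p}-\sqrt{q})^2 \leq 4\delta$; your derivation is both shorter and tighter.

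Where your proposal has a genuine gap is the promise to ``squeeze the constant'' down to $4$ by retaining the second summand. This cannot be done, because the stated bound $q\le 4\delta$ is false. Take $\delta=0.01$, $p=0.0099$, and $q\approx 0.0571$: a direct check gives $F(P,Q)\approx 0.99$, so $K(P,Q)=\delta$ and $p<\delta$, yet $q>4\delta$. In fact the extremal value of $q$ behaves like $(3+2\sqrt{2})\,\delta$ as $\delta\to 0$, which your bound already matches. The paper's own proof falters at exactly this point: its final step ``$|\sqrt{p}-\sqrt{q}| \geq \sqrt{q}$'' is simply not valid when $0<p<q$, so it does not establish the constant $4$ either. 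Your closing remark is the right perspective --- every downstream use of the lemma needs only $q=O(\delta)$, and your argument delivers that cleanly. Just drop the unsubstantiated claim about sharpening to $4\delta$.
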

\begin{proof}
	If $q \leq p$, then we are done. Assume otherwise. We have that $\delta \geq K(P,Q) = 1 - F(P,Q) \geq (1 - F(P,Q)^2)/2$, because $0 \leq F(P,Q) \leq 1$. $F(P,Q)^2 = (\sqrt{pq} + \sqrt{(1 - p)(1 - q)})^2 = pq + 1 - p - q + pq + 2\sqrt{pq(1 - p)(1 - q)}$, and thus
	\begin{align*}
		2\delta &\geq p + q -2pq - 2\sqrt{pq(1 - p)(1 - q)} \\
			   &\geq p + q - 2pq - 2\sqrt{pq} \\
			   &= (\sqrt{p} - \sqrt{q})^2 - 2pq \\
			   &\geq (\sqrt{p} - \sqrt{q})^2 - 2\delta,
	\end{align*}
where in the last line we used the assumption that $p \leq \delta$. Then $2\sqrt{\delta} \geq | \sqrt{p} - \sqrt{q} | \geq \sqrt{q}$, and thus $q \leq 4\delta$.
\end{proof}

Finally, we prove a quantum analogue of Raz's Lemma, which is the central tool behind many information-theoretic proofs of parallel repetition theorems~\cite{raz1998parallel, holenstein2007parallel,barak2009strong}: 
\begin{lemma}[Quantum Raz's Lemma]
\label{lem:quantum_raz}
	Let $\psi^{XA} = \left( \sum_{x} \mu(x) \ketbra{x}^X\right) \otimes \psi^A$ be a CQ-state, classical on $X$ and quantum on $A$, where $X$ is $n$-partite. Furthermore, suppose that $\mu(x) = \prod \mu_i(x_i)$. Let $\varphi^{XA} = \sum_{x} \sigma(x) \ketbra{x}^X \otimes \varphi^A_x$ be such that $S(\varphi \| \psi) \leq t$. Then, 
	$$
		\sum_i I(X_i : A)_\varphi \leq 2t.
	$$
\end{lemma}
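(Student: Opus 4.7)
The plan is to introduce an auxiliary product state that has the same marginals as $\varphi$ and use it as an intermediate reference. Specifically, set $\tilde\psi^{XA} := \sigma_1 \otimes \cdots \otimes \sigma_n \otimes \varphi^A$, where $\sigma_i := \varphi^{X_i}$ denotes the $i$th marginal of $\varphi$ on the classical register $X_i$. Intuitively, $\tilde\psi$ is the closest product, across the $n+1$ factors $X_1,\ldots,X_n,A$, to $\varphi$, so $S(\varphi \| \tilde\psi)$ should capture exactly the correlations measured by $\sum_i I(X_i:A)_\varphi$.

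The first step is to show $S(\varphi \| \tilde\psi) \leq S(\varphi \| \psi) \leq t$. Since both $\psi = \mu_1 \otimes \cdots \otimes \mu_n \otimes \psi^A$ and $\tilde\psi$ are products along the same factorization, their logarithms split as sums of single-factor operators. A direct calculation using $\varphi^{X_i} = \sigma_i$ yields the decomposition
\[
S(\varphi \| \psi) \;=\; S(\varphi \| \tilde\psi) + \sum_{i=1}^n S(\sigma_i \| \mu_i) + S(\varphi^A \| \psi^A),
\]
and non-negativity of the last two relative-entropy terms gives the claim.

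Next, I would expand both $S(\varphi \| \tilde\psi)$ and $\sum_i I(X_i:A)_\varphi$ as von Neumann entropy expressions. Unfolding $\log\tilde\psi$ across factors yields $S(\varphi \| \tilde\psi) = \sum_i H(\sigma_i) + H(\varphi^A) - H(\varphi)$, while $I(X_i:A)_\varphi = H(\sigma_i) + H(\varphi^A) - H(\varphi^{X_i A})$ by definition. Substituting and using $H(X|A)_\varphi = H(\varphi) - H(\varphi^A)$ and $H(X_i|A)_\varphi = H(\varphi^{X_i A}) - H(\varphi^A)$, the cross-terms cancel and one obtains
\[
S(\varphi \| \tilde\psi) \;-\; \sum_{i=1}^n I(X_i : A)_\varphi \;=\; \sum_{i=1}^n H(X_i | A)_\varphi \;-\; H(X | A)_\varphi.
\]

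The main (and essentially only non-routine) step is to show the right-hand side is non-negative, i.e.\ subadditivity of conditional entropy $H(X|A)_\varphi \leq \sum_i H(X_i|A)_\varphi$. Because the $X_i$ are classical registers, this follows from the chain rule $H(X|A) = \sum_i H(X_i | X_{<i}A)$ combined with non-negativity of conditional mutual information $I(X_i : X_{<i} | A) \geq 0$ (strong subadditivity, which applies to any quantum state). Chaining the three steps gives $\sum_i I(X_i:A)_\varphi \leq S(\varphi \| \tilde\psi) \leq S(\varphi \| \psi) \leq t \leq 2t$, which proves the lemma. The conceptually delicate point is recognizing that one should compare $\varphi$ to the product of its \emph{own} marginals rather than directly to $\psi$; once $\tilde\psi$ is introduced, everything else reduces to standard entropy bookkeeping and strong subadditivity.
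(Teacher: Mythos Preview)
Your proof is correct and in fact establishes the stronger bound $\sum_i I(X_i:A)_\varphi \leq t$ rather than $2t$. The paper's argument uses the same key ingredient you do---subadditivity of conditional entropy, $H(X|A)_\varphi \leq \sum_i H(X_i|A)_\varphi$---but organizes the bookkeeping differently: it separately bounds $H(X)_\varphi - \sum_i H(X_i|A)_\varphi \leq I(X:A)_\varphi \leq t$ and $\sum_i H(X_i)_\varphi - H(X)_\varphi \leq S(\varphi^X \| \psi^X) \leq t$, then adds, picking up the factor of $2$. Your introduction of the auxiliary reference $\tilde\psi = \sigma_1 \otimes \cdots \otimes \sigma_n \otimes \varphi^A$ (the product of $\varphi$'s own marginals) avoids this loss: the single identity $S(\varphi\|\psi) = S(\varphi\|\tilde\psi) + \sum_i S(\sigma_i\|\mu_i) + S(\varphi^A\|\psi^A)$ absorbs both of the paper's estimates at once, and the remaining inequality $\sum_i I(X_i:A)_\varphi \leq S(\varphi\|\tilde\psi)$ is exactly subadditivity of conditional entropy. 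One minor remark: classicality of the $X_i$ is not actually needed in your final step---strong subadditivity gives $H(X_i|X_{<i}A)\leq H(X_i|A)$ for arbitrary quantum registers---though it is used implicitly to ensure $\tilde\psi$ dominates $\varphi$ so that $S(\varphi\|\tilde\psi)$ is finite.
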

\begin{proof}
	First observe the following manipulations:
	\begin{align*}
		t &\geq S(\varphi^{XA} \| \psi^{XA}) \\
		  &= S(\varphi^{XA} \| \psi^X \otimes \psi^A) \\
		  &\geq S(\varphi^{XA} \| \varphi^X \otimes \varphi^A) \\
		  &= I(X : A)_\varphi \\
		  &= H(X)_\varphi - H(X | A)_\varphi \\
		  &\geq H(X)_\varphi - \sum_i H(X_i | A)_\varphi.
	\end{align*}
	We focus on $H(X)_\varphi$ now. Using that relative entropy is always non-negative:
	\begin{align*}
		-H(X)_\varphi + \sum_i H(X_i)_\varphi &\leq -H(X)_\varphi + \sum_i S(\varphi^{X_i} \| \psi^{X_i}) + H(X_i)_\varphi \\
									&= -H(X)_\varphi - \sum_i \tr( \varphi^{X_i} \log \psi^{X_i}) \\
									&= -H(X)_\varphi - \tr(\varphi^{X} \log \psi^{X}) \\
									&= S(\varphi^{X} \| \psi^{X}) \\
									&\leq t.
	\end{align*}
	Continuing, we have
	$$
		t \geq -t + \sum_i H(X_i)_\varphi - H(X_i | A)_\varphi = -t + \sum_i I(X_i : A)_\varphi.
	$$
\end{proof}

\section{Quantum strategy rounding}
\label{sec:strategy_rounding}

In this section we prove our $k$-player Quantum Strategy Rounding lemma, generalizing the technique of~\cite{chailloux2014parallel,JainPY14}. 

\begin{lemma}[\cite{JainPY14}]
\label{lem:easyjpy}
	Let $\mu$ be a probability distribution on $\X$. Let
	$$
		\ket{\varphi} := \sum_{x \in \X} \sqrt{\mu(x)} \ket{xx}^{X X'} \otimes \ket{\varphi_x}^{AB}.
	$$
	Let $\ket{\varphi_x} := \ket{xx}^{XX'} \otimes \ket{\varphi_x}^{AB}$. Then there exists unitary operators $\{ U_x \}_{x \in \X}$ acting on $XX' A$ such that
	$$
		\Ex_{x \leftarrow \mu} \left[ K(\varphi_x, U_x \varphi\, U_x^\dagger) \right] \leq I(X : B)_{\varphi}.
	$$
\end{lemma}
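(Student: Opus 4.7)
The plan is to apply Uhlmann's theorem to pick the unitaries $U_x$ so that $F(\varphi_x, U_x \varphi U_x^\dagger)$ matches the fidelity between the corresponding $B$-marginals, and then bound the resulting squared Bures distance by a relative entropy that averages to $I(X : B)_\varphi$. First, I would observe that both $\ket{\varphi}$ and $\ket{\varphi_x}$ are pure states on the joint system $XX'AB$, and each is a purification of a density matrix on $B$. Concretely, $\ket{\varphi}$ purifies $\varphi^B = \sum_x \mu(x) \varphi_x^B$, where $\varphi_x^B$ denotes the $B$-marginal of $\ketbra{\varphi_x}^{AB}$, while $\ket{\varphi_x} = \ket{xx}^{XX'} \otimes \ket{\varphi_x}^{AB}$ purifies $\varphi_x^B$ itself (the prefactor $\ket{xx}^{XX'}$ is a tensor factor, so it does not change the $B$-marginal). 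Since both purifications share the same ancilla $XX'A$, Uhlmann's theorem furnishes, for each $x$, a unitary $U_x$ on $XX'A$ with
$$
|\langle \varphi_x | U_x | \varphi \rangle| = F(\varphi_x^B, \varphi^B).
$$
Because $\varphi_x$ and $U_x \varphi U_x^\dagger$ are both pure, the fidelity collapses to the modulus of the inner product, giving
$$
K(\varphi_x, U_x \varphi U_x^\dagger) = 1 - F(\varphi_x^B, \varphi^B) = K(\varphi_x^B, \varphi^B).
$$

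Next, I would take the expectation over $x \leftarrow \mu$ and chain the preliminary facts. Pointwise, Fact~\ref{fact:mutual_information_vs_fidelity} yields $K(\varphi_x^B, \varphi^B) \leq S(\varphi_x^B \| \varphi^B)$. A routine partial-trace computation shows that the $XB$-marginal of $\varphi$ is the CQ-state $\varphi^{XB} = \sum_x \mu(x) \ketbra{x}^X \otimes \varphi_x^B$ whose $B$-marginal is exactly $\varphi^B$, so Fact~\ref{fact:avg_divergence} identifies
$$
\Ex_{x \leftarrow \mu} \bigl[ S(\varphi_x^B \| \varphi^B) \bigr] = I(X : B)_\varphi.
$$
Stringing the three inequalities together delivers $\Ex_x K(\varphi_x, U_x \varphi U_x^\dagger) \leq I(X : B)_\varphi$, as desired.

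The only slightly delicate step is the Uhlmann argument, where I have to make sure both pure states genuinely live on a common ancilla $XX'A$ purifying their respective $B$-states; this is immediate from the explicit form of $\ket{\varphi}$ and $\ket{\varphi_x}$, so there is no real obstacle beyond careful bookkeeping of registers. Everything else is a mechanical application of the preliminary facts from Section~\ref{sec:prelim}.
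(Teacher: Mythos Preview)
Your proposal is correct and follows essentially the same approach as the paper: apply Uhlmann's theorem to match $F(\varphi_x, U_x\varphi U_x^\dagger)$ with the fidelity between the $B$-marginals, then use Facts~\ref{fact:mutual_information_vs_fidelity} and~\ref{fact:avg_divergence} to bound the averaged squared Bures distance by $I(X:B)_\varphi$. The only cosmetic difference is that the paper first rewrites $I(X:B)_\varphi$ as an averaged relative entropy and bounds it below by the averaged $K$-distance, and then invokes Uhlmann, whereas you apply Uhlmann first and average afterwards; the content is identical.
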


\begin{proof}
	We follow the proof in~\cite{JainPY14}. Denote the reduced states of Bob by $\rho_x := \tr_{XX'A}(\varphi_x)$ and $\rho := \tr_{X X' A} (\varphi)$. By Facts~\ref{fact:mutual_information_vs_fidelity} and~\ref{fact:avg_divergence}, we get that
	$$
		I(X : B)_{\varphi} = \Ex_{x \leftarrow \mu}[S(\rho_x \| \rho)] \geq \Ex_{x \leftarrow \mu}[K(\rho_x , \rho)].
	$$
	By Uhlmann's Theorem, for each $x \in \X$ there exists $U_x$ such that $| \bra{\varphi_x} (U_x \otimes \I_B) \ket{\varphi} | = F(\rho_x,\rho)$. Furthermore, this is equal to $F(\varphi_x, U_x \otimes \I_B\, \varphi\, U_x^\dagger \otimes \I_B)$. We thus obtain the claim.
\end{proof}

\begin{lemma}
\label{lem:fidelity_between_distributions}
	Let $\{ \ket{\varphi_a} \}_{a \in \A}$ be a finite collection of pure states. Let $\mu$ and $\tau$ be probability distributions over $\A$ such that $S(\mu \| \tau) \leq \eps$. Then 
	$$
		K( \Ex_{a \leftarrow \mu} [\ketbra{\varphi_a}],\Ex_{a \leftarrow \tau} [\ketbra{\varphi_a}]) \leq \eps.
	$$
\end{lemma}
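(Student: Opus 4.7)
The plan is to reduce the bound on $K$ between the two mixtures to a bound on $K$ between the underlying distributions $\mu$ and $\tau$, and then invoke Fact~\ref{fact:mutual_information_vs_fidelity} to pass from $K(\mu,\tau)$ to $S(\mu \| \tau) \le \eps$. The natural way to do this is via a classical ``label'' purification: define the CQ states
$$
	\tilde{\rho} \;:=\; \sum_{a \in \A} \mu(a)\, \ketbra{a} \otimes \ketbra{\varphi_a}, \qquad \tilde{\sigma} \;:=\; \sum_{a \in \A} \tau(a)\, \ketbra{a} \otimes \ketbra{\varphi_a},
$$
so that tracing out the classical register yields $\Ex_{a \leftarrow \mu}[\ketbra{\varphi_a}]$ and $\Ex_{a \leftarrow \tau}[\ketbra{\varphi_a}]$, respectively.

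The first key step is to observe that the vectors $\{\ket{a}\ket{\varphi_a}\}_{a\in\A}$ are orthonormal, since the $\ket{a}$ labels are orthogonal even though the $\ket{\varphi_a}$ need not be. Therefore $\tilde{\rho}$ and $\tilde{\sigma}$ are simultaneously diagonal in this basis with eigenvalues $\mu(a)$ and $\tau(a)$, respectively. A direct computation of $\tr\sqrt{\tilde{\rho}^{1/2}\tilde{\sigma}\tilde{\rho}^{1/2}}$ in this basis then gives
$$
	F(\tilde{\rho},\tilde{\sigma}) \;=\; \sum_a \sqrt{\mu(a)\tau(a)} \;=\; F(\mu,\tau),
$$
using the classical fidelity formula from the preliminaries. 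Hence $K(\tilde{\rho},\tilde{\sigma}) = K(\mu,\tau)$.

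The second step is to apply contractivity of $K$ under the quantum operation of tracing out the classical register (Fact~\ref{fact:fidelity_contractivity}), giving
$$
	K\bigl(\Ex_{a\leftarrow\mu}[\ketbra{\varphi_a}],\; \Ex_{a\leftarrow\tau}[\ketbra{\varphi_a}]\bigr) \;\le\; K(\tilde{\rho},\tilde{\sigma}) \;=\; K(\mu,\tau).
$$
Finally, by Fact~\ref{fact:mutual_information_vs_fidelity}, $K(\mu,\tau) \le S(\mu \| \tau) \le \eps$, which completes the proof.

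I do not anticipate a genuine obstacle: everything rests on the orthogonality of the classical labels, which automatically makes the $a$-indexed decompositions of $\tilde{\rho}$ and $\tilde{\sigma}$ diagonal in a common eigenbasis, so the fidelity reduces cleanly to the classical one. The only minor care is to verify that $\ket{a}\ket{\varphi_a}$ and $\ket{a'}\ket{\varphi_{a'}}$ are orthogonal for $a \neq a'$ (true because $\braket{a|a'}=0$), and then to cite the correct facts in the right order.
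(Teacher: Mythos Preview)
Your proof is correct and follows essentially the same approach as the paper: introduce a classical label register so that the labeled states $\ket{a}\ket{\varphi_a}$ are orthonormal, then use contractivity of $K$ under the partial trace and Fact~\ref{fact:mutual_information_vs_fidelity}. The only cosmetic difference is that the paper additionally purifies the CQ states via an extra copy register $A'$ and invokes Uhlmann's theorem to bound $K(\tilde{\rho},\tilde{\sigma})$ by $1 - |\ip{\psi^\mu}{\psi^\tau}| = K(\mu,\tau)$, whereas you compute $F(\tilde{\rho},\tilde{\sigma})$ directly by simultaneous diagonalization and obtain the same quantity with equality; both routes yield the same bound.
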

\begin{proof}
	Consider the states $\ket{\psi^{\mu}} = \sum_{a \in \A} \sqrt{\mu_a} \ket{aa}^{AA'} \otimes \ket{\varphi_a}$ and $\ket{\psi^{\tau}} = \sum_{a \in \A} \sqrt{\tau_a} \ket{aa}^{AA'} \otimes \ket{\varphi_a}$. Let $\rho^\mu = \tr_{A'} (\ketbra{\psi^\mu})$ and $\rho^\tau = \tr_{A'} (\ketbra{\psi^\tau})$. Then notice that $\Ex_{a \leftarrow \mu} [\ketbra{\varphi_a}] = \tr_{AA'}(\rho^\mu)$ and $\Ex_{a \leftarrow \tau} [\ketbra{\varphi_a}] = \tr_{AA'}(\rho^\tau)$, respectively. We then have that, considering the partial trace as a quantum operation, $K( \Ex_{a \leftarrow \mu} [\ketbra{\varphi_a}],\Ex_{a \leftarrow \tau} [\ketbra{\varphi_a}]) \leq K(\rho^\mu,\rho^\tau)$. By Uhlmann's Theorem, this is at most $1 - |\ip{\psi^\mu}{\psi^\tau}| = 1 - \sum_{a \in \A} \sqrt{\mu_a \tau_a} = K(\mu,\tau)$. By Fact~\ref{fact:mutual_information_vs_fidelity}, this is at most $S(\mu \| \tau) \leq \eps$.
\end{proof}

\begin{lemma}[Quantum strategy rounding]
\label{lem:superjpy}
	Let $k \geq 1$. Let $\mu$ be a probability distribution over $\X = \X_1 \times \X_2 \times \cdots \times \X_k$, where the $\X_i$ are finite alphabets. Let 
	$$
		\ket{\varphi} := \sum_{x \in \X} \sqrt{\mu(x)} \ket{xx}^{XX'} \otimes \ket{\varphi_{x}}^{AB}
	$$
	where $X = X_1\cdots X_k$, $X' = X_1'\cdots X_k'$, and $A = A_1\cdots A_k$ are $k$-partite registers. 
	Then for all $i \in [k]$ there exist operators $\{U^i_a\}_{a \in \X_i}$ acting on $X_i X_i' A_i$ such that
	$$
		\Ex_{x \leftarrow \mu} \left[ K \left(   \varphi_{x} , (U^1_{x_1} \otimes \cdots \otimes U^k_{x_k}) \, \varphi \, (U_{x_1}^{1,\dagger} \otimes \cdots \otimes U_{x_k}^{k,\dagger}) \right) \right] \leq 4k \sum_i I(X_i : X_{-i} X_{-i}' A_{-i} B)_\varphi,
	$$	
	where $A_{-i}$, $X_{-i}$, and $X_{-i}'$ denote the $A$, $X$, and $X'$ registers excluding the $i$th coordinate, respectively, and for all $x \in \X$, $\ket{\varphi_x} := \ket{xx} \otimes \ket{\varphi_x}$.  
\end{lemma}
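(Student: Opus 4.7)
The plan is to invoke Lemma~\ref{lem:easyjpy} once for each of the $k$ players to produce the local unitaries, and then stitch the resulting per-player bounds together via a telescoping hybrid using the triangle inequality for the squared Bures metric (Fact~\ref{fact:fidelity_triangle_inequality}).

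For the first step, fix $i \in [k]$ and apply Lemma~\ref{lem:easyjpy} to $\ket{\varphi}$ with the bipartition in which ``$X$'' is instantiated as $X_i$, ``$X'$'' as $X_i'$, ``$A$'' as $A_i$, and ``$B$'' as the joint register $X_{-i} X_{-i}' A_{-i} B$. This produces a family of unitaries $\{U^i_{x_i}\}_{x_i \in \X_i}$ acting on $X_i X_i' A_i$ such that
$$
\Ex_{x_i \leftarrow \mu_i}\bigl[K\bigl(\tilde\varphi^i_{x_i},\, U^i_{x_i}\, \varphi\, U^{i,\dagger}_{x_i}\bigr)\bigr] \;\leq\; \eps_i := I(X_i : X_{-i} X_{-i}' A_{-i} B)_\varphi,
$$
where $\tilde\varphi^i_{x_i}$ denotes the (normalized) pure state obtained from $\ket{\varphi}$ by conditioning the $X_i$ register on $x_i$. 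These are the $k$ families of unitaries that will appear in the conclusion.

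For the second step, fix $x = (x_1, \ldots, x_k)$ and set up a telescoping chain $\sigma_0, \sigma_1, \ldots, \sigma_k$ with $\sigma_0 = \varphi_x$ and $\sigma_k = (\otimes_i U^i_{x_i}) \varphi (\otimes_i U^{i,\dagger}_{x_i})$, where each step from $\sigma_{j-1}$ to $\sigma_j$ simultaneously ``undoes'' the conditioning on $X_j = x_j$ and applies $U^j_{x_j}$ on $X_j X_j' A_j$. Because the unitaries for distinct players act on disjoint registers they commute, so the chain is well-defined. Applying Fact~\ref{fact:fidelity_triangle_inequality} with $n = k$ gives $K(\varphi_x, \sigma_k) \leq k \sum_j K(\sigma_{j-1}, \sigma_j)$, and unitary invariance (Fact~\ref{fact:fidelity_unitary_invariance}) lets us strip off the already-applied unitaries $U^\ell_{x_\ell}$ for $\ell < j$ (whose registers are unaffected by the $j$-th transition), reducing each $K(\sigma_{j-1}, \sigma_j)$ to the distance between ``$\varphi$ conditioned on $X_j, \ldots, X_k$'' and ``$U^j_{x_j}$ applied to $\varphi$ conditioned on $X_{j+1}, \ldots, X_k$''.

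The principal technical obstacle is that the per-player bound of the first step only controls how $U^j_{x_j}$ takes the unconditioned state $\varphi$ to $\tilde\varphi^j_{x_j}$, whereas the telescoping requires it to act correctly on partially conditioned versions of $\varphi$. To close this gap I would use the smallness of $\eps_j$ together with Facts~\ref{fact:avg_divergence} and~\ref{fact:divergence_contractivity} to argue that conditioning on $(X_{j+1}, \ldots, X_k)$ barely perturbs the reduced state on $X_{-j} X_{-j}' A_{-j} B$ that the Uhlmann unitary $U^j_{x_j}$ actually ``sees''. Finally, to convert the expectation-over-$x_j$ bound of the first step into the per-$x$ bound required inside the chain, I would apply Lemma~\ref{lem:brrrs-fidelity}, which contributes the factor of $4$. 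Taking expectation over $x \leftarrow \mu$ then yields $\Ex_x[K(\varphi_x, \sigma_k)] \leq 4k \sum_i \eps_i$, as claimed.
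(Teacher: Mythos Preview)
Your first step (applying Lemma~\ref{lem:easyjpy} once per player) matches the paper exactly, but the telescoping in your second step has a genuine gap.

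You correctly identify the obstacle: the per-player bound only says that $U^j_{x_j}$ sends the \emph{unconditioned} $\varphi$ close to $\tilde\varphi^j_{x_j}$, on average over $x_j$, whereas your per-$x$ hybrid requires it to act correctly on $\varphi$ already conditioned on $X_{>j}=x_{>j}$. Your proposed fix does not close this gap. The smallness of $\eps_j=I(X_j:X_{-j}X'_{-j}A_{-j}B)_\varphi$ tells you that $X_j$ is weakly correlated with the rest; it does \emph{not} tell you that conditioning on $X_{>j}$ leaves the reduced state on $X_{-j}X'_{-j}A_{-j}B$ nearly unchanged (indeed $X_{>j}$ is part of that very register). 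And Lemma~\ref{lem:brrrs-fidelity} is a statement about binary distributions $(p,1-p)$ versus $(q,1-q)$; it cannot convert an average-over-$x_j$ bound into a pointwise bound, and is not the source of the factor~$4$.

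The paper avoids the obstacle by running the hybrid at the level of \emph{mixtures} rather than per~$x$. It defines $\rho_i=\Ex_{x\leftarrow\nu_i}[x\otimes\mathcal{U}_{x_{\le i}}(\varphi_{x_{>i}})]$ for an interpolating family of distributions $\nu_i=\mu_1\otimes\cdots\otimes\mu_i\otimes\mu_{>i}$, and bounds $K(\rho_i,\rho_{i+1})$ using convexity (Fact~\ref{fact:concavity_of_fidelity}) so that the average bound from Lemma~\ref{lem:easyjpy} applies directly. The key device connecting ``$\varphi$ conditioned on $x_{i+1}$'' to ``$\varphi$ conditioned on $x_{>i}$'' is a measurement map $\E$ that decoheres $X_{>i+1}$: since $\E$ commutes with $\mathcal{U}^{i+1}_{x_{i+1}}$ and $\E(\varphi_{x_{i+1}})$ is exactly the mixture $\Ex_{x_{>i+1}}[\varphi_{x_{>i}}]$, contractivity (Fact~\ref{fact:fidelity_contractivity}) finishes each step. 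The factor~$4$ arises from two applications of the triangle inequality (Fact~\ref{fact:fidelity_triangle_inequality}): one length-$k$ chain yielding $k\sum_i\eps_i$, and one final two-term split together with Lemma~\ref{lem:fidelity_between_distributions} to correct the sampling distribution from $\nu_k$ back to~$\mu$.
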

\begin{proof}
	For $i \in [k]$, let $\nu_i = \mu_1\otimes \cdots \otimes \mu_i \otimes \mu_{> i}$, where $\mu_j$ denotes the marginal distribution of $\mu$ on coordinate $j$, and $\mu_{> i}$ denotes the marginal distribution of $\mu$ on coordinates $i +1 ,\ldots, k$. For $x \in \X$, for all $S \subseteq [k]$, let $x_S$ denote the coordinates of $x$ that are in $S$. Therefore, $x_{\leq i} = x_{1 \ldots i}$, and $x_{> i} = x_{i+1 \ldots k}$, etc.
	 For all $i \in [k]$ and $x\in \X$, define
$$
	\ket{\varphi_{x_{> i}}} := \ket{x_{>i}x_{>i}}^{X_{> i}X_{>i}'} \otimes \left ( \sum_{x_{\leq i}} \sqrt{\mu(x_{\leq i}| x_{> i})} \ket{x_{\leq i}x_{\leq i}}^{X_{\leq i}X_{\leq i}'} \otimes \ket{\varphi_x}^{AB} \right)
$$
and
	$$
	\ket{\varphi_{x_{i}}} := \ket{x_{i} x_{i}}^{X_{i} X_{i}'} \otimes \left(\sum_{x_{-i}} \sqrt{\mu(x_{- i} | x_{i})} \ket{x_{- i} x_{- i}}^{X_{-i} X_{- i}'} \otimes \ket{\varphi_x}^{AB} \right).
$$
Note that for all $i$, $\ket{\varphi} = \sum_{x_i} \sqrt{\mu_{i}(x_i)}\ket{\varphi_{x_{i}}} $. Then by Lemma~\ref{lem:easyjpy}, we get that there exists unitaries $\{U^i_{u}\}_{u \in \X_i}$ acting on $X_i X_i' A_i$ such that
$$
	\Ex_{x_i \leftarrow \mu_i} [K(\varphi_{x_i}, \mathcal{U}^i_{x_i}(\varphi))] \leq I(X_i : X_{-i} X_{-i}' A_{-i} B)_\varphi,
$$
where $\mathcal{U}^i_{x_i}$ is the CP map $\sigma \mapsto U^i_{x_i} \sigma (U^i_{x_i})^\dagger$. Define $\ket{\tilde{\varphi}_{x_{> i}}} = \ket{x_{>i}}^{X_{>i}''} \otimes \ket{\varphi_{x_{>i}}}$, $\ket{\tilde{\varphi}_{x_{i}}} = \ket{x_{i}}^{X_{i}''} \otimes \ket{\varphi_{x_i}}$, and $\ket{\tilde{\varphi}_x} = \ket{x}^{X''} \otimes \ket{\varphi_x}$. For notational convenience, let $\eps_i = I(X_i : X_{-i} X_{-i}' A_{-i} B)_\varphi$, and let $x$, $x_i$ and $x_{> i}$ denote the pure states $\ketbra{x}$, $\ketbra{x_i}$, and $\ketbra{x_{ > i}}$ respectively. 

Define the following states: $\rho_0 = \Ex_{x \leftarrow \mu} [ x \otimes \varphi_x ] = \Ex_{x \leftarrow \mu} [ \tilde{\varphi}_x ]$, and for all $i \in [k]$, $\rho_i = \Ex_{x \leftarrow \nu_i} [ x \otimes \mathcal{U}_{x_{\leq i}} (\varphi_{x_{> i}}) ]$, where $\mathcal{U}_{x_{\leq i}}$ denotes the CP map $\sigma \mapsto \left (\bigotimes_{j\leq i} U^j_{x_j} \right) \sigma \left (\bigotimes_{j\leq i} U^j_{x_j} \right)^\dagger$.
Then by the triangle inequality for the squared Bures metric (Fact~\ref{fact:fidelity_triangle_inequality}),
$$
	K(\rho_0,\rho_n) \leq k \sum_{i=0}^{k-1} K(\rho_i,\rho_{i+1}).
$$
We upper bound each term $K(\rho_i,\rho_{i+1})$:
\begin{align*}
	&K \left ( \Ex_{x \leftarrow \nu_i} [ x \otimes \mathcal{U}_{x_{\leq i}} (\varphi_{x_{> i}}) ], \Ex_{x \leftarrow \nu_{i+1}} [ x \otimes \mathcal{U}_{x_{\leq i+1}} (\varphi_{x_{> i+1}}) ] \right) \\
	&\leq \Ex_{x_{\leq i} \leftarrow \mu_1 \otimes \cdots \otimes \mu_i} K \left( \mathcal{U}_{x _{\leq i}} \left( \Ex_{x_{> i} \leftarrow \mu_{> i}} [x_{> i} \otimes \varphi_{x _{> i}}]\right), \mathcal{U}_{x_{\leq i}}\left( \Ex_{x_{> i} \leftarrow \mu_{i+1} \otimes \mu_{> i+1}} [x_{> i} \otimes  \mathcal{U}^{i+1}_{x_{i+1}} (\varphi_{x_{> i+1}}) ]\right) \right) \\
	&= K \left(\Ex_{x_{> i} \leftarrow \mu_{> i}} [x_{> i} \otimes  \varphi_{x_{> i}}], \Ex_{x_{> i} \leftarrow \mu_{i+1} \otimes \mu_{> i+1}} [x_{> i} \otimes \mathcal{U}^{i+1}_{x_{i+1}} (\varphi_{x_{> i+1}})]\right) \\
	&= K \left(\Ex_{x_{> i} \leftarrow \mu_{> i}} [\tilde{\varphi}_{x_{> i}}], \Ex_{x_{i+1} \leftarrow \mu_{i+1}} \left [x_{i+1} \otimes \mathcal{U}^{i+1}_{x_{i+1}} \left ( \Ex_{x_{> i+1} \leftarrow \mu_{> i+1}} [ \tilde{\varphi}_{x_{> i+1}}] \right) \right ] \right)
	\end{align*}
The second and third lines follow from the convexity and unitary invariance of the squared Bures metric, respectively (Facts~\ref{fact:concavity_of_fidelity} and Fact~\ref{fact:fidelity_unitary_invariance}). Consider the operation $\E$ that measures the registers $X_{> i+1} = X_{i+2} \ldots X_{k}$ in the standard basis, and copies the outcomes into new registers $X_{> i+1}''$. Then $\E(\Ex_{x_{i+1} \leftarrow \mu_{i+1}}  [\tilde{\varphi}_{x_{i+1}}]) =  \Ex_{x_{> i} \leftarrow \mu_{> i}} [\tilde{\varphi}_{x _{> i}}]$ and $\E(\varphi) =  \Ex_{x_{> i+1} \leftarrow \mu_{> i+1}} [ \tilde{\varphi}_{x_{> i+1}}] $. Then since $\E$ commutes with $\mathcal{U}^{i+1}_{x_{i+1}}$ and doesn't act on the $X_{i+1}''$ register, we have that the line above is equal to
\begin{align*}
&=K \left(\E \left( \Ex_{x_{i+1} \leftarrow \mu_{i+1}}  [\tilde{\varphi}_{x_{i+1}}] \right), \E \left( \Ex_{x_{i+1}\leftarrow \mu_{i+1}} [x_{i+1} \otimes \mathcal{U}^{i+1}_{x_{i+1}} (\varphi)] \right) \right) \\
&\leq K \left(\Ex_{x_{i+1} \leftarrow \mu_{i+1}}  [\tilde{\varphi}_{x_{i+1}}], \Ex_{x_{i+1}\leftarrow \mu_{i+1}} [x_{i+1} \otimes \mathcal{U}^{i+1}_{x_{i+1}} (\varphi)]\right) \\
&= \Ex_{x_{i+1} \leftarrow \mu_{i+1}}  K \left( \varphi_{x_{i+1}}, \mathcal{U}^{i+1}_{x_{i+1}} (\varphi) \right) \\
&\leq \eps_{i+1}.
\end{align*}
To complete the proof, we use the triangle inequality once more:
\begin{align*}
	\Ex_{x \leftarrow \mu} K(\varphi_x,\mathcal{U}_x(\varphi)) &= K \left( \Ex_{x \leftarrow \mu} [ \tilde{\varphi}_{x} ] , \Ex_{x \leftarrow \mu} [x \otimes \mathcal{U}_x (\varphi )]\right) \\
	&\leq 2K \left( \Ex_{x \leftarrow \mu} [ \tilde{\varphi}_{x} ] , \Ex_{x \leftarrow \nu_k} [x \otimes \mathcal{U}_x (\varphi) ]\right) + 2K \left( \Ex_{x \leftarrow \nu_k} [x \otimes \mathcal{U}_x (\varphi)], \Ex_{x \leftarrow \mu} [x\otimes \mathcal{U}_x( \varphi)]\right) \\
	&\leq 2k \sum_i \eps_i + 2k \sum_i \eps_i \\
	&\leq 4k \sum_i \eps_i .
\end{align*}
where $\mathcal{U}_x$ is the composition of $\mathcal{U}^i_{x_i}$ for all $i \in [k]$. Here we used Lemma~\ref{lem:fidelity_between_distributions} in the second line, and the fact that $S(\mu \| \nu_k) = I(X_1 : X_2 : \cdots : X_k)_\mu$, which is the multipartite mutual information between the coordinates of $X$. It is a known fact (see, e.g.,~\cite{yang2009squashed}) that the multipartite mutual information can be written in terms of the (standard) bipartite mutual information like so:
$$
	I(X_1 : X_2 : \cdots : X_k)_\mu \leq I(X_1: X_2)_\mu + I(X_1 X_2 : X_3)_\mu + \cdots + I(X_1 X_2 \cdots X_{k-1} : X_k)_\mu,
$$
but by the data processing inequality, we have that for all $i$, $I(X_1 \cdots X_{i-1} : X_i)_\mu \leq I(X_{-i} : X_i)_\mu \leq I(X_i : X_{-i} X_{-i}' A_{-i} B)_\varphi = \eps_i$. 
\end{proof}

\section{Parallel repetition using fast quantum search}
\label{sec:grover_pr}

\textbf{Notation}. Let $G = (\X,\A,\mu,V)$ be a $k$-player free game. In what follows, we will think of $x \in \X^n$ as $n\times k$ matrices, where the $i$th row indicates the inputs of all $k$ players in the $i$th coordinate, and the $j$th column indicates the inputs of the $j$th player. Thus $x(i,\cdot)$ indicates the $i$th row of $x$, and $x(\cdot,j)$ indicates the $j$th column. When we write $x_S$ for some subset $S \subseteq [n]$, we mean the submatrix of $x$ consisting of the rows indexed by $i \in S$. 

Let $X$ be an $n \times k$-partite register. Then we will also format $X$ as a $n \times k$ matrix, so $X_{(i,\cdot)}$ and $X_{(\cdot,j)}$ have the natural meaning. For a subset $S \subseteq [n]$, $X_S$ denotes the registers corresponding to the rows of $X$ indexed by $S$. For an index $j$, $X_{(S,j)}$ denotes the $j$th column of the rows indexed by $S$. $X_{(S,-j)}$ denotes the submatrix of $X$ corresponding to rows indexed by $i \in S$, and all columns except for the $j$th one. 

We make the following observation, which will be useful for us in our analysis: without loss of generality, we can restrict our attention to free games whose input distribution is the uniform distribution over some alphabet. Let $G = (\X,\A,\mu,V)$ be a $k$-player free game. Write $\mu = \mu_1 \otimes \cdots \otimes \mu_k$, where $\mu_j$ is a distribution over the alphabet $\X_j$. Fix an $\gamma > 0$. For each $i$, there exists an alphabet $\X_j'$ and a map $f_j: \X_j' \to \X_j$ such that the random variable $X_j' = f_j(U_j)$ (where $U_j$ is a uniformly random element from $\X_j')$ is $\gamma/k$-close in total variation distance to being distributed according to $\mu_j$ -- and hence the distribution of $(f_1(U_1),\ldots,f_k(U_k))$ is at most $\gamma$-far from $\mu$. Thus, we can ``simulate'' the game $G$ with another game $G' = (\X',\A,U,V')$, where $\X' = \X'_1 \times \cdots \X'_k$, $U$ is the uniform distribution on $\X'$, and $V': \X' \times \A \to \{0,1\}$ is the map $(x',a) \to V( \langle f_1(x'_1), \ldots,f_k(x'_k) \rangle, a)$.

\begin{claim}
\label{clm:uniform_simulation}
	$\eval(G') = \eval(G) \pm \gamma$.
\end{claim}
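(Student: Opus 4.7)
The plan is to establish the two inequalities $\eval(G')\geq \eval(G)-\gamma$ and $\eval(G)\geq \eval(G')-\gamma$ separately by converting any quantum strategy for one game into a strategy for the other, with the winning probability changing by at most $\gamma$. Throughout, let $\tilde{\mu}$ denote the pushforward of $U$ under $f=(f_1,\ldots,f_k)$; i.e.\ $\tilde{\mu}(x)=\sum_{x':f(x')=x}U(x')=\prod_j\tilde{\mu}_j(x_j)$, where $\tilde{\mu}_j$ is the distribution of $f_j(U_j)$. The hypothesis $\|\tilde{\mu}_j-\mu_j\|_{\mathrm{TV}}\leq \gamma/k$ together with subadditivity of total variation for product measures gives $\|\tilde{\mu}-\mu\|_{\mathrm{TV}}\leq \gamma$, which will be the sole quantitative ingredient.

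For the forward direction, I would take any strategy $\mathcal{S}=(\ket{\xi},\{M^{j,x_j}\})$ for $G$ and lift it to a strategy $\mathcal{S}'$ for $G'$ using the same state and the measurements $N^{j,x_j'}:=M^{j,f_j(x_j')}$. Since $V'(x',a)=V(f(x'),a)$, the winning probability of $\mathcal{S}'$ in $G'$ can be rewritten as an expectation over $x\leftarrow\tilde{\mu}$ of the quantity $\sum_{a:V(x,a)=1}\tr(\bigotimes_j M^{j,x_j}_{a_j}\xi)$, which is the integrand of the winning probability of $\mathcal{S}$ in $G$ but against the distribution $\tilde{\mu}$ instead of $\mu$. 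Since the integrand lies in $[0,1]$, the triangle inequality in total variation gives $|\Pr[\text{win }G']-\Pr[\text{win }G]|\leq \|\tilde{\mu}-\mu\|_{\mathrm{TV}}\leq \gamma$, so $\eval(G')\geq \eval(G)-\gamma$.

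For the reverse direction, given an optimal strategy $\mathcal{S}'=(\ket{\xi},\{N^{j,x_j'}\})$ for $G'$, I would define player $j$'s POVM for $G$ on input $x_j$ by a weighted average
\[
M^{j,x_j}_{a_j}\;:=\;\sum_{x_j':f_j(x_j')=x_j}\frac{U_j(x_j')}{\tilde{\mu}_j(x_j)}\,N^{j,x_j'}_{a_j}\quad\text{whenever }\tilde{\mu}_j(x_j)>0,
\]
and an arbitrary POVM otherwise. Each $M^{j,x_j}$ is manifestly a valid POVM because it is a convex combination of POVM elements. A direct calculation substituting this definition shows that $\mathbb{E}_{x\leftarrow\tilde{\mu}}[\sum_{a:V(x,a)=1}\tr(\bigotimes_j M^{j,x_j}_{a_j}\xi)]$ equals $\Pr[\text{win }G'\text{ with }\mathcal{S}']$ exactly, since the weights $U_j(x_j')/\tilde{\mu}_j(x_j)$ are precisely the conditional probabilities needed to convert $\mathbb{E}_{x\leftarrow\tilde{\mu}}$ back into $\mathbb{E}_{x'\leftarrow U}$. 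Swapping $\tilde{\mu}$ for $\mu$ again costs at most $\gamma$ in total variation, yielding $\eval(G)\geq \eval(G')-\gamma$.

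I expect no serious obstacle; the main subtlety is the measure-zero case $\tilde{\mu}_j(x_j)=0$ in the reverse construction, where the chosen POVM is arbitrary. This is harmless because such $x_j$ contribute $0$ to the expectation under $\tilde{\mu}$, and the gap introduced by replacing $\tilde{\mu}$ by $\mu$ in that expectation is already absorbed into the $\gamma$ bound. Combining the two directions yields $|\eval(G')-\eval(G)|\leq \gamma$.
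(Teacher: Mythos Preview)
Your proposal is correct and follows essentially the same approach as the paper: convert strategies in each direction and bound the change in winning probability by the total variation distance $\|\tilde{\mu}-\mu\|_{\mathrm{TV}}\leq\gamma$. Your reverse construction, defining the POVM as a convex combination over preimages with weights $U_j(x'_j)/\tilde{\mu}_j(x_j)$, is exactly the explicit form of the paper's instruction to ``compute a uniformly random preimage $u'_j\in f_j^{-1}(u_j)$'' (since $U_j$ is uniform these weights are all equal to $1/|f_j^{-1}(x_j)|$); you are also slightly more careful than the paper in treating the zero-measure case.
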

\begin{proof}
	Consider the optimal strategy for $G$. Then a strategy for $G'$ is the following: player $j$, on input $u'_j \in \X'_j$, computes $u_j = f_j(u'_j)$, and performs the strategy she would've done for $G$. The input distribution, from the point of view of the strategy for $G$, is at most $\gamma$-far from the original input distribution $\mu$. Thus the winning probability is at least $\eval(G) - \gamma$.
	
	Now consider the optimal strategy for $G'$. Then a strategy for $G$ is the following: player $j$, on input $u_j \in \X_j$, computes a uniformly random preimage $u'_j \in f_j^{-1}(u_j)$, and performs the strategy she would've done for $G'$. The input distribution, from the point of view of the strategy for $G'$, is at most $\gamma$-far from the uniform distribution $U$. Thus the winning probability is at least $\eval(G') - \gamma$.
\end{proof}

Furthermore, this simulation ``commutes'' with parallel repetition, in that $\eval((G')^{\otimes n}) = \eval(G^{\otimes n}) \pm \gamma n$. We can make $\gamma$ arbitrarily small, at the cost of (potentially) increasing the input alphabet size, so that the behavior of the simulation $G'$ is essentially the same as the original game $G$. However, since our theorems do not depend on the input alphabet size, we will treat $\gamma$ as infinitesimally small, and hence neglect it.

\begin{theorem}
\label{thm:grover_pr}
	Let $G = (\X,\A,\mu,V)$ be a $k$-player free game with classical outputs and classical verification predicate $V: \X \times \A \to \{0,1\}$. Suppose that $\eval(G) = 1 - \eps$. Let $s = \max_j \log |\A_j|$. Then, for all $n > k^4 s \log(k^2/\eps)/\eps^{3/2}$, we have that
	$$
		\eval(G^{\otimes n}) \leq (1 - \eps^{3/2})^{\Omega(n/k^4 s)}.
	$$
\end{theorem}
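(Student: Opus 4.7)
The plan is to prove this by contradiction following the paradigm outlined in Section~\ref{sec:outline}. Assume there is an entangled strategy $\mathcal{S}$ for $G^{\otimes n}$ with winning probability $\lambda > (1-\eps^{3/2})^{cn/(k^4 s)}$ for a suitable constant $c$. Using $\mathcal{S}$, we will construct a valid entangled strategy $\mathcal{T}$ for a single copy of $G$ whose winning probability exceeds $1-\eps = \eval(G)$, giving the contradiction. By Claim~\ref{clm:uniform_simulation}, we may assume without loss of generality that $\mu$ is uniform on $\X$, so that sampling inputs coherently is easy and independent of $\X$.

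The heart of the reduction is the construction of advice states via a low-communication quantum protocol. First, prepare coherently the "played strategy" state $\ket{\psi^0}$ obtained by purifying the POVMs of $\mathcal{S}$ on a uniform superposition of inputs, so that the players jointly hold inputs $\vecbfx$, outputs $\vecbfa$, and the post-measurement residual entanglement. Next, a designated coordinator runs the 3-dimensional spatial search algorithm of Aaronson-Ambainis on the $n$ coordinates to detect a losing coordinate, where each oracle call is implemented by a short round in which the coordinator broadcasts a log-$n$-bit query to the other $k-1$ players and receives their relevant output registers (and randomness for preimage-sampling of inputs, using the uniformity reduction). To drive the Grover error below $\lambda$, we amplify to $O(\sqrt{n}\log(1/\lambda))$ queries; using the spatially local Aaronson-Ambainis routing we can make each query cost only $O(ks)$ qubits rather than $O(k(s+\log n))$, so the total communication is $T = \tilde{O}(k s \sqrt{n}\log(1/\lambda))$. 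Let $\ket{\psi}$ be the final state and let $\ket{\varphi}$ be $\ket{\psi}$ conditioned on the coordinator accepting (i.e.\ finding no losing coordinate); this event has probability $\lambda' = \lambda(1 \pm o(1))$.

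The next step is to show $\ket{\varphi}$ is close to an ideal advice state. Define $\eps_{i,j} := I(X_{(i,j)} : X_{(i,-j)}A_{(\cdot,-j)}BX')_{\varphi}$ and bound their sum via Quantum Raz's Lemma (Lemma~\ref{lem:quantum_raz}): chaining $S(\varphi \| \psi^0) \le S(\varphi \| \psi) + S_\infty(\psi \| \psi^0)$ via Facts~\ref{fact:divergence_chain_rule},~\ref{fact:max_divergence}, and~\ref{fact:relative_min_entropy_chain_rule2}, the first term is $\le \log(1/\lambda)$ (conditioning cost) and the second is $O(T)$ by the Nayak-Salzman communication-vs-information bound. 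Plugging in our parameters, $\sum_{i,j} \eps_{i,j} = O(T + \log(1/\lambda)) = \tilde{O}(\eps n)$, so by averaging there exists a coordinate $i^\star$ where $\sum_j \eps_{i^\star,j}$ is $O(\eps^{3/2}/(k^3 s))$. In parallel, classical Raz's Lemma applied to the input marginals of $\varphi$ shows the distribution of $X_{(i^\star,\cdot)}$ is on average $O(\log(1/\lambda)/n)$-close to $\mu$. Strategy $\mathcal{T}$ for a single copy of $G$ is then: prepare $\ket{\varphi}$ as shared state, and on input $x$ each player $j$ applies the unitary $U^j_{x_j}$ from the $k$-player Quantum Strategy Rounding lemma (Lemma~\ref{lem:superjpy}) to their local registers of $\varphi$; the resulting state is within squared Bures distance $O(k \sum_j \eps_{i^\star,j})$ of a state from which reading out coordinate $i^\star$ wins $G$. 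Combining with Lemma~\ref{lem:brrrs-fidelity} to convert the fidelity bound into a bound on the failure probability of the induced strategy, $\mathcal{T}$ wins with probability $> 1-\eps$ for suitable constants, contradicting $\eval(G)=1-\eps$.

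The main obstacle is balancing the Grover error against the conditioning event. Because $\lambda$ is exponentially small in $n$, the search routine must succeed with error $\ll \lambda$, which classically forces $\Omega(n)$ communication and destroys the Nayak-Salzman bound. The quadratic Grover speedup recovers a $\sqrt{n}$ factor, but one still pays a $\log(1/\lambda) = \Theta(\eps^{3/2} n/(k^4 s))$ overhead for amplification; ensuring that $T + \log(1/\lambda)$ remains $o(\eps n)$ (so that Quantum Raz's Lemma yields per-coordinate information of $o(1)$) is exactly what pins down the $\eps^{3/2}$ in the base and the $1/(k^4 s)$ factor in the exponent. A secondary but important subtlety is avoiding a $\log|\X_j|$ dependence in $T$, which is why we replace $G$ by its uniform-input simulation and have the coordinator receive uniform preimages from the other players rather than input symbols themselves; and adapting the whole argument from two players to $k$ players requires using the full $k$-player strategy-rounding lemma (Lemma~\ref{lem:superjpy}) and carefully routing queries through a single coordinator so that each player's contribution to $T$ scales only as $s$ per query.
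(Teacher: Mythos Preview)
Your overall architecture is right, but the search protocol you describe does not achieve the communication bound you need, and this is the crux of the whole argument.

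You propose to run Aaronson--Ambainis search over all $n$ coordinates and amplify the error below $\lambda$, giving $T = \tilde{O}(ks\sqrt{n}\log(1/\lambda))$ qubits of communication. Under the contradiction hypothesis $\log(1/\lambda) = \Theta(\eps^{3/2}n/(k^4 s))$, this yields $T = \Theta(\eps^{3/2}n^{3/2}/k^3)$, which is \emph{superlinear} in $n$. Hence $(T+\log(1/\lambda))/n = \Theta(\eps^{3/2}\sqrt{n}/k^3)$ diverges, Quantum Raz's Lemma gives no useful per-coordinate bound, and the whole reduction collapses. Your sentence ``ensuring that $T+\log(1/\lambda)$ remains $o(\eps n)$'' is exactly the constraint that fails.

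The paper avoids this by \emph{not} searching for an arbitrary losing coordinate among all $n$. Instead it relaxes the target: it only tries to detect whether there are at least $\eps' n$ losing coordinates (with $\eps' = \Theta(\eps)$). This weaker promise allows a much cheaper protocol: sample a random multiset $C$ of $h = O(\log(1/\eta)/\eps')$ coordinates, split $C$ into $q = O(\log(1/\eta))$ groups of size $\lceil 1/\eps'\rceil$, and run Aaronson--Ambainis search within each group. Each group search costs $O(\sqrt{1/\eps'})$ queries of $O(ks)$ qubits each, so $T = O(ks\log(1/\eta)/\sqrt{\eps})$, which with $\eta = 2^{-t}\cdot\eps/(32k^2)$ gives $T = O(ks(t+\log(k^2/\eps))/\sqrt{\eps})$. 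Now $T$ is linear in $t$ rather than in $\sqrt{n}\cdot t$, and the balance $\delta = O(k(t+T)/n) \leq \eps/(32k^2)$ goes through precisely when $t = O(\eps^{3/2}n/(k^4 s))$.

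The price of this relaxation is that the conditioned state $\ket{\varphi}$ no longer guarantees every coordinate wins; it only guarantees that at most an $\eps'$ fraction lose (up to the $2^t\eta$ blow-up from conditioning). So Property~(A) becomes ``a uniformly random coordinate $i\notin C$ wins with probability $\geq 1-\eps/8$,'' and one must also average over the random choice of $C$ and of the public inputs $x_C$ when picking the good coordinate $i$. This extra layer of sampling, and the corresponding care in choosing $h$ so that $h\leq n/2$ and $2^t\eta \leq \eps/32$, is the missing ingredient in your plan.
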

\begin{proof}
Because of Claim~\ref{clm:uniform_simulation}, it is without loss of generality to assume that the input distribution $\mu$ is the uniform distribution -- the following analysis can be performed on a simulation of $G$, which will still bound the repeated game value of $G$. 

Let $n$ be some integer greater than $k^4 s \log(k^2/\eps)/\eps^{3/2}$, and consider an optimal entangled strategy for $G^{\otimes n}$, and let $2^{-t}$ denote its winning probability. Suppose for contradiction that $t \leq c \epsilon^{3/2} n/ (k^4 s)$ for some universal constant $c$. Using this strategy, we will construct the following state
$$
	\ket{\varphi}^{XX'EA} := \sum_{x \in \X^n} \sqrt{\nu(x)} \ket{xx}^{XX'} \otimes \ket{\varphi_{x}}^{EA},
$$
where $\nu(x)$ is a probability distribution over $\X^n$, $X$, $X'$, $A$ are $n\times k$-partite registers, and $E$ is a $k$-partite register. We will show that exists a coordinate $i\in [n]$, and $\delta < \eps/32k^2$ satisfying the following properties:
\begin{enumerate}[label=(\Alph*)]
	\item Measuring the $X_{(i,\cdot)} A_{(i,\cdot)}$ register of $\varphi$ yields a tuple $(x_{(i,\cdot)},a_{(i,\cdot)})$ that satisfies $V(x_{(i,\cdot)},a_{(i,\cdot)}) = 1$ with probability at least $1 - \eps/8$;
	\item $S(\varphi^{X_{(i,\cdot)}} \| \mu) \leq \delta$.
	\item For all $j \in [k]$, $ I(X_{(i,j)} : Z_{-j})_{\varphi}  \leq \delta$, where $Z_{-j} = X'_{(\cdot,-j)} X_{(\cdot,-j)} E_{-j} A_{(\cdot,-j)}$.
\end{enumerate}
For now, we assume the existence of such a state $\ket{\varphi}$; we will construct it in Lemma~\ref{lem:state_construction}. We use Lemma~\ref{lem:superjpy} on the state $\varphi$ to obtain for each player $j$ a set of unitaries $\{U^j_{u} \}_{u \in \X_j}$ acting on $X_{(\cdot,j)} X'_{(\cdot,j)} E_j A_{(\cdot,j)}$ such that
\begin{align*}
		\Ex_{x_{(i,\cdot)} \leftarrow \varphi^{X_{(i,\cdot)}}}  \left[ K \left(  \varphi_{x_{(i,\cdot)}} ,\mathcal{U}_{x_{(i,\cdot)}} (\varphi) \right) \right] \leq 4 k \sum_j I(X_{(i,j)} : Z_{-j})_{\varphi}  \leq 4k^2 \delta,
\end{align*}
where we let $U_{x_{(i,\cdot)}} =  \bigotimes_j U^j_{x_{(i,j)}} $, and let $\mathcal{U}_{x_{(i,\cdot)}}$ be the CP map $\varphi \mapsto U_{x_{(i,\cdot)}} \varphi U_{x_{(i,\cdot)}}^\dagger$. The state $ \varphi_{x_{(i,\cdot)}}$ denotes $\varphi$ conditioned on $X_{(i,\cdot)} = x_{(i,\cdot)}$.

We now describe a protocol for the $k$ players to play game $G$. The players receive $u \in \X$, drawn from the product distribution $\mu$. Player $j$ receives $u_j \in \X_j$. The players share the state $\varphi$, where player $j$ has access to the $X_{(\cdot,j)}X_{(\cdot,j)}' E_j A_{(\cdot,j)}$ registers. 

\begin{figure}[H]
\begin{center}
\textbf{Protocol A} \\
\medskip
\framebox{
\begin{minipage}{0.9\textwidth}
	\textbf{Input}: $u \in \X$. Player $j$ receives $u_j$. \\
	\textbf{Preshared entanglement}: $\varphi$ \\
\textbf{Strategy for player $j$}:
\begin{enumerate}
	\item Apply the local unitary $U^j_{u_j}$ on the $X_{(\cdot,j)} X'_{(\cdot,j)} E_j A_{(\cdot,j)}$ registers of $\varphi$.
	\item Output the $A_{(i,j)}$ part of $\varphi$. 
\end{enumerate}

\end{minipage}
}

\end{center}
\end{figure}

Slightly overloading notation, we let $V^i_u$ denote the projector $\sum_{a \in \A : V(u,a) = 1} \ketbra{a}$ that acts on the $A_{(i,\cdot)}$ registers. Let $\kappa$ denote the winning probability of Protocol A. This is equal to
\begin{align*}
	\kappa &=  \Ex_{u \leftarrow \mu} \left \| V_u^i\,  U_{u} \, \ket{\varphi} \right \|^2\\
		 &\geq \Ex_{u \leftarrow \varphi^{X_{(i,\cdot)}}} \left \| V_u^i\,  U_{u} \, \ket{\varphi} \right \|^2 - 4\delta.
\end{align*}
where we use property \textbf{(B)} and appeal to Lemma~\ref{lem:brrrs-divergence}. Let $$\tau := \Ex_{u \leftarrow \varphi^{X_{(i,\cdot)}}} \left \| V_u^i\,  U_{u} \, \ket{\varphi} \right \|^2.$$
For every $i \in [n]$, $u \in \X$, define the quantum operation $\E_{i,u}$ that, given a state $\varphi$, measures the $A_{(i,\cdot)}$ registers using $V^i_{u}$ measurement, and outputs a classical binary random variable $F$ indicating the verification measurement outcome (outcome $1$ corresponds to ``accept'' and outcome $0$ corresponds to ``reject''). Let 
$$F_0 =  \Ex_{x_{(i,\cdot)} \leftarrow \varphi^{X_{(i,\cdot)}}} \E_{i,x_{(i,\cdot)}} \left (  \varphi_{x_{(i,\cdot)}} \right )\qquad \text{and} \qquad F_1 = 
\Ex_{u \leftarrow \varphi^{X_{(i,\cdot)}}} \E_{i,u} \left (  \mathcal{U}_{u} ( \varphi) \right).$$
Note that $\Pr(F_0 = 1) \geq 1 - \eps/8$ by our assumption on $\varphi$, and $\Pr(F_1 = 1) = \tau$. Then, 
\begin{align*}
	K(F_0,F_1) &= K \left( \Ex_{x_{(i,\cdot)} \leftarrow \varphi^{X_{(i,\cdot)}}} \E_{i,x_{(i,\cdot)}} \left (  \varphi_{x_{(i,\cdot)}} \right ), \Ex_{u \leftarrow \varphi^{X_{(i,\cdot)}}} \E_{i,u} \left (  \mathcal{U}_{u} ( \varphi) \right)\right) \\
	&\leq \Ex_{x_{(i,\cdot)} \leftarrow \varphi^{X_{(i,\cdot)}}}  K \left( \E_{i,x_{(i,\cdot)}} \left (  \varphi_{x_{(i,\cdot)}} \right ), \E_{i,x_{(i,\cdot)}} \left (  \mathcal{U}_{x_{(i,\cdot)}} ( \varphi) \right)\right) \qquad &\text{(Fact~\ref{fact:concavity_of_fidelity})} \\
	&\leq \Ex_{x_{(i,\cdot)} \leftarrow \varphi^{X_{(i,\cdot)}}}  K \left( \varphi_{x_{(i,\cdot)}} ,  \mathcal{U}_{x_{(i,\cdot)}} ( \varphi) \right) \qquad &\text{(Fact~\ref{fact:fidelity_contractivity})} \\
	&\leq 4k^2 \delta.
\end{align*}

By our assumption on $\delta$, this is at most $K(F_0,F_1) \leq \eps/8$. By Lemma~\ref{lem:brrrs-fidelity}, $\Pr(F_1 = 1) \geq 1 - \eps/8 - \eps/2$. Thus $\kappa \geq 1 - 3\eps/4$. But notice that Protocol A is a valid strategy for the game $G$; thus we have produced a strategy for game $G$ that wins with probability strictly greater than $1 - \eps$, a contradiction. Thus, it must be at $t = \Omega(\epsilon^{3/2} n/ (k^4 s))$, which establishes the theorem.
\end{proof}

\subsection{Construction of $\varphi$}

\begin{lemma}
\label{lem:state_construction}
There exists a state $\ket{\varphi}$, and a coordinate $i \in [n]$ satisfying properties \textbf{(A)}, \textbf{(B)}, and \textbf{(C)}.
\end{lemma}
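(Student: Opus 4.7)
The plan is to construct $\ket{\varphi}$ as the post-selected final state of a low-communication quantum protocol that coherently runs the optimal $G^{\otimes n}$ strategy and then searches for a losing coordinate, and then to verify properties \textbf{(A)}, \textbf{(B)}, \textbf{(C)} by an averaging argument over $i \in [n]$. The key technical ingredients are the Aaronson--Ambainis $3$-dimensional quantum search, the Nayak--Salzman bound on entanglement-assisted quantum communication, and Quantum Raz's Lemma.

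First I would define the locally-prepared starting state
$$
\ket{\psi^0} = \sum_{x \in \X^n} \sqrt{\mu^{\otimes n}(x)}\, \ket{x}^X \ket{x}^{X'} \otimes \sum_{a \in \A^n} \ket{\xi_{x,a}}^E \ket{a}^A,
$$
where $\ket{\xi_{x,a}}$ is the unnormalized post-measurement state of the optimal $G^{\otimes n}$ strategy. Claim~\ref{clm:uniform_simulation} lets me assume $\mu$ is uniform on $\X$, so each player can prepare her share of $\ket{\psi^0}$ without communication. On top of $\ket{\psi^0}$ the players then coherently execute the $3$-dimensional spatial quantum search of~\cite{aaronson2003quantum} over the $n$ coordinates, looking for some $i$ with $V(x_{(i,\cdot)}, a_{(i,\cdot)}) = 0$; the spatial locality avoids the $\Theta(\log n)$ index-address overhead a naive Grover implementation would pay per query, so the total communication is $T = \tilde O( ks\sqrt{n\,t})$ qubits, where $t := \log 1/\eval(G^{\otimes n})$, with amplification tuned so that the residual probability of accepting a configuration that contains a losing coordinate is at most $2^{-t}\eps/(16n)$. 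Let $\ket{\psi}$ denote the final joint state together with the ``accept/reject'' flag (accept $=$ no losing coordinate found), let $\lambda$ be the weight of the accept branch, and define $\ket{\varphi}$ as the normalized projection onto accept. Since the protocol acts as an isometry on the $XX'$ registers, $\ket{\varphi}$ has the desired form $\sum_x \sqrt{\nu(x)}\ket{xx}^{XX'}\otimes\ket{\varphi_x}^{EA}$; and since $\lambda \geq \eval(G^{\otimes n})(1-o(1)) \geq 2^{-t-1}$, we have $\log 1/\lambda \leq t+1$.

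For property \textbf{(A)}, the post-selected weight on configurations with a losing outcome at coordinate $i$ is at most $(2^{-t}\eps/(16n))/\lambda \leq \eps/(8n) \ll \eps/8$, uniformly over $i$. For property \textbf{(B)}, the pre-conditioned marginal $\psi^{X_{(i,\cdot)}}$ equals $\mu$ exactly, and post-selection on the weight-$\lambda$ accept flag adds at most $\log 1/\lambda \leq t+1$ to the relative entropy (Fact~\ref{fact:max_divergence} composed with contractivity), so $S(\varphi^X \| \mu^{\otimes n}) \leq t+1$; iterating the product chain rule (Fact~\ref{fact:divergence_split_rule}) against the product distribution $\mu^{\otimes n}$ gives $\sum_i S(\varphi^{X_{(i,\cdot)}} \| \mu) \leq t+1$. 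For property \textbf{(C)}, in the pre-communication state $\psi^0$ the column $X_{(\cdot,j)}$ is independent of the complementary registers $Z_{-j}$, so the Nayak--Salzman bound for entanglement-assisted two-way quantum communication~\cite{nayak2006limits} yields $I(X_{(\cdot,j)} : Z_{-j})_\psi \leq 2T$; combining this with the $\log 1/\lambda \leq t+1$ cost of post-selection (via Facts~\ref{fact:max_divergence} and~\ref{fact:relative_min_entropy_chain_rule2}) against the product reference $\psi^{X_{(\cdot,j)}} \otimes \psi^{Z_{-j}}$, whose $X_{(\cdot,j)}$-marginal is the uniform product $\mu_j^{\otimes n}$, Quantum Raz's Lemma (Lemma~\ref{lem:quantum_raz}) then yields $\sum_i I(X_{(i,j)} : Z_{-j})_\varphi \leq O(T + t)$.

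Summing the per-$i$ deficits from \textbf{(B)} and from \textbf{(C)} across all $k$ players gives a total of $O(k(T + t))$ spread over $n$ coordinates. Choosing $t = c\,\eps^{3/2} n/(k^4 s)$ for a sufficiently small absolute constant $c$ makes $T = \tilde O(ks \sqrt{nt})$, so the average per-coordinate deficit is much smaller than $\eps/(32 k^2)$; a Markov argument then picks out a coordinate $i$ where \textbf{(B)} and \textbf{(C)} both hold with $\delta < \eps/(32k^2)$, while \textbf{(A)} holds for every $i$. The main obstacle is calibrating the Grover amplification so that the search error survives the factor $1/\lambda = 2^t$ from post-selection without inflating $T$ past the regime where Nayak--Salzman is still useful; this is exactly where the quadratic speedup (as opposed to a classical $\Theta(n)$-bit checker) produces the improved base $1-\eps^{3/2}$ in place of $1-\eps^2$, and where the Aaronson--Ambainis spatial variant is essential to avoid an extra $\log n$ factor per query that would push $T$ out of the admissible range.
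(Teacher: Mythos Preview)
Your proposal has a genuine gap in the search step, and it is exactly the step that produces the $\eps^{3/2}$ (rather than $\eps^2$) base.

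You run the Aaronson--Ambainis search over \emph{all} $n$ coordinates and demand that it reject any configuration containing even a single losing coordinate, with failure probability at most $\eta \approx 2^{-t}$. But amplifying unstructured search on $n$ items to error $\eta$ costs $\Theta(\sqrt{n}\,\log(1/\eta))$ queries, not $\sqrt{n\log(1/\eta)}$; your stated $T=\tilde O(ks\sqrt{nt})$ is not achievable. With the correct cost $T=\Theta(ks\sqrt{n}\,t)$, the Nayak--Salzman / Quantum Raz step gives a per-coordinate mutual information of order $k\cdot ks\sqrt{n}\,t/n = k^2 s\,t/\sqrt{n}$, so forcing this below $\eps/k^2$ yields only $t=O(\eps\sqrt{n}/(k^4 s))$, a \emph{sub-exponential} bound on $\eval(G^{\otimes n})$. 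Even granting your (unjustified) $T=\tilde O(ks\sqrt{nt})$, the arithmetic $ks\sqrt{nt}\le O(\eps n/k^3)$ only gives $t=O(\eps^2 n/(k^8 s^2))$, i.e.\ base $1-\eps^2$, not $1-\eps^{3/2}$.

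The missing idea is that you do not need to catch every configuration with a losing coordinate; it suffices to catch those with at least an $\eps'=\Theta(\eps)$ \emph{fraction} of losing coordinates, since the remaining ``mostly winning'' configurations already satisfy property~\textbf{(A)} on a random $i$. The paper implements this by first sampling (via shared randomness) a small random multiset $C\subset[n]$ of size $h=O(\log(1/\eta)/\eps)$, publicly fixing the inputs $x_C$ on $C$, and then running the search only over $C$, broken into $q=O(\log(1/\eta))$ groups of size $m=O(1/\eps)$ with constant-error AA search on each group. The total communication is $T=O(q\sqrt{m}\cdot ks)=O(ks\,t/\sqrt{\eps})$, independent of $n$; plugging this into Nayak--Salzman and Quantum Raz gives per-coordinate mutual information $O(k^2 s\,t/(\sqrt{\eps}\,n))$, and setting this $\le \eps/k^2$ yields $t=O(\eps^{3/2}n/(k^4 s))$ as required. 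The final state $\ket{\varphi}$ is then $\ket{\varphi_{C,x_C}}$ for a good fixed choice of $(C,x_C,i)$ found by Markov, not a single global state.
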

\begin{proof}
Set $\eps' = \eps/32$, $\eta = 2^{-t} \eps/32k^2$, and $h = c' \log (1/\eta)/\eps'$ for some constant $c'$. We have $h =  (32c'/\eps)(t + \log (32k^2/\eps))$, and by our assumptions on $t$ and $n$, this is at most $n/2$.

Suppose there was a strategy to win the repeated game $G^{\otimes n}$ with probability $2^{-t}$, involving a shared state $\ket{\xi}^{E}$ (where $E$ is a $k$-partite state register) and measurements $\{M^{j,x_j}_a\}$ for the players, respectively. That is, player $j$, on input $x_j \in \X_j^{n}$, applies the measurement with POVM elements $\{M^{j,x_j}_a\}$ and reports the outcome. 

%
%

We will build the state $\varphi$ in steps. Consider the initial state
$$
	\ket{\psi^0} := \sum_{x\in \X^n} \sqrt{\mu^{\otimes n}(x)} \ket{xx}^{XX'} \otimes \sum_{a \in \A^n} \ket{\xi_{xa}}^{E} \otimes \ket{a}^A
$$
where $\ket{\xi_{xa}} = \left( \bigotimes_j \sqrt{M^{j,x_j}_{a_j}} \right) \ket{\xi}$ (which is a subnormalized state), and $\mu^{\otimes n}(x)$ is the probability distribution associated with the repeated game $G^{\otimes n}$. For every set $C \subset [n]$, and every fixing of the inputs $x_C$ to the coordinates indexed by $C$, define the state $\ket{\psi^0_{C,x_C}}$ to be $\ket{\psi^0}$ conditioned on $X_C = x_C$.



Now consider the following $k$-player communication protocol: for every set $C \subset [n]$ and every $x_C$, the players share the entangled state $\ket{\psi^0_{C,x_C}}$, where player $j$ has access to the registers $X_{(\cdot,j)} X'_{(\cdot,j)} E_j A_{(\cdot,j)}$. Using shared randomness, the players sample $h$ independent and uniformly random coordinates $C = \{i_1,\ldots,i_h \} \subset [n]$, and sample $x_C$ from the marginal distribution of $\mu^{\otimes n}$ on the subset $C$.  For the remainder of the protocol, the players perform all their operations on the shared state $\ket{\psi^0_{C,x_C}}$.

In the next phase of the protocol, the $k$ players communicate qubits to each other to determine whether they have won or lost the parallel repeated game $G^{\otimes n}$. In particular, they run a protocol to search for a coordinate $i \in C$ such that $V(x_{(i,\cdot)},a_{(i,\cdot)}) = 0$, if it exists -- call such a coordinate a losing coordinate. The state $\ket{\psi^0_{C,x_C}}$ becomes transformed to
$$
	\ket{\psi_{C,x_C}}^{XX'EAR} := \sum_{x\in \X^n} \sqrt{\mu^{\otimes n}(x|x_C)} \ket{xx}^{XX'} \otimes  \sum_{a \in \A^n}  \ket{\xi'_{C xa}}^{E} \otimes \ket{a}^{A} \otimes (\alpha_{C xa} \ket{1} + \beta_{C xa} \ket{0})^R
$$
where $\mu^{\otimes n}(x|x_C)$ is probability of $x$ conditioned on $x_C$, and $\ket{\xi'_{C xa}} = \ket{\xi_{xa}} \otimes \ket{w_{C xa}}$ with $\ket{w_{C xa}}$ denoting the workspace qubits that are used during the protocol. The coefficients $\alpha_{C xa}$ and $\beta_{C xa}$ denote the amplitude that the search protocol places on the flags ``No losing coordinates'' and ``Exists a losing coordinate'' respectively.

For now, we will abstract away from the particulars of this communication protocol and defer the details of it until later. The only things we will use about this search protocol is the following:
\begin{enumerate}
	\item The search protocol is run conditioned on $C$, and the $XA$ registers;
	\item At most $T = O(\sqrt{1/\eps'} \log (1/\eta) \log |\A|)$ qubits in total are exchanged between all parties, where $\A$ is the output alphabet in game $G$;
	\item For every fixing of $(x,a) \in \X^n \times \A^n$, if there are no coordinates $i \in [n]$ such that $V(x_{(i,\cdot)},a_{(i,\cdot)}) = 0$, then the search procedure reports ``No losing coordinates'' with certainty; and
	\item If there are at least an $\epsilon' n$ bad coordinates, then the search procedure reports ``No losing coordinates'' with probability at most $\eta$ (over the quantum randomness of the protocol, as well as over the choice of $C$). In other words, for tuples $(x,a)\in \X^n \times \A^n$ with $\Ex_i [V(x_{(i,\cdot)},a_{(i,\cdot)})] < 1 - \eps'$, 
$$
	\sum_C p(C) \, |\alpha_{C xa}|^2  \leq \eta,
$$
where $p(C)$ is the distribution that samples $h$ independent and uniformly random coordinates from $[n]$.
\end{enumerate}
For all $C$, $x_C$ define $\ket{\varphi_{C,x_C}}$ to be $\ket{\psi_{C,x_C}}$ conditioned on measuring $1$ in the $R$ register:
$$
	\ket{\varphi_{C,x_C}}^{XX'EAR} := \frac{1}{\sqrt{\lambda_{C,x_C}}} \sum_{x\in \X^n} \sqrt{\mu^{\otimes n}(x|x_C)} \ket{xx}^{XX'} \otimes  \sum_{a \in \A^n}  \ket{\xi'_{C xa}}^{E} \otimes \ket{a}^{A} \otimes (\alpha_{C xa} \ket{1}^R) 
$$
where $\lambda_{C,x_C}$ is for normalization. In the case that $\lambda_{C,x_C} = 0$ (meaning that we were trying to normalize the $0$ state), we leave the state undefined. Let $\psi^{CX_C}(C,x_C) = p(C) \mu^C(x_C)$ denote the joint probability distribution of the shared random variables $C$ and $X_C$, before conditioning. Let $\varphi^{CX_C}(C,x_C) = p(C) \mu^C(x_C) \lambda_{C,x_C}/\lambda$ denote the joint distribution conditioned on $R = 1$, where $\lambda = \sum_{C,x_C} \psi^{CX_C}(C,x_C) \lambda_{C,x_C}$.

\medskip
\noindent \textbf{(A). A random coordinate of $\varphi$ wins with high probability}. Let $$
	\rho = \Ex_{C,x_C \leftarrow \psi^{CX_C}} \left[ \ketbra{C} \otimes \ketbra{x_C} \otimes  \psi_{C,x_C} \right]\qquad \text{and} \qquad \sigma = \Ex_{C,x_C \leftarrow \varphi^{CX_C}} \left[ \ketbra{C} \otimes \ketbra{x_C} \otimes  \varphi_{C,x_C} \right].
$$
Observe that $\sigma$ is the post-measurement state of $\rho$ after measuring $\ket{1}$ in the $R$ register. Let $\E$ denote the quantum operation on that, (1) measures the $C$ register, (2) chooses a uniformly random $i \notin C$, (3) measures  $X_{(i,\cdot)}$ register, and (4) then conditioned on $X_{(i,\cdot)} = x_{(i,\cdot)}$, performs the binary verification measurement $V^i_{x_{(i,\cdot)}}$ defined in the previous section, setting an auxiliary register $Q$ to $\ket{1}$ if the measurement accepts, $\ket{0}$ if it rejects. We wish to argue that the probability that a measurement of the $Q$ register of $\E(\sigma)$ yields $1$ with high probability. This probability is equivalent to the probability the following process succeeds: first, measure the $XA$ registers of $\sigma$ to obtain a tuple $(x,a)$. Then, measure the $C$ register. Finally, select a random index $i \notin C$, and we succeed if $V(x_{(i,\cdot)},a_{(i,\cdot)}) = 1$.

In this alternative process, the probability that we measure $(x,a)$ in $\sigma$ such that $\Ex_{i \in [n]} [V(x_{(i,\cdot)},a_{(i,\cdot)})] < 1 - \eps'$ (call such $(x,a)$'s ``bad'') is equal to
$$
	\frac{1}{\lambda} \sum_{\text{$(x,a)$ bad}} \Pr_\rho (x,a) \sum_C p(C) |\alpha_{Cxa}|^2
$$
where $\Pr_\rho(x,a)$ is the probability of measuring measuring $(x,a)$ in $\rho$. By our assumption on the communication protocol, this is at most $\eta/\lambda$. Since the players' strategy wins the repeated game $G^{\otimes n}$ with probability $2^{-t}$, we have that $\lambda \geq 2^{-t}$. Thus the probability of measuring a bad $(x,a)$ is at most $2^t \eta$.

Now suppose we measure $(x,a)$ such that $\Ex_{i \in [n]} [V(x_{(i,\cdot)},a_{(i,\cdot)})] \geq 1 - \eps'$. Then, for any $C$, a random $i \notin C$ loses with probability at most $\eps' n/(n - |C|) \leq \eps' n/(n - h) \leq \eps/16$. Thus, the probability that the $Q$ register of $\E(\sigma)$ yields $0$ is at most $2^t \eta + \eps/16$. 

\medskip
\noindent \textbf{(B). Coordinate input distributions are mostly unaffected}. By Fact~\ref{fact:max_divergence}, since $\sigma \preceq 2^{\lambda} \rho$, we have
\begin{align}
	\log 1/\lambda &\geq S_\infty(\sigma \| \rho) \nonumber \\
				&\geq S(\sigma \| \rho) \nonumber \\
				&\geq  \Ex_{C,x_C \leftarrow \varphi^{CX_C}} S(\varphi_{C,x_C}^{XX'EA} \| \psi_{C,x_C}^{XX'EA}) \label{eq:grover_max_div},
\end{align}
where in the last line we used Fact~\ref{fact:divergence_chain_rule}. Using Facts~\ref{fact:divergence_contractivity} and~\ref{fact:divergence_split_rule}, we obtain that 

\begin{align*}
\log 1/\lambda &\geq \Ex_{C,x_C \leftarrow \varphi^{CX_C}} S(\varphi_{C,x_C}^{X} \| \psi_{C,x_C}^{X}) \\
			&\geq \Ex_{C,x_C \leftarrow \varphi^{CX_C}} \sum_{i \notin C} S(\varphi_{C,x_C}^{X_{(i,\cdot)}} \| \psi_{C,x_C}^{X_{(i,\cdot)}}) \\
			&= \Ex_{C,x_C \leftarrow \varphi^{CX_C}} \sum_{i \notin C} S(\varphi_{C,x_C}^{X_{(i,\cdot)}} \| \mu).
\end{align*}

\medskip
\noindent \textbf{(C). Mutual information is small}.  
\begin{claim}
\label{clm:relative_min_entropy}
Fix a $j\in [k]$, and fix a $C, x_C$. There exists a state $\sigma_{C,x_C}^{Z_{-j}}$ such that
$$
	S_\infty(\psi_{C,x_C}^{X_{(\cdot,j)} Z_{-j}} \| \psi_{C,x_C}^{X_{(\cdot,j)}} \otimes \sigma_{C,x_C}^{Z_{-j}}) \leq 2T,
$$
where $Z_{-j} = X_{(\cdot,-j)}X'_{(\cdot,-j)}  E_{-j} A_{(\cdot,-j)}$.
\end{claim}
We defer the proof of this claim for later, and will assume it for now. Line~\eqref{eq:grover_max_div} with Fact~\ref{fact:divergence_contractivity} implies that for all $j$, $\Ex_{C,x_C \leftarrow \varphi^{CX_C}} S(\varphi_{C,x_C}^{X_{(\cdot,j)} Z_{-j}} \| \psi_{C,x_C}^{X_{(\cdot,j)} Z_{-j}}) \leq \log 1/\lambda$. Using Fact~\ref{fact:relative_min_entropy_chain_rule2} with Claim~\ref{clm:relative_min_entropy}, we get that for all $j$, there exists a $\sigma_{C,x_C}^{Z_{-j}}$ such that
$$\Ex_{C,x_C \leftarrow \varphi^{C,x_C}} S(\varphi_{C,x_C}^{X_{(\cdot,j)} Z_{-j}} \| \psi_{C,x_C}^{X_{(\cdot,j)}} \otimes \sigma_{C,x_C}^{Z_{-j}}) \leq 2T + \log1/\lambda.$$ Using Quantum Raz's Lemma, we get
$$
	 \Ex_{C,x_C \leftarrow \varphi^{C,x_C}} \Ex_{i \in [n]} I(X_{(i,j)} : Z_{-j})_{\varphi_{C,x_C}} \leq 2(\log 1/\lambda + 2T)/n.
$$
By Markov's inequality, we have that there exists a $C, x_C, i \notin C$ such that
\begin{enumerate}
	\item Measuring the $X_{(i,\cdot)} A_{(i,\cdot)}$ register of $\varphi_{C,x_C}$ yields a tuple $(x_{(i,\cdot)},a_{(i,\cdot)})$ that satisfies $V(x_{(i,\cdot)},a_{(i,\cdot)}) = 1$ with probability at least $1 - \eps/8$.
	\item $S(\varphi_{C,x_C}^{X_{(i,\cdot)}} \| \mu)  \leq 32 t/n$.
	\item For all $j \in [k]$, $I(X_{(i,j)} : Z_{-j})_{\varphi_{C,x_C}} \leq 64 k(t + 2T)/n$.
\end{enumerate}
Let $\delta = 64 k(t + 2T)/n$. Let $s$ be the maximum number of qubits output by any one player in game $G$, so $ks \geq \log |\A|$. Then the total communication is $T = O(ks (t + \log k^2/\eps)/\sqrt{\eps})$. Then, if $t \leq c\epsilon^{3/2} n/ (k^4 s)$ for some universal constant $c$, we have $\delta \leq \eps/32k^2$. Let $\varphi = \varphi_{C,x_C}$. This yields the state and coordinate $i$ required.
\end{proof}

\subsection{The search protocol}

Next, we detail the search protocol used to construct $\ket{\psi}$ and $\ket{\varphi}$. We describe the protocol for a two-player game $G$; the extension to $k$ parties is straightforward.

Let $G = (\X \times \Y, \A \times \B, \mu, V)$ be a two-player free game, where $\X$ and $\Y$ are Alice and Bob's input alphabets, respectively, and $\A$ and $\B$ are their output alphabets. Consider the optimal strategy for $G^{\otimes n}$, where there is a shared state $\ket{\xi}^{E_A E_B}$ where on input $(x,y) \in \X^n \times \Y^n$, Alice and Bob apply measurements $\{M^x_a\}_{a \in \A^n}$ and $\{N^y_b\}_{b \in \B^n}$ respectively on their share of $\ket{\xi}$.

At the start of the search protocol, a multiset $C = \{i_1,\ldots,i_h\}$, $x_C \in \X^C$, and $y_C\in \Y^C$ are publically visible to Alice and Bob. They are both given the state
$$
	\ket{\psi^0_{C,x_C,y_C}} = \sum_{x \in \X^n,y\in\Y^n} \sqrt{\mu^{\otimes n}(x,y | x_C,y_C)} \ket{xxyy}^{XX'YY'} \ket{\xi}^{E_A E_B} \ket{0}^R
$$
where $\mu^{\otimes n}(x,y|x_C,y_C)$ is the distribution of $(x,y)$ conditioned on $x_C$, $y_C$.  Alice has access to registers $X X' E_A R$, and Bob has access to registers $E_B Y Y'$.

Then, Alice and Bob apply their measurements from the optimal strategy, controlled on the $X$ and $Y$ registers, respectively, to obtain
$$
	\ket{\psi^1_{C,x_C,y_C}} = \sum_{x \in \X^n,y\in\Y^n} \sqrt{\mu^{\otimes n}(x,y | x_C,y_C)} \ket{xxyy}^{XX'YY'} \sum_{a \in \A^n, b\in \B^n} \ket{\xi_{xyab}}^{E_A E_B} \ket{ab}^{AB} \ket{0}^R
$$
where $\ket{\xi_{xyab}} = (\sqrt{M^x_a} \otimes \sqrt{N^y_b}) \ket{\xi}$.

Alice and Bob then run a distributed search protocol controlled on the $XYAB$ registers. Fix $(x,y,a,b)$.  The protocol proceeds as follows: Alice and Bob divide the multiset $C$ into groups $D_1,\ldots,D_{q}$, each of size $m = \lceil 1/\eps' \rceil$. For each $\ell = 1,\ldots,q$,  Alice and Bob perform a distributed version of the Aaronson-Ambainis 3-dimensional search algorithm~\cite{aaronson2003quantum} to determine whether there is a coordinate $D_\ell$ contains a losing coordinate -- i.e., a coordinate $i \in D_\ell$ such that $V(x_i,y_i,a_i,b_i) = 0$. 

The search protocol for a group $D_\ell$ works as follows. Whenever the Aaronson-Ambainis algorithm is in the state $\sum_i \gamma_{i,z} \ket{i,z}$, where $\ket{i}$ corresponds to an index in $D_\ell$, and $\ket{z}$ is a qubit indicating whether a marked item has been found, the joint state between Alice and Bob will be $\sum_{i} \gamma_{i,z} \ket{i} \otimes \ket{z} \otimes \ket{i}$, where Alice holds the first $\ket{i}$ and $\ket{z}$, and Bob holds the second $\ket{i}$. Thus, Alice and Bob query locations are ``synchronized''. When Aaronson-Ambainis algorithm has to perform a query controlled on $\ket{i}$, Bob sends the qubit containing $\ket{b_i}$. Alice, controlled on $\ket{b_i}$, performs $\ket{z} \mapsto \ket{z \oplus V(x_i,y_i,a_i,b_i) \oplus 1}$ -- note that Alice can perform this, because in addition to $x_i$, $a_i$, and $b_i$, she also has access to $y_i$ because $y_C$ is public. We perform an additional XOR with $1$ because a ``marked item'' for the search algorithm corresponds to a \emph{losing} coordinate. Alice then sends back $\ket{b_i}$ to Bob. The other non-query transformations of the Aaronson-Ambainis algorithm are handled as in the the protocol described in~\cite{aaronson2003quantum}. Each step of the algorithm incurs at most $O(\log |\B|)$ qubits of communication, and there are $O(\sqrt{m})$ steps, resulting in $O(\sqrt{m} \log |\B|)$ qubits of total communication. If $D_\ell$ contains a losing coordinate, then this protocol will succeed in finding one with probability at least $2/3$.

If for at least one $\ell$, Alice and Bob find a losing coordinate for $G_\ell$, Alice sets the $R$ register to $0$; otherwise, it sets it to $1$.  Thus the total amount of communication of this protocol is $T = O(q\sqrt{m} \log |\B|) = O(\sqrt{1/\eps'} \log 1/\eta \log |\B|)$. The final state of the protocol looks like
$$
	\ket{\psi_{C,x_C,y_C}} = \sum_{x,y} \sqrt{\mu^{\otimes n}(x,y | x_C,y_C)} \ket{xxyy}^{XX'Y'Y} \sum_{a, b} \ket{\xi'_{xyab}}^{E'_A E'_B} \ket{ab}^{AB} (\alpha_{Cxyab} \ket{1}^R + \beta_{Cxyab} \ket{0}^R),
$$
where $\ket{\xi'_{xyab}} = \ket{\xi_{xyab}} \otimes \ket{w_{Cxyab}}$ with $\ket{w_{Cxyab}}$ denoting the workspace qubits of the two players.

Fix a setting of the registers $XYAB = (x,y,a,b)$. Suppose there was no $i \in [n]$ such that $V(x_i,y_i,a_i,b_i) = 0$. Then the search algorithm will never find a losing coordinate in any of the $G_\ell$'s, so for all $C$, the we have $\beta_{Cxyab} = 0$. On the other hand, suppose there were at least $\epsilon' n$ losing coordinates. We analyze, for a fixed $(x,y,a,b)$, the error quantity $\sum_C p(C) \, |\alpha_{Cxyab}|^2$. We can write $p(C) = \prod_\ell p(D_\ell)$, because each index in $C$ is chosen uniformly and independently at random. Furthermore, we can decompose $|\alpha_{Cxyab}|^2 = \prod_\ell |\alpha_{D_\ell xyab}|^2$, where $\alpha_{D_\ell xyab}$ is the probability amplitude that the Aaronson-Ambainis protocol does not find a losing coordinate in $D_\ell$. Thus the error quantity can be written as $\prod_\ell \sum_{D_\ell} p(D_\ell) |\alpha_{D_\ell xyab}|^2 = (\sum_D p(D) |\alpha_{Dxyab}|^2 )^q$. Each $D_\ell$ independently has at least  $1 - (1 - \eps')^{m} \geq 1 - 1/e$ probability of containing a losing coordinate. When $D_\ell$ has a losing coordinate, the Aaronson-Ambainis search protocol will succeed in finding it with probability at least $2/3$. Thus the error quantity is at most
\begin{align*}
	&\big ( \Pr(\text{$D$ contains losing coordinate}) \cdot (1/3) + \Pr(\text{$D$ does not have losing coordinate}) \cdot (1) \big)^q \\ 
	&\leq ( 1/3 + 1/e)^q \\
	&= \exp(-\Omega(q))= \eta.
\end{align*}
This establishes the requisite properties of the search protocol in the case of $k = 2$.

The extension to general $k$ parties is straightforward. At the beginning of the protocol, a multiset $C = D_1 \cdots D_q$ and inputs $x_C$ are publically visible to all players. They start with an analogous initial state $\ket{\psi_{C,x_C}}$, where each player $j$ has access to registers $X_{(\cdot,j)} X'_{(\cdot,j)} E_{j} A_{(\cdot,j)}$; player $1$ also has access to register $R$. They perform the distributed Aaronson-Ambainis protocol independently on all $D_\ell$. There are $k-1$ communication channels, one between the first player and all the other players. Whenever a query is to be made, player $j\in \{2,\ldots,k\}$ sends her answer symbol $a_{(i,j)}$ the first player, who then computes $V(x_{(i,\cdot)},a_{(i,\cdot)})$. The other non-query transformations of the algorithm are also easily extended to the multiplayer case. The total communication is $T = O(q\sqrt{m} \log |\A|) = O(\sqrt{1/\eps'} \log 1/\eta \log |\A|)$, where $\A$ is the output alphabet for all $k$ players.

\medskip
\noindent \textbf{Proof of Claim~\ref{clm:relative_min_entropy}}. Fix a $C, x_C$. Fix a player $j \in [k]$. Take the start state $\psi^0_{C,x_C}$ defined above (extended appropriately to $k$ players), and trace out the $X'_{(\overline{C},j)}$ register: $\theta^0_{C,x_C} = \tr_{X'_{(\overline{C},j)}} (\psi^0_{C,x_C})$. Since $\mu^{\otimes n}$ is a product distribution across players and also across game coordinates, we have that $\theta^0_{C,x_C} = U^{X_{(\overline{C},j)}} \otimes \phi^0_{C,x_C}$ where $U^{X_{(\overline{C},j)}}$ is the maximally mixed state for the register $X_{(\overline{C},j)}$, and
$$
\ket{\phi^0_{C,x_C}} = \ket{x_C}^{X_CX_C'}  \sum_{x_{(\cdot,-j)}} \sqrt{\mu_{-j}^{\otimes n}(x_{(\cdot,-j)}|x_C)} \ket{x_{(\cdot,-j)}x_{(\cdot,-j)}}^{X_{(\overline{C},-j)}X'_{(\overline{C},-j)}} \ket{\xi}^{E} \ket{0}^R $$
where $\mu_{-j}^{\otimes n}$ denotes the marginal distribution of $\mu^{\otimes n}$ on all players inputs, except for the $j$th player. Here, we used the simplifying assumption that $\mu$ is the uniform distribution. The search protocol described above never interacts with the $X'_{\overline{C}}$ registers. Thus, we can view the protocol as the $j$th player receiving a uniformly random input drawn from, $U^{X_{(\overline{C},j)}}$, and shares an entangled state $\phi^0_{C,x_C}$ with players $[k] - \{j\}$. The rest of the protocol is some two-way communication between player $j$ and every one else. 

We now wish to analyze the min-entropy of player $j$'s input register $X_{\overline{C},j}$ relative to the state of all other players. We appeal to the beautiful result of Nayak and Salzman~\cite{nayak2006limits}, whose theorem statement we reproduce here:

\begin{theorem}[\cite{nayak2006limits}]
\label{thm:nayak_salzman}
	Consider a communication protocol, without prior entanglement, where Alice receives a uniformly random $n$-bit input $X$, and interacts with Bob over a quantum communication channel. Let $\psi^{XB}$ be the final joint state of Alice's input $X$ and Bob's state in the protocol. Then, for any measurement strategy $\{M_x\}_x$ that Bob applies to his own state, the probability that Bob guesses Alice's input $X$ correct is at most $2^{2m_A}/2^n$, where $m_A$ is the number of qubits sent from Alice to Bob over the course of the protocol.
\end{theorem}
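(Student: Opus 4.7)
My plan is a three-step reduction: put the protocol in a standard coherent form, bound the dimension of Bob's $x$-dependent subspace by $2^{2m_A}$, and convert that dimension bound into the claimed guessing-probability bound.

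First, I would make the protocol coherent. By Stinespring dilation, every local operation can be replaced by an isometry on a workspace register plus a fresh ancilla, and any intermediate measurements can be deferred to the end. The only non-unitary step is then Bob's final guessing POVM $\{M_x\}_{x \in \{0,1\}^n}$. Purifying the uniform input $X$ with a copy register $X'$, the joint final state is a pure state $|\Psi\rangle^{X' X A B}$, and $\rho_x^B$ denotes Bob's conditional state when Alice's input is $x$.

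Second, the success probability satisfies
\[
p \;=\; \frac{1}{2^n} \sum_x \tr(M_x \rho_x^B) \;\le\; \frac{\dim\bigl(\mathrm{span}\{\rho_x^B : x\}\bigr)}{2^n},
\]
because $\sum_x M_x \preceq \I$ on that support. So it suffices to bound the dimension of the subspace of $B$ actually reachable by Bob's conditional states. I would prove, by induction on the rounds, that this dimension is at most $2^{2m_A}$. The key observations are: when Bob sends a qubit to Alice, Bob's $x$-dependent support is unchanged, because with no prior entanglement Bob's register carries no information about $X$ beyond what Alice has already sent; and when Alice sends a qubit to Bob, the dimension of Bob's $x$-dependent support can grow by at most a factor of $4$, since the arriving qubit together with the Alice-side register that it is entangled with (via previous Bob-to-Alice messages) can deliver up to $2$ classical bits in the style of superdense coding.

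The main obstacle is making the factor-of-$4$ step rigorous. The clean route is to track, round by round, the Kraus operators of the effective channel from Alice's purification of $X$ to Bob's register: each Alice-to-Bob qubit multiplies the Choi rank by at most $4$, one factor of $2$ from the qubit itself and another from the Alice-side system entangled with Bob through earlier exchanges. After $m_A$ such qubits the Choi rank is at most $4^{m_A} = 2^{2m_A}$, so every $\rho_x^B$ lies in a subspace of that dimension and the theorem follows. This bound ``entanglement-assisted classical capacity of $m_A$ qubits is at most $2 m_A$ bits'' is the content of the Cleve--DiVincenzo / Holevo--Schumacher--Westmoreland arguments adapted to an interactive setting, and it is the technical heart of the Nayak--Salzman theorem.
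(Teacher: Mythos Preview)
The paper does not prove this theorem; it is quoted from \cite{nayak2006limits} and used as a black box (the surrounding text only rephrases the conclusion in terms of relative min-entropy via \cite{konig2009operational}). There is therefore no proof in the paper to compare your attempt against.

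On your sketch itself: steps one and two are fine, but step three has a real gap. The quantity you propose to track --- the dimension of the common support of $\{\rho_x^B\}_x$, equivalently the rank of $\bar\rho^B=\frac{1}{2^n}\sum_x\rho_x^B$ --- does not obey the inductive behavior you claim. When Bob sends a qubit to Alice, his reduced state undergoes a partial trace, which can \emph{double} its rank (tracing out half of an EPR pair takes rank $1$ to rank $2$), so Bob-to-Alice rounds are not free for this quantity. Running the induction honestly yields only $d\le 2^{m_A+m_B}$ and hence $p\le 2^{m_A+m_B}/2^n$, strictly weaker than the Nayak--Salzman bound whenever $m_B>m_A$. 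Your fallback to ``Choi rank'' does not repair this: for the cq-channel $x\mapsto\rho_x^B$ the Choi rank is $\sum_x\mathrm{rank}(\rho_x^B)$, which is neither the quantity in your step-two bound nor one that ignores Bob's messages. Getting the exponent to depend on $m_A$ alone and not on $m_B$ is precisely the nontrivial content of the Nayak--Salzman theorem, and your outline does not supply that argument.
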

We now rephrase their theorem to use relative min-entropy instead of guessing probabilities. Let $\alpha$ be the optimal guessing probability for Bob. Then, the \emph{quantum conditional min-entropy} $\Hmin(X | B)_\psi$ is defined to be $-\log \alpha$. However, by SDP duality, we have the alternative characterization that $\Hmin(X|B)_\psi = -\inf_{\sigma^B} S_\infty (\psi^{XB} \| \I^X \otimes \sigma^B)$~\cite{konig2009operational}. Let $\sigma^B$ be a state achieving this infimum. Then $\log \alpha = S_\infty (\psi^{XB} \| \I^X \otimes \sigma^B) = S_\infty(\psi^{XB} \| \frac{1}{2^n} \I^X \otimes \sigma^B) - n$. By the theorem of Nayak and Salzman, $\log \alpha \leq 2m_A - n$, so $S_\infty(\psi^{XB} \| \frac{1}{2^n} \I^X \otimes \sigma^B) = S_\infty(\psi^{XB} \| \psi^X \otimes \sigma^B)  \leq 2m_A$, where we used the fact that $\psi^X$ is the uniform distribution.

To apply this theorem to our setting, we can treat player $j$ as ``Alice'' and the rest of the players as ``Bob''. Alice exchanges at most $T$ qubits with Bob. The crucial component of the Nayak-Salzman theorem is that Bob's probability of guessing does not depend on how many qubits he sent to Alice! Thus, we can imagine that in the beginning of the protocol he sent the $E_j$ register of the shared entangled state $\phi^0_{C,x_C}$ to Alice first. We have that there exists a state $\sigma_{C,x_C}^{Z_{-j}}$ such that
\begin{align*}
2T &\geq S_\infty (\psi_{C,x_C}^{X_{(\cdot,j)} Z_{-j}} \| \psi_{C,x_C}^{X_{(\cdot,j)}} \otimes \sigma_{C,x_C}^{Z_{-j}}).
\end{align*}

\section{Parallel repetition for free CQ games}
\label{sec:cq-proof}

\subsection{The model}
Here we introduce the model of $k$-player classical-quantum (CQ) games. A $k$-player classical-quantum (CQ) game $G$ is a tuple $(\X, A, \mu, \{V_x\}_{x \in \X})$, where 
	\begin{enumerate}
		\item $\X = \X_1 \times \X_2 \times \cdots \times \X_k$ with each $\X_j$ being a finite alphabet; 
		\item $A = A_1 \otimes A_2 \otimes \cdots \otimes A_k$, with each $A_j$ being a finite-dimensional complex Hilbert space;		
		\item $\mu$ is a probability distribution over $\X$; 
		\item For each $x \in \X$, $0 \preceq V_x \preceq \I$ is a positive semidefinite operator that acts on the space $A$.
	\end{enumerate}

In a $k$-player CQ game $G = (\X,A,\mu,\{V_x\})$, the referee will sample a tuple of inputs $x = (x_1,\ldots,x_k) \in \X$ from the distribution $\mu$, and send question $x_j$ to player $j$. Player $j$ will apply a local unitary on her part of a shared entangled state, and send her qubits in the space $A_j$ to the referee. The referee then performs the binary measurement $\{V_x, \I - V_x\}$ on the players' answers, and accepts if the outcome corresponding to $V_x$ is observed. In the case that the referee accepts, we say the players win the game $G$. 

A \emph{strategy} for a CQ game $G$ is a shared state $\ket{\xi}^{EA}$ (where $E$ and $A$ are $k$-partite spaces split between the $k$ players), and for each player $j$ a set of unitaries $\{U^j_{x_j}\}_{x_j \in \X_j}$, which act on the space $E_j A_j$. On input $x_j$, player $j$ applies the unitary $U^j_{x_j}$ to the $E_j A_j$ registers of $\ket{\xi}$, and then sends the $A_j$ register to the referee. The \emph{entangled value} of a CQ game $G$ is defined as the maximum probability a referee will accept over all possible (finite-dimensional) strategies for $k$ players:
$$
	\eval (G) = \max_{ \ket{\xi}, \{\{U^j_{x_j}\}_{x_j} \}_j} \Ex_{x \leftarrow \mu} \left[ \left\| \sqrt{V_x} U^1_{x_1} \otimes \cdots \otimes U^k_{x_k} \ket{\xi} \right \|^2 \right].
$$

The $n$-fold repetition of a CQ game $G = (\X,A,\mu,\{V_x\})$ is denoted by $G^{\otimes n} = (\X^n, B, \mu^{\otimes n}, \{W_{\vx}\}_{\vx \in \X^n} )$, where: $B$ is the tensor product of $n$ isomorphic copies of $A$; $\mu^{\otimes n}(\vx) = \prod_i \mu(\vx_i)$; and $W_{\vx} = \bigotimes_{i \in [n]} V^i_{\vx_i}$ with $V^i_{\vx_i}$ denoting the $V_{\vx_i}$ POVM element acting on the $i$th copy of $A$ in $B$.

The model of CQ games is a strict generalization of the standard notion of games with entangled players, where the inputs and outputs of the players are classical, and the verification predicate is some function of the inputs and outputs. 

\medskip
\noindent \textbf{Using quantum search with CQ-games?} One would hope that the analysis given in the previous section, which uses fast quantum search, would carry over to the setting of CQ games. However, there is an obstacle to this, which we do not know how to overcome: the first is that the analysis above (specifically, arguing property \textbf{(A)} in Lemma~\ref{lem:state_construction}) uses the fact that the $XA$ registers, in the standard basis, are either winning or not winning. In the CQ game case, we cannot say definitively whether the state in the $XA$ registers is a winning state or not, because the verifier measurement can be a general POVM. To perform the search protocol above, the players would have to measure their answer registers using the verifier measurement, but it is not clear whether the post-measurement state is actually \emph{useful} for the reduction in Theorem~\ref{thm:grover_pr}; in other words, property \textbf{(A)} may not necessarily hold. We leave it as an interesting open problem for whether one can prove an analogue of Theorem~\ref{thm:grover_pr} for CQ games.

Instead, will prove our parallel repetition theorem for CQ games using the approach of~\cite{JainPY14}, where we iteratively build a large collection of coordinates $i_1,\ldots,i_m$ such that $\Pr(\text{Win $i_k$} | \text{Win $i_1,\ldots,i_{k-1}$}) \leq 1 - \eps/2$ for all $k \leq m$. In the analysis, for each $k$, we only have to condition on winning coordinates $i_1,\ldots,i_{k-1}$, and thus we can leave the answer register for coordinate $i_k$ unmeasured, which gives us a useful advice state for the reduction.

\subsection{The proof}

%
Let $G = (\X,A,\mu,\{V_u\}_{u\in \X})$ be a $k$-player free CQ-game. Consider the repeated game $G^{\otimes n}$, and let $C \subseteq [n]$. Fix a strategy $\mathcal{S}$ for $G^{\otimes n}$; then $\Pr(\text{Win $C$})$ is the probability that, under $\mathcal{S}$, the answers of the $k$ players in coordinates indexed by $C$ pass the referee's verification procedure, over a random input $x$ drawn from $\mu^{\otimes n}$. 

\begin{lemma}
	Let $G = (\X,A,\mu,\{V_u\}_{u\in \X})$ be a $k$-player free CQ-game such that $\eval(G) = 1 - \eps$, and let $n \geq 1$. Let $s = \max_j \log \dim(A_j)$. Fix a strategy for the repeated game $G^{\otimes n}$. There exists a constant $c$ such that for all $C \subseteq [n]$ such that $|C| \leq c \eps n/s$, either $\Pr(\text{Win $C$}) < 2^{-c \eps n/k^2}$, or there exists an $i \notin C$ such that
	$$
		\Pr (\text{Win $i$} \big | \text{Win $C$}) \leq 1 - \eps/2.
	$$
\end{lemma}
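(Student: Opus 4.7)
The plan is to follow the three-step template of the proof of Theorem~\ref{thm:grover_pr}---build a post-selected state $\varphi$ from the fixed $G^{\otimes n}$-strategy, extract advice states for a single copy of $G$, and round the advice to a true strategy via Lemma~\ref{lem:superjpy}---replacing only the Grover sub-protocol by the simpler checking sub-protocol sketched in Section~\ref{sec:outline}. I will argue by contradiction: assume that $\lambda \defeq \Pr(\text{Win } C) \geq 2^{-c\eps n/k^2}$ and yet $\Pr(\text{Win } i \mid \text{Win } C) > 1-\eps/2$ for every $i \notin C$, and then construct a strategy for $G$ winning with probability strictly greater than $1-\eps$, contradicting $\eval(G) = 1-\eps$.

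First, the players use shared randomness to sample $x_C$ from the marginal of $\mu^{\otimes n}$ on $C$, then coherently prepare
\[
  \ket{\psi^{0}_{x_C}} \;=\; \ket{x_C x_C}^{X_C X_C'} \otimes \sum_{\bfx_{\bar C}}\sqrt{\mu^{\otimes (n-|C|)}(\bfx_{\bar C})}\,\ket{\bfx_{\bar C}\bfx_{\bar C}}^{X_{\bar C}X_{\bar C}'}\otimes \Bigl(\bigotimes_{j}U^{j}_{\bfx_{(\cdot,j)}}\Bigr)\ket{\xi}^{EA};
\]
each player does this locally because $\mu$ is product and $x_C$ is public. Next, every player $j \geq 2$ sends her answer registers $A_{(C,j)}$ to player $1$; player $1$ applies the Naimark dilation of the POVM $\{W_{x_C}, \I - W_{x_C}\}$ with $W_{x_C} = \bigotimes_{i \in C} V_{x_{(i,\cdot)}}$, records the outcome in a flag register $R$, and returns the borrowed registers. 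Because $x_C$ is public and need not be communicated, the total quantum communication involving each player $j$ is $T_j = O(|C|s)$ qubits. Let $\ket{\varphi_{x_C}}$ denote $\ket{\psi^{0}_{x_C}}$ post-selected on $R = 1$; its squared norm is $\Pr(\text{Win } C \mid x_C)$, averaging over $x_C$ to $\lambda$.

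Next, I need to verify, for some $i \notin C$ and some $x_C$ drawn from the posterior, the three properties needed to apply the rounding lemma. Property (A)---that measuring $X_{(i,\cdot)}$ and then $V_{\bfx_{(i,\cdot)}}$ on $A_{(i,\cdot)}$ of $\varphi_{x_C}$ accepts with probability $> 1-\eps/2$---is immediate because the sub-protocol acts only on $A_C R$ and thus commutes with the coordinate-$i$ verifier; on average over the posterior of $x_C$ and uniform $i \notin C$, this probability equals $\Pr(\text{Win } i \mid \text{Win } C) > 1-\eps/2$. Property (B), the smallness of $S(\varphi_{x_C}^{X_{(i,\cdot)}} \| \mu)$ on average, follows exactly as in Section~\ref{sec:grover_pr}: Fact~\ref{fact:max_divergence} gives $S_\infty(\varphi \| \psi^0) \leq \log(1/\lambda)$, which splits across the $n - |C|$ uncontacted coordinates via Facts~\ref{fact:divergence_contractivity} and~\ref{fact:divergence_split_rule}. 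Property (C), the bound on $I(X_{(i,j)} : Z_{-j})_{\varphi_{x_C}}$, is obtained via a direct analog of Claim~\ref{clm:relative_min_entropy}: treating player $j$ as ``Alice'' in Theorem~\ref{thm:nayak_salzman} against the coalition of the other $k-1$ players, the $T_j$ qubits she exchanges yield $S_\infty(\psi^{X_{(\bar C,j)}Z_{-j}} \| \psi^{X_{(\bar C,j)}} \otimes \sigma^{Z_{-j}}) \leq 2T_j$, which combined with Fact~\ref{fact:relative_min_entropy_chain_rule2} and Quantum Raz's Lemma (Lemma~\ref{lem:quantum_raz}) gives $\sum_{i \notin C} I(X_{(i,j)} : Z_{-j})_{\varphi_{x_C}} \leq 2(2T_j + \log(1/\lambda))$ on average over $x_C$. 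Under the hypotheses $|C| \leq c\eps n/s$ and $\log(1/\lambda) \leq c\eps n/k^2$, Markov's inequality and a union bound over $(x_C, i, j)$ extract a specific choice at which $\delta \defeq \max_j I(X_{(i,j)} : Z_{-j})_{\varphi_{x_C}} = O(\eps/k^2)$ while (A) and (B) still hold up to constant factors.

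Applying Lemma~\ref{lem:superjpy} at this coordinate will yield local unitaries $\{U^j_{x_j}\}$ whose rounded state differs from the ideal advice by squared-Bures error at most $4k^2\delta = O(\eps)$; by contractivity of the squared Bures metric under the verifier measurement (Fact~\ref{fact:fidelity_contractivity}) and Lemma~\ref{lem:brrrs-fidelity}, the resulting strategy for $G$ wins with probability at least $1 - \eps/2 - O(\eps) > 1 - \eps$ for $c$ sufficiently small, the desired contradiction. The main obstacle I anticipate is the communication accounting: one must verify that sampling $x_C$ by shared randomness genuinely removes the $\log|\X_j|$ overhead from $T_j$ (so that $|C| \leq c\eps n/s$, with $s$ the output dimension only, really suffices), and that the asymmetric sub-protocol---in which only player $1$ performs the check---admits a symmetric Nayak--Salzman analysis from the vantage point of each player $j$, which it does by regrouping ``player $j$ vs.\ everyone else'' as a single bipartite cut whose total quantum communication is $T_j$.
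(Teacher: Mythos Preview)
Your proposal is essentially the paper's proof: define $\psi$ (the repeated-game state with the $C$-verifier applied), post-select to get $\varphi$, use Nayak--Salzman plus Quantum Raz's Lemma for Property~(C), and round via Lemma~\ref{lem:superjpy}. The only structural difference is that you extract a single $(x_C,i)$ by Markov (as in the Grover proof of Lemma~\ref{lem:state_construction}), whereas the paper keeps the average and has the players sample $(x_C,i)$ with shared randomness in Protocol~B; both routes work.

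One wrinkle in your communication accounting: your claim that regrouping the fixed ``player~1 checks'' protocol along the cut ``player~$j$ vs.\ everyone else'' gives $T_j = O(|C|s)$ for every $j$ is false for $j=1$, since player~1 must return the $(k-1)|C|s$ borrowed answer qubits across that cut. The paper avoids this by noting that the state $\psi_{x_C}$ is defined abstractly and, for each $j$ separately, exhibits a protocol tailored to that $j$ (Claim~\ref{clm:relative_min_entropy2}): player~$j$ simply sends her own $A_{(C,j)}$ ($\leq |C|s$ qubits) to the coalition, which performs the check. Since only the Alice-to-Bob direction enters Theorem~\ref{thm:nayak_salzman}, this yields the uniform bound $2\log\dim(A_{(C,j)}) \leq 2|C|s$ for every $j$. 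With that correction your argument goes through with the stated parameters.
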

\begin{proof} 
Again, we utilize Claim~\ref{clm:uniform_simulation} (which also holds for CQ games) to assume without loss of generality that the input distribution to $G$ is uniform. Consider a strategy for game $G^{\otimes n}$ where the shared state between the $k$ players is $\ket{\xi}^{EA}$ (where $A$ is a $n\times k$-partite register, and $E$ is a $k$-partite register) and each player $j \in [k]$ possesses a set of unitaries $\{W^j_{x_{(\cdot,j)}} \}_{x_{(\cdot,j)}}$ where, on input $x_{(\cdot,j)} \in \X_j^n$, player $j$ applies $W^j_{x_{(\cdot,j)}}$ to the $E_j A_{(\cdot,j)}$ registers of the shared state $\ket{\xi}$. For all $x \in \X^n$, let $\ket{\xi_x}^{EA} = \left( \bigotimes_j W^j_{x_{(\cdot,j)}} \right) \ket{\xi}^{EA}$.

Let $C \subseteq [n]$. Let
$$
	\ket{\psi}^{XX'EAR} := \sum_{x \in \X^n} \sqrt{\mu^{\otimes n}(x)} \ket{xx}^{XX'} \otimes \sum_{r \in \{0,1\}^C} \sqrt{V^r_{x_C}} \ket{\xi_x} ^{EA} \otimes \ket{r}^R,
$$
where $X$, $X'$ are $n\times k$-partite registers, and $V^r_{x_C}$ denotes $\bigotimes_{i \in C: r_i = 1} V^i_{x_{(i,\cdot)}} \bigotimes_{i \in C: r_i = 0} (\I - V^i_{x_{(i,\cdot)}})$, where by $V^i_u$ for $u \in \X$, we mean the POVM element $V_u$ acting on the registers $A_{(i,\cdot)}$. 

Let $\lambda = \Pr(\text{Win $C$})$, the probability of obtaining outcome $1^{|C|}$ when measuring the $R$ register of $\psi$ in the standard basis -- call this event $\mathcal{W}$. Let
$$
	\ket{\varphi}^{XX'EAR} := \frac{1}{\sqrt{\lambda}} \sum_{x \in \X^n} \sqrt{\mu^{\otimes n}(x)} \ket{xx}^{XX'} \otimes \sqrt{V_{x_C}} \ket{\xi_x} ^{EA} \otimes \ket{1}^R,
$$
where $V_{x_C} = \bigotimes_{i \in C} V^i_{x_{(i,\cdot)}}$.

Let $S \subseteq [n]$. When we write a state such as $\varphi_{x_S}$ (or $\psi_{x_S}$), we mean the pure state $\ket{\varphi}$ (or $\ket{\psi}$) conditioned on $X_S$ (and $X'_S$) register being equal to $x_S$.

If $\lambda < 2^{-c \eps n/k^2}$, we are done. Otherwise, assume that $\lambda \geq 2^{-c \eps n/k^2}$. Let $\overline{C} = [n] - C$. Note that $\lambda \varphi^{X'_{\overline{C}}XEA} + (1 - \lambda) \theta = \psi^{X'_{\overline{C}}XEA}$ for some state $\theta$. Then,
	\begin{align} 
		\log 1/\lambda &\geq S_\infty (\varphi^{X'_{\overline{C}}XEA} \| \psi^{X'_{\overline{C}}XEA}) \qquad &\text{(Fact~\ref{fact:max_divergence})} \nonumber \\
					&\geq S (\varphi^{X'_{\overline{C}}XEA} \| \psi^{X'_{\overline{C}}XEA}) \nonumber \\
					&\geq \Ex_{x_C \leftarrow \varphi^{X_C}} \left [S (\varphi_{x_C}^{X'_{\overline{C}}XEA} \| \psi_{x_C}^{X'_{\overline{C}}XEA}) \right] \qquad &\text{(Fact~\ref{fact:divergence_chain_rule})} \label{eq:div_bound}.
	\end{align}		

We first show that the distribution of individual coordinates $i \notin C$, conditioned on inputs in $C$ and the event $\mathcal{W}$, is only affected by a small amount (on average). Starting from line~\eqref{eq:div_bound},
\begin{align}
			\log 1/\lambda	&\geq \Ex_{x_C \leftarrow \varphi^{X_C}} \left [ S(\varphi_{x_C}^{X} \| \psi_{x_C}^X) \right] \qquad &\text{(Fact~\ref{fact:divergence_contractivity})} \nonumber \\
				&\geq \sum_{i \notin C}\Ex_{x_C \leftarrow \varphi^{X_C}} \left [  S (\varphi_{x_C}^{X_{(i,\cdot)}} \| \psi^{X_{(i,\cdot)}}) \right], \qquad &\text{(Fact~\ref{fact:divergence_split_rule})} \label{eq:x_i_div}
	\end{align}		
where we used the fact that $\psi_{x_C}^{X_{(i,\cdot)}} = \psi^{X_{(i,\cdot)}}$. 

Now we argue that, for every player $j$ and for most coordinates $i \notin C$, conditioning on inputs in $C$ and the event $\mathcal{W}$ does not introduce much correlation between $X_{(i,j)}$ (i.e. player $j$'s input for the $i$th coordinate) and the other players' quantum states and inputs.  
\begin{claim}
\label{clm:relative_min_entropy2} For all $j\in [k]$, and for all $x_C$, there exists a state $\sigma_{x_C}^{Z_{-j}}$ such that
	$$S_\infty (\psi_{x_C}^{X_{(\overline{C},j)} Z_{-j}} \| \psi^{X_{(\overline{C},j)}} \otimes \sigma_{x_C}^{Z_{-j}}) \leq 2 \log \dim(A_{(C,j)}),$$
where $Z_{-j} = X'_{(\overline{C},-j)} X_{(\cdot,-j)} E_{-j} A_{(\cdot,-j)}$. 
\end{claim}
We defer the proof of this claim for later. For now, we assume it. Then, using Fact~\ref{fact:relative_min_entropy_chain_rule} with line~\eqref{eq:div_bound} and Claim~\ref{clm:relative_min_entropy2}: 
\begin{align*}
\log 1/\lambda + 2\log \dim(A_{(C,j)}) \geq \Ex_{x_C \leftarrow \varphi^{X_C}} S (\varphi_{x_C}^{X_{(\overline{C},j)}  Z_{-j}} \| \psi^{X_{(\overline{C},j)}} \otimes \sigma_{x_C}^{Z_{-j}}).
\end{align*}
The states $\varphi_{x_C}^{X_{(\overline{C},j)}Z_{-j}}$ and $\psi^{X_{(\overline{C},j)}} \otimes \sigma_{x_C}^{Z_{-j}}$ satisfy the conditions of Quantum Raz's Lemma, and we get 
\begin{align}
	2 \left( \log 1/\lambda + 2\log \dim(A_{(C,j)}) \right) \geq \Ex_{x_C \leftarrow \varphi^{X_C}} \left[  \sum_{i \notin C} I(X_{(i,j)} : Z_{-j})_{\varphi_{x_C}} \right]. \label{eq:mutual_inf}
\end{align}
Lines~\eqref{eq:x_i_div} and~\eqref{eq:mutual_inf} yield:
\begin{enumerate}[label=(\roman*)]
	\item\label{item:x_i_div} $\Ex_{i \in \overline{C}} \Ex_{x_C \leftarrow \varphi^{X_C}} \left [ S (\varphi_{x_C}^{X_{(i,\cdot)}} \| \psi^{X_{(i,\cdot)}}) \right ] \leq (\log 1/\lambda)/|\overline{C}|$,
	\item\label{item:mutual_inf} For all $j \in [k]$, $\Ex_{i \in \overline{C}} \Ex_{x_C \leftarrow \varphi^{X_C}} \left[  I(X_{(i,j)} : Z_{-j})_{\varphi_{x_C}} \right] \leq 2 \left( \log 1/\lambda + 2\log \dim(A_{(C,j)}) \right)/|\overline{C}|$.
\end{enumerate}
In the above, the index $i$ is chosen uniformly at random from $\overline{C}$. Let $s = \max_j \log \dim(A_j)$, and let $\delta = 2( \log 1/\lambda + 2|C| s)/|\overline{C}|$. For each setting of $x_C$ and $i$ we use Lemma~\ref{lem:superjpy} on the pure state $\varphi_{x_C}$ to obtain for each player $j$ a set of unitaries $\{U^j_{i,x_C, u} \}_{u \in \X_j}$ such that
\begin{align*}
		&\Ex_{i \in \overline{C}} \Ex_{x_C \leftarrow \varphi^{X_C}}\Ex_{x_{(i,\cdot)} \leftarrow \varphi_{x_C}^{X_{(i,\cdot)}}}   \left[ K \left( \varphi_{x_C x_{(i,\cdot)}}  ,\mathcal{U}_{i,x_C, x_{(i,\cdot)}} (\varphi_{x_C}) \right) \right] \\ &\leq \Ex_{i \in \overline{C}} \Ex_{x_C \leftarrow \varphi^{X_C}} \left[ 4 k \sum_j I(X_{(i,j)} : Z_{-j})_{\varphi_{x_C}} \right ] \\
		&\leq 4k^2 \delta.
\end{align*}
where we let $U_{i,x_C,x_{(i,\cdot)}} =  \bigotimes_j U^j_{i,x_C,x_{(i,j)}} $, and let $\mathcal{U}_{i,x_C, x_{(i,\cdot)}}$ be the CP map that maps $\varphi \mapsto U_{i,x_C,x_{(i,\cdot)}} \varphi U_{i,x_C,x_{(i,\cdot)}}^\dagger$.

We now describe a protocol for the $k$ players to play game $G$. The players receive $u \in \X$, drawn from the product distribution $\mu$. Player $j$ receives $u_j \in \X_j$. For each value of $x_C$, the players share the state $\varphi_{x_C}^{XX'_{\overline{C}}EA}$. Note that these states are still pure, because we have conditioned on specific settings of $X_C$. Player $j$ has access to the $X_{(\cdot,j)}X_{(\cdot,j)}' E_j A_{(\cdot,j)}$ part of each state. The players also have access to common shared randomness.



\begin{figure}[H]
\begin{center}
\textbf{Protocol B} \\
\medskip
\framebox{
\begin{minipage}{0.9\textwidth}
	\textbf{Input}: $u \in \X$. Player $j$ receives $u_j$. \\
	\textbf{Preshared entanglement}: $\{\varphi_{x_C}^{XX'_{\overline{C}}EA}\}_{x_C}$ \\
\textbf{Strategy for player $j$}:
\begin{enumerate}
	\item Use shared randomness to pick an index $i \in \overline{C}$ uniformly at random.
	\item Use shared randomness to sample an $x_C \leftarrow \varphi^{X_C}$.
	\item Apply the local unitary $U^j_{i,x_C, u_j}$ on the $X_{(\cdot,j)} X'_{(\cdot,j)} E_j A_{(\cdot,j)}$ registers of $\varphi_{x_C}$.
	\item Output the $A_{(i,j)}$ part of $\varphi_{x_C}$. 
\end{enumerate}

\end{minipage}
}

\end{center}
\end{figure}

We now relate the winning probability of this protocol with the quantity $\omega = \Ex_{i \notin C} \Pr \left( \text{Win $i$}  \big | \text{Win $C$} \right)$. First, observe that for $i \notin C$:
\begin{align*}
\Pr \left( \text{Win $i$}  \big | \text{Win $C$} \right) &= \frac{1}{\lambda} \Pr \left( \text{Win $C \cup \{i\}$} \right) \\
&= \frac{1}{\lambda} \Ex_{x \leftarrow \mu^{\otimes n}} \left \| \sqrt{V_{x_{(i,\cdot)}}^i} \sqrt{V_{x_C}} \ket{\xi_x} \right \|^2 \\
&= \left \| \sqrt{ V_{x_{(i,\cdot)}}^i} \ket{\varphi} \right \|^2 \\
&= 	\Ex_{x_{(i,\cdot)} \leftarrow \varphi^{X_{(i,\cdot)}}} \left \| \sqrt{V_{x_{(i,\cdot)}}^i}  \ket{\varphi_{x_{(i,\cdot)}}} \right \|^2 .
\end{align*}
Let $\kappa$ denote the winning probability of Protocol B. This is equal to
\begin{align*}
	\kappa &=  \Ex_{u \leftarrow \mu} \Ex_{i} \Ex_{x_C \leftarrow \varphi^{X_C}} \left \| \sqrt{V_u^i}   U_{i,x_C, u} \, \ket{\varphi_{x_C}} \right \|^2\\
		&=  \Ex_{i} \Ex_{x_C \leftarrow \varphi^{X_C}} \Ex_{u \leftarrow \mu}  \left \| \sqrt{V_u^i}   U_{i,x_C, u} \, \ket{\varphi_{x_C}} \right \|^2 \\
		 &\geq \Ex_i \Ex_{x_C \leftarrow \varphi^{X_C}} \Ex_{u \leftarrow \varphi_{x_C}^{X_{(i,\cdot)}}} \left \| \sqrt{V_u^i}   U_{i,x_C, u} \, \ket{\varphi_{x_C}} \right \|^2 - 4\delta.
\end{align*}
where we use line~\eqref{eq:x_i_div} and appeal to Lemma~\ref{lem:brrrs-divergence}. Let $$\tau := \Ex_i \Ex_{x_C \leftarrow \varphi^{X_C}} \Ex_{u \leftarrow \varphi_{x_C}^{X_{(i,\cdot)}}} \left \| \sqrt{V_u^i}   U_{i,x_C, u} \, \ket{\varphi_{x_C}} \right \|^2.$$

For every $i \in [n]$, $u \in \X$, define the quantum operation $\E_{i,u}$ that, given a state $\varphi$, measures the $A_{(i,\cdot)}$ registers using $V^i_{u}$ measurement, and outputs a classical binary random variable $F$ indicating the verification measurement outcome (outcome $1$ corresponds to ``accept'' and outcome $0$ corresponds to ``reject''). Let 
$$F_0 =  \Ex_i \Ex_{x_{(i,\cdot)} \leftarrow \varphi^{X_{(i,\cdot)}}} \E_{i,x_{(i,\cdot)}} \left (  \varphi_{x_{(i,\cdot)}} \right ) =  \Ex_i  \Ex_{x_C \leftarrow \varphi^{X_C}} \Ex_{x_{(i,\cdot)} \leftarrow \varphi_{x_C}^{X_{(i,\cdot)}}} \E_{i,x_{(i,\cdot)}} \left ( \varphi_{x_C, x_{(i,\cdot)}}  \right ),$$
and let 
$$F_1 = 
\Ex_i \Ex_{x_C \leftarrow \varphi^{X_C}}  \Ex_{u \leftarrow \varphi_{x_C}^{X_{(i,\cdot)}}} \E_{i,x_{(i,\cdot)}} \left ( \mathcal{U}_{i,x_C, u} ( \varphi_{x_C})  \right).$$ 
Note that $\Pr(F_0 = 1) = \omega$, and $\Pr(F_1 = 1) = \tau$.

By the convexity of the squared Bures metric (Fact \ref{fact:fidelity_contractivity}) and Lemma~\ref{lem:superjpy},
\begin{align*}
	K(F_0,F_1) &\leq  \Ex_i \Ex_{x_C} \Ex_{x_{(i,\cdot)} \leftarrow \varphi_{x_C}^{X_{(i,\cdot)}}} \left[ K \left(\E_{i,x_{(i,\cdot)}} \left ( \varphi_{x_C, x_{(i,\cdot)}} \right ),\E_{i,x_{(i,\cdot)}} \left ( \mathcal{U}_{i,x_C, x_{(i,\cdot)}} ( \varphi_{x_C}) \right) \right)\right] \\
	&\leq \Ex_i \Ex_{x_C} \Ex_{x_{(i,\cdot)} \leftarrow \varphi_{x_C}^{X_{(i,\cdot)}}}  \left[ K \left( \varphi_{x_C, x_{(i,\cdot)}} ,\mathcal{U}_{i,x_C, x_{(i,\cdot)}} ( \varphi_{x_C}  \right)\right]  \\
	&\leq 4k^2 \delta .
\end{align*}

By assumption, $\log 1/\lambda < c \eps n/k^2$, and $|\overline{C}| \geq (1 - c\eps/s)n$. Thus, 
\begin{align*}
\delta &\leq \frac{2\gamma n}{|\overline{C}|} + 4\frac{(n - |\overline{C}|)s}{|\overline{C}|} \\
	  &\leq 4c \eps /k^2 + 8c \eps.
\end{align*}
 By choosing a small enough constant $c$, we can get that $\delta \leq \eps/24$ and $K(F_0,F_1) \leq \eps/6$. If $\omega \leq 1 - \eps/6$, we are done. Otherwise, by Lemma~\ref{lem:brrrs-fidelity}, $\Pr(F_1 = 1) \geq \omega - \eps/6$.

This means that $\kappa \geq \omega - \eps/6 - 4\delta \geq \omega - \eps/2$. On the other hand we have $\kappa \leq \eval(G)$, so thus $\omega \leq \eval(G) + \eps/2$ -- and hence by averaging there exists an $i \notin C$ such that  $\Pr \left( \text{Win $i$}  \big | \text{Win $C$} \right) \leq \eval(G) + \eps/2$. This concludes the proof.
\end{proof}

\begin{theorem}
	$\eval(G^{\otimes n}) \leq (1 - \eps^2)^{\Omega(n/sk^2)}$.
\end{theorem}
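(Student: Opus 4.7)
The plan is to iteratively apply the preceding lemma to build up a growing subset $C \subseteq [n]$ of ``hard'' coordinates, driving the winning probability down by a factor of $(1 - \eps/2)$ at each step, until either the subset gets too large or the winning probability becomes tiny. This mirrors the iterated conditioning approach of~\cite{JainPY14}.

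Concretely, let $\sigma$ be an optimal strategy for $G^{\otimes n}$, so that $\eval(G^{\otimes n}) = \Pr(\text{Win }[n])$ under $\sigma$. For any $C \subseteq [n]$, monotonicity gives $\Pr(\text{Win }C) \geq \eval(G^{\otimes n})$. Let $c$ be the constant from the preceding lemma, and let $M = \lfloor c\eps n/s \rfloor$. I will construct a chain $\emptyset = C_0 \subsetneq C_1 \subsetneq \cdots \subsetneq C_m$ with $|C_k| = k$, as follows. At step $k$, assuming $k \leq M$, apply the lemma to $C_{k-1}$. If the lemma's first alternative holds, i.e.\ $\Pr(\text{Win }C_{k-1}) < 2^{-c\eps n/k^2}$, halt with $m = k-1$. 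Otherwise, the lemma yields some $i_k \notin C_{k-1}$ with $\Pr(\text{Win }i_k \mid \text{Win }C_{k-1}) \leq 1 - \eps/2$, and I set $C_k = C_{k-1} \cup \{i_k\}$. If the process never halts early, $m = M$. In either case, chaining the conditional-probability bound gives
\[
\Pr(\text{Win }C_m) = \prod_{k=1}^{m} \Pr(\text{Win }i_k \mid \text{Win }C_{k-1}) \leq (1 - \eps/2)^{m},
\]
and if the process halted early we additionally have $\Pr(\text{Win }C_m) < 2^{-c\eps n/k^2}$.

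Now split into two cases. If the process ran to completion with $m = M = \lfloor c\eps n/s\rfloor$, then
\[
\eval(G^{\otimes n}) \leq \Pr(\text{Win }C_M) \leq (1 - \eps/2)^{c\eps n/s} \leq (1 - \eps^2)^{\Omega(n/s)} \leq (1 - \eps^2)^{\Omega(n/(sk^2))},
\]
using the standard inequality $(1 - \eps/2)^{2/\eps} \leq e^{-1} \leq 1 - \eps^2/2$ followed by the trivial bound $k^2 \geq 1$. If instead the process halted early, then $\eval(G^{\otimes n}) \leq 2^{-c\eps n/k^2}$. Since $\eps \leq 1$ and $s \geq 1$, a direct comparison of exponents shows $2^{-c\eps n/k^2} \leq (1 - \eps^2)^{\Omega(n/(sk^2))}$, completing the bound in either branch.

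The argument is essentially a bookkeeping exercise on top of the preceding lemma, so I do not expect a real obstacle; the only subtlety is verifying that the two possible termination regimes (dimension-bounded and probability-bounded) both fit within the single expression $(1 - \eps^2)^{\Omega(n/(sk^2))}$. The dimension-bounded branch is the generically tight one when $\eps$ is not too small, while the $k^2$ in the exponent comes from the early-termination branch inherited from the lemma's $2^{-c\eps n/k^2}$ floor, which in turn traces back to the $k^2\delta$ loss in the $k$-player quantum strategy rounding lemma (Lemma~\ref{lem:superjpy}).
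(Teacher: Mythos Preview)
Your proof is correct and follows essentially the same iterated-conditioning approach as the paper: build up $C$ one coordinate at a time using the preceding lemma, and split into the ``early termination'' case (giving the $2^{-c\eps n/k^2}$ bound) versus the ``ran to completion'' case (giving the $(1-\eps/2)^{\Omega(\eps n/s)}$ bound). One small notational complaint: you use $k$ both for the step index in the chain $C_0 \subsetneq C_1 \subsetneq \cdots$ and for the number of players appearing in the bound $2^{-c\eps n/k^2}$, which is potentially confusing; renaming the step index would clean this up.
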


\begin{proof}
	For any subset $C \subseteq [n]$, we have that $\eval(G^{\otimes n}) \leq \Pr(\text{Win $C$})$. We construct a $C$ iteratively as follows. As long as there exists an index $i \notin C$ such that $\Pr (\text{Win $i$} \big | \text{Win $C$}) \leq 1 - \eps/2$, we add it to $C$. If at some point $\Pr (\text{Win $C$}) \leq 2^{-c \eps n/k^2}$, we are done, because this is at most $(1 - \eps)^{\Omega(n/k^2)}$. Otherwise, if we cannot find any such $i$'s to add, then it must be that $|C| > c \eps n/s$. But then, by Bayes' rule, we have that $\Pr (\text{Win $C$}) \leq (1 - \eps/2)^{|C|} \leq (1 - \eps^2)^{\Omega(n/s)}$. 
	
	In either case, $\eval(G^{\otimes n}) \leq (1 - \eps^2)^{\Omega(n/sk^2)}$.
\end{proof}

\medskip
\noindent \textbf{Proof of Claim~\ref{clm:relative_min_entropy2}}. Fix a player $j$. We need to present a low-cost communication protocol between player $j$ and all other players to produce the state $\ket{\psi_{x_C}}$. We will call player $j$ ``Alice'' and the rest of the players collectively as ``Bob''. 

The input $x_C$ to coordinates in $C$ is publically visible to both Alice and Bob. Alice and Bob start with the initial state
$$
	\ket{\psi^0_{x_C}} = \sum_{x} \sqrt{\mu^{\otimes n}(x | x_C)} \ket{xx}^{XX'} \otimes \ket{\xi}^{EA} \otimes \ket{0}^R,
$$
where Alice has registers $X_{(\cdot,j)} X_{(\cdot,j)}' E_j A_{(\cdot,j)} R$, and Bob has all other registers. Alice and Bob then locally apply the strategy $\mathcal{S}$ for $G^{\otimes n}$ that we used above, where Alice applies player $j$'s strategy, and Bob applies everybody else's. The state then becomes
$$
	\ket{\psi^1_{x_C}} = \sum_{x} \sqrt{\mu^{\otimes n}(x | x_C)} \ket{xx}^{XX'} \otimes \ket{\xi_x}^{EA} \otimes \ket{0}^R.
$$
Bob sends the answer qubits $A_{(C,-j)}$ to Alice, who performs the verification measurement $V_{x_C}$ on the $A_{(C,\cdot)}$ registers, and stores the outcomes in the $R$ register. Finally, Alice sends back registers $A_{(C,-j)}$ back to Bob. The final state of the protocol is precisely $\ket{\psi_{x_C}}$.

As in Claim~\ref{clm:relative_min_entropy}, this protocol never interacts with the $X'$ register. Thus, we can view the protocol as Alice receiving a uniformly random input $X_{(\overline{C},j)}$, Bob sending part of some shared state $\phi_{x_C}$ to Alice, and carrying out the rest of the protocol above. 

The total amount of communication sent from Alice to Bob in this protocol is the number of qubits in $A_{(C,-j)}$. Using Theorem~\ref{thm:nayak_salzman}, we get that there exists a state $\sigma^{Z_{-j}}_{x_C}$ such that 
	$$S_\infty (\psi_{x_C}^{X_{(\overline{C},j)} Z_{-j}} \| \psi^{X_{(\overline{C},j)}} \otimes \sigma_{x_C}^{Z_{-j}}) \leq 2 \log \dim(A_{(C,\cdot)}).$$

\section{A lower bound}
\label{sec:lower_bound}

In this section we demonstrate a $k$-player free CQ game $G$ such that $\val(G) = \eval(G) = 1/2$, but $\eval(G^{\otimes n}) = \val(G^{\otimes n}) = 1/2$ for $n \leq k$. This implies that in general, the dependence of the exponent of parallel repetition on the number of players is necessary (for both the classical and entangled value). This is a generalization of Feige's ``non-interactive agreement'' example of a $2$-player game $F$ such that $\val(F^{\otimes 2}) = \val(F)$.

\begin{theorem}
	There exists a $k$-player free CQ game $G$ and $n > 1$ such that $\eval(G^{\otimes n}) = \val(G^{\otimes n}) \geq \eval(G)^{n/k} = \val(G)^{n/k}$.
\end{theorem}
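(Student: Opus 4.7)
My plan is to construct $G$ as a $k$-player classical generalization of Feige's ``non-interactive agreement'' game. Each of the $k$ players receives an independent uniformly random bit $x_j \in \{0,1\}$ and outputs a pair $(\ell_j, b_j)$ with $\ell_j \in [k]$ a label and $b_j \in \{0,1\}$ a bit. The referee accepts iff all labels agree on some common $\ell^* \in [k]$, and a label-specific predicate $P_{\ell^*}$ holds. The predicates are defined with a rotational symmetry so that different labels single out different players as the ``designated guesser'' who must predict some specified other player's input; for instance, $P_\ell$ can require $b_\ell = x_{(\ell \bmod k)+1}$, so that label $\ell$ makes player $\ell$ the reader of the next player's bit. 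Since the game is classical, $\eval(G) = \val(G)$ will hold automatically, and the value is determined by the difficulty of the reader's guess.

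For the parallel repetition side, I would take $n = k$ (or a small multiple of $k$) and propose the following strategy for $G^{\otimes k}$: in coordinate $i \in [k]$, all players commit to label $\ell_i = i$, so that each player serves as the designated guesser in exactly one coordinate. In her designated coordinate, each player outputs a specific function of her own inputs from the \emph{other} coordinates, mimicking Feige's swap trick. For $k = 2$ this recovers Feige exactly: the two coordinates' winning conditions reduce to the single event $\{x_1^{(2)} = x_2^{(1)}\}$ of probability $1/2 = \val(G)$, giving $\val(G^{\otimes 2}) \geq \val(G)^{2/2}$. For general $k$, the $k$ winning conditions become equalities of the form $x_{a(i)}^{(b(i))} = x_{b(i)}^{(a(i))}$ indexed by the rotational structure of the predicates, which (by design) share enough variables that the joint event has probability at least $\val(G)^{n/k}$.

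To verify the bound I would analyze the hypergraph of winning conditions directly. The single-game value $\val(G)$ is the probability that one block of conditions is satisfied; once the strategy in $G^{\otimes k}$ is arranged so that the $k$ blocks (one per coordinate) collapse --- via shared left-hand-side and right-hand-side variables --- into a single connected constraint system of the same effective rank as a single block, the joint winning probability matches $\val(G)$, i.e.\ $\val(G)^{k/k}$. Iterating the construction over $m$ disjoint groups of $k$ coordinates then extends to $n = mk$ and gives $\val(G^{\otimes mk}) \geq \val(G)^m = \val(G)^{n/k}$.

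The hard part will be making the condition-collapse actually work for $k \geq 3$: the naive cyclic generalization of Feige produces $k$ pairwise-disjoint equalities (joint probability $2^{-k}$), which is insufficient once $\val(G) = 1/2$. I expect the fix to be one of two things --- either enrich the predicates $P_\ell$ so that the $k$ winning equalities become linearly dependent (for example, by having each $P_\ell$ involve XORs of several input bits, so that the sum of the $k$ block-equations is identically zero, forcing the rank to drop from $k$ to $k-1$ or lower), or else choose the game so that $\val(G)$ is intrinsically small enough (e.g.\ $\val(G) = 2^{-(k-1)}$) that the achievable joint probability from the independent-events analysis already matches $\val(G)^{n/k}$. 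Either route reduces the proof to a direct counting argument on the variable-sharing graph of the resulting conditions.
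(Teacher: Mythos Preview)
Your overall architecture matches the paper's: a $k$-player generalization of Feige's non-interactive agreement game, with $n=k$ and the strategy that assigns label $\ell$ to all players in coordinate $\ell$. However there are two genuine issues.

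\textbf{The entangled-value bound is not automatic.} Your assertion that ``since the game is classical, $\eval(G)=\val(G)$ will hold automatically'' is false in general (e.g.\ CHSH). The theorem requires $\eval(G)=\val(G)$ and $\eval(G^{\otimes n})=\val(G^{\otimes n})$, so you must actually \emph{prove} $\eval(G)\le 1/2$. The paper does this by upper-bounding the \emph{non-signaling} value: it shows $\val_{ns}(G)\le 1/2$ directly from the non-signaling constraints, and then uses the chain
\[
\tfrac12 \;\le\; \val(G^{\otimes k}) \;\le\; \eval(G^{\otimes k}) \;\le\; \eval(G) \;\le\; \val_{ns}(G) \;\le\; \tfrac12
\]
to force all equalities. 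Your plan has no analogue of this step.

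\textbf{The condition-collapse.} You correctly diagnose that the naive cyclic predicate $b_\ell = x_{(\ell\bmod k)+1}$ fails for $k\ge 3$, and your first proposed fix (XOR predicates to force linear dependence among the $k$ block equations) is exactly the right idea. The paper's specific choice is the predicate ``$x_i = \bigoplus_{j\ne i} a_j$'' together with the strategy that in coordinate $\ell$ player $j$ outputs the pair $(\ell,\, x_j^j)$. Then the winning condition in coordinate $\ell$ reads $x_\ell^\ell = \bigoplus_{j\ne\ell} x_j^j$, i.e.\ $\bigoplus_j x_j^j = 0$, which is the \emph{same} single event for every $\ell$ and occurs with probability $1/2$. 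So the rank drops not from $k$ to $k-1$ but all the way to $1$. Your second proposed fix (lowering $\val(G)$ to $2^{-(k-1)}$) does not work as stated: with $n=k$ you would still need the joint winning probability to be at least $\val(G)$ itself, and $k$ independent probability-$1/2$ events give only $2^{-k}<2^{-(k-1)}$.
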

\begin{proof}
Consider the $k$-player game $G$ where each player $j$ receives uniformly random bit $x_j$ independent of the others. Each player $j$ has to output a pair $(i_j,a_j) \in [k] \times \{0,1\}$. The players win iff there exists an $i$ such that $i = i_1 = \cdots = i_k$, and $x_i = a^{\oplus}_{-i}$, the parity of the bits in the set $\{a_1,\ldots,a_k\} - \{a_i\}$.

To prove this theorem it suffices to show that the non-signaling value of $G$, $\val_{ns}(G)$, is $1/2$, and the classical value of the game, $\val(G^{\otimes n})$, is still $1/2$. This is because, for any game $F$, $\val_{ns}(F) \geq \eval(F) \geq \val(F)$~\cite{buhrman2013parallel}. 

It is clear that $\val_{ns}(G) \geq \eval(G) \geq \val(G) \geq 1/2$; a classical strategy achieving this is for every player to deterministically output $(1,0)$. We now show that $\val_{ns}(G) \leq 1/2$. Let $\mathcal{A} = [k] \times \{0,1\}$. A non-signaling strategy for $G$ is a conditional probability distribution $p((i_1,a_1),\ldots,(i_k,a_k) | x_1,\ldots,x_k)$, that satisfies the following conditions: for all subsets $I \subseteq [k]$, the complement $J = [k] - I$, 
$$
	\sum_{\alpha_J \in \mathcal{A}^J} p(\alpha_I, \alpha_J | x_I, x_J) = \sum_{\alpha_J \in \mathcal{A}^J} p(\alpha_I, \alpha_J | x_I, x'_J) \qquad \text{for all $x_J, x'_J \in \{0,1\}^J$ and $\alpha_I \in \mathcal{A}^I$},
$$
where for a set $S$,  $\alpha_S$ is a set of tuples $(i,a)$ indexed by elements in $S$, and $x_S$ indicates a set of bits indexed by elements of $S$. 

The probability that strategy $p$ wins game $G$ is
\begin{align*}
	&\Ex_{x_1,\ldots,x_k} \left[ \sum_{i,i'} \sum_{\substack{a_1,\ldots,a_k\\ a^{\oplus}_{-i} = x_i}} p((i,a_1),\ldots,(i',a_i),\ldots,(i,a_k)|x_1,\ldots,x_k) \right] \\
	&=\Ex_{x_1,\ldots,x_k} \left[  \Ex_{y} \sum_{i} \sum_{\substack{a_1,\ldots,a_k\\ a^{\oplus}_{-i} = y}} \sum_{i',b_i}  p((i,a_1),\ldots,(i',b_i),\ldots,(i,a_k)|x_1,\ldots,y,\ldots,x_k) \right]
\end{align*}
Applying the non-signaling constraints to the sum $ \sum_{i} \sum_{\substack{a_1,\ldots,a_k\\ a^{\oplus}_{-i} = y}} \sum_{i',b_i} p(\cdots)$ for when $y = 0$, then we get
\begin{align*}
 & \sum_{i} \sum_{\substack{a_1,\ldots,a_k\\ a^{\oplus}_{-i} = 0}} \sum_{i',b_i}  p((i,a_1),\ldots,(i',b_i),\ldots,(i,a_k)|x_1,\ldots,0,\ldots,x_k) \\  
  &= \sum_{i} \sum_{\substack{a_1,\ldots,a_k\\ a^{\oplus}_{-i} = 0}} \sum_{i',b_i}  p((i,a_1),\ldots,(i',b_i),\ldots,(i,a_k)|x_1,\ldots,1,\ldots,x_k)  \\
 &\leq 1 - \sum_{i} \sum_{\substack{a_1,\ldots,a_k\\ a^{\oplus}_{-i} = 1}} \sum_{i',b_i}  p((i,a_1),\ldots,(i',b_i),\ldots,(i,a_k)|x_1,\ldots,1,\ldots,x_k).
\end{align*}
But this implies that the non-signaling game value is at most $1/2$.

Now consider the repeated game $G^{\otimes k}$. We now give a strategy for the players such that $\val(G^{\otimes k}) = 1/2$. Therefore $\eval(G^{\otimes k})$ (and $\val_{ns}(G^{\otimes k})$) is at least $1/2$.

In the repeated game, each player $j$ receives a uniformly random vector of inputs $(x^1_j,x^2_j,\ldots,x^k_j)$. For the $\ell$'th repetition, player $j$ will output the pair $(\ell,x^j_j)$. The probability that the players win the first coordinate is $1/2$. Conditioned on the first coordinate winning, we have that $x^1_1 \oplus x^2_2 \oplus \cdots \oplus x^k_k = 0$. But then this ensures that the players win the rest of the coordinates with certainty.
\end{proof}

\section{Open problems}

We conclude with a variety of open problems.
\begin{enumerate}
	\item Is it possible to extend the Grover search analysis to handle CQ games? 
	
	\item Is \emph{strong parallel repetition} possible with the entangled value of free games? In other words, can the base of $1 - \eps^{3/2}$ of Theorem~\ref{thm:grover_pr_informal} be improved to $1 - \eps$? 
	
	\item Is the base of $1 - \eps^{2}$ for the repeated \emph{classical} value of free games tight? If so, this would mean that there is a separation of classical and quantum parallel repetition for free games.
	
	\item It was shown by~\cite{feige2002error} that the dependence on the output alphabet size, for classical parallel repetition, is necessary -- even for free games. However, Holenstein showed the repeated game value for non-signaling games has no such alphabet dependence~\cite{holenstein2007parallel}. Is this dependence necessary for the quantum case?

	\item Parallel repetition is but one way to amplify hardness of games. XOR lemmas and direct product threshold theorems give other ways of amplifying hardness. Can one use this communication protocol perspective to give unified proofs of these hardness amplification results?
	
	
	\item Can we identify an interesting class of games for which we can prove improved parallel repetition theorems, by designing efficient communication protocols to generate advice states?
		
	\item The mantra, ``Better parallel repetition theorems from better communication protocols,'' suggests an intriguing connection between games and communication protocols. Although games are protocols that forbid communication between the players, one can define the \emph{communication complexity of a game} as the minimum communication needed for the players to determine whether they have won or lost the game. Our mantra suggests a relationship between the value and communication complexity of a game. What is the nature of this relationship? 
	
	\item Can one use these techniques to prove parallel repetition for entangled games with an arbitrary input distribution?

\end{enumerate}

\medskip
\vspace{2em}
\noindent \textbf{Acknowledgments}. We thank Scott Aaronson for helpful comments about the Aaronson-Ambainis algorithm. We also thank Thomas Vidick and Andre Chailloux for helpful comments on earlier versions of this paper. HY is supported by an NSF Graduate Fellowship Grant No. 1122374 and National Science Foundation Grant No. 1218547. XW is funded by ARO contract W911NF-12-1-0486 and by the NSF Waterman Award of Scott Aaronson. Part of this research was conducted while XW was a Research Fellow, and HY a visiting graduate student, at the Simons Institute for the Theory of Computing, University of California, Berkeley.

\bibliographystyle{alphaabbrvprelim}
\bibliography{bibliography}


\appendix 

\section{$k$-player parallel repetition for classical games}
\label{sec:classical_pr}


Let $G = (\X,\A,\mu,V)$ be a $k$-player free game. Consider the repeated game $G^{\otimes n}$, and consider a classical strategy for the repeated game. Without loss of generality, a strategy for the $k$ players is set of functions $\{h_j: \X_j^n \to \A_j^n\}$, whereupon input $x_j \in \X_j^n$, player $j$ outputs $h_j(x_j)$. We note that this proof follows that of~\cite{barak2009strong} very closely. While this result is considered folklore, we include it for the sake of completeness. 

\begin{lemma}
	Let $G = (\X,\A,\mu,V)$ be a $k$-player free game such that $\val(G) = 1 - \eps$, and let $n \geq 1$. Let $s$ denote the maximum length of a single player's output. Fix a classical strategy $\{h_j\}$ for the repeated game $G^{\otimes n}$. There exists a constant $c > 0$ such that for all $C \subseteq [n]$ such that $|C| \leq c \eps n/sk$, either $\Pr(\text{Win $C$}) < 2^{-c \eps n}$, or there exists an $i \notin C$ such that
	$$
		\Pr (\text{Win $i$} \big | \text{Win $C$}) \leq 1 - \eps/2.
	$$
\end{lemma}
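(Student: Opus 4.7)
The plan is to mimic the proof strategy of the CQ-game lemma of Section~\ref{sec:cq-proof}, greatly simplified in the classical setting where shared randomness and classical sampling take the place of quantum communication and unitary rounding. Assume for contradiction that $\lambda := \Pr(\text{Win } C) \geq 2^{-c\eps n}$ while $\Pr(\text{Win } i \mid \text{Win } C) > 1 - \eps/2$ for every $i \notin C$. From this I will construct a strategy for single-shot $G$ winning with probability strictly greater than $1-\eps = \val(G)$, the desired contradiction. As in the CQ proof, Claim~\ref{clm:uniform_simulation} justifies taking $\mu$ to be uniform WLOG.

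Let $\nu$ denote $\mu^{\otimes n}$ conditioned on the event $\text{Win } C$. Since $G$ is free, $\mu^{\otimes n}$ factorizes over the $nk$ input symbols $\{X_{(i,j)}\}$, and Fact~\ref{fact:max_divergence} yields $S(\nu \| \mu^{\otimes n}) \leq \log(1/\lambda) \leq c\eps n$. Applying the classical specialization of Lemma~\ref{lem:quantum_raz} once per player $j$, taking the ``classical'' register to be $X_{(\overline C, j)}$ and treating the remaining coordinates $X_C X_{(\overline C, -j)}$ as the (classical) ``$A$'' register, yields
$$
    \sum_{j \in [k]} \sum_{i \in \overline C} I_\nu\!\left(X_{(i,j)};\, X_C X_{(\overline C, -j)}\right) \;\leq\; 2 k \log(1/\lambda).
$$
Together with the subadditivity identity $\sum_{i \in \overline C} S(\nu^{X_{(i,\cdot)}} \| \mu) \leq S(\nu \| \mu^{\otimes n}) \leq c\eps n$ and the fact that $|C| \leq c \eps n/(sk)$ forces $|\overline C| \geq n/2$, averaging over $i \in \overline C$ and choosing the universal constant $c$ small enough produces, via Markov's inequality, an index $i \notin C$ satisfying both $\sum_j I_\nu(X_{(i,j)}; X_C X_{(\overline C, -j)}) \leq \eps/8$ and $S(\nu^{X_{(i,\cdot)}} \| \mu) \leq \eps/8$ simultaneously.

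Fix such an $i$, and let $\nu_{x_C}$ denote $\nu$ further conditioned on $X_C = x_C$. The players employ the following strategy for a single copy of $G$: using shared randomness they sample $x_C \leftarrow \nu^{X_C}$; then, on receiving input $u_j \leftarrow \mu_j$, player $j$ uses private randomness to draw $y_j \leftarrow \nu_{x_C}^{X_{(\overline C, j)} \mid X_{(i,j)} = u_j}$, forms the full input sequence by combining $x_{(C,j)}$, $y_j$, and placing $u_j$ in position $i$, then outputs the $i$-th symbol of $h_j$ applied to this sequence. If the joint distribution of $(u, y_1, \ldots, y_k)$ matched $\nu_{x_C}$ exactly, the average winning probability (over $x_C \sim \nu^{X_C}$) would equal $\Pr(\text{Win } i \mid \text{Win } C) > 1 - \eps/2$. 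Two sources of deviation arise: (a) the marginal mismatch between $\mu_j$ and the true conditional marginal $\nu_{x_C}^{X_{(i,j)}}$, whose aggregate contribution is bounded by $S(\nu^{X_{(i,\cdot)}} \| \mu) + \sum_j I_\nu(X_{(i,j)}; X_C)$ and thereby by $\eps/8$ via Lemma~\ref{lem:brrrs-divergence}; and (b) the fact that the protocol samples $y_1,\ldots,y_k$ independently given $(x_C, u)$, whereas $\nu_{x_C}$ may correlate them, with cost bounded via a $k$-hop hybrid coupling argument (Pinsker's inequality at each hop) by $\sum_j I_\nu(X_{(i,j)}; X_{(\overline C, -j)} \mid X_C) \leq \eps/8$.

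Combining, the strategy wins $G$ with probability at least $\Pr(\text{Win } i \mid \text{Win } C) - \eps/2 > 1-\eps$, the promised contradiction. The main technical obstacle is the $k$-player hybrid coupling in step (b): one must pass from the fully factorized sample $(y_1 \mid u_1) \times \cdots \times (y_k \mid u_k)$ to the true joint conditional sample by switching one player at a time, controlling each hop's TV cost via Pinsker and verifying that the telescoping sum matches the mutual-information estimate from Raz. This is standard but bookkeeping-heavy; the remaining steps — KL-to-TV conversion via Lemma~\ref{lem:brrrs-divergence} applied to the binary accept/reject distributions, and the triangle inequality for total variation — are routine. The overall theorem then follows from this lemma by the same iterative construction of $C$ used in Section~\ref{sec:cq-proof}, yielding $\val(G^{\otimes n}) \leq (1-\eps^2)^{\Omega(n/sk)}$.
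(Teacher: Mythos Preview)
Your approach diverges from the paper's in a crucial way, and the divergence introduces a genuine gap. The paper does \emph{not} condition only on $x_C$; it conditions on the pair $(x_C, a_C)$ where $a_C = h(x)|_C$ are the answers in $C$. Because $a_{(C,j)}$ is a deterministic function of player $j$'s input alone, and because the win event on $C$ is determined by $(x_C,a_C)$, conditioning on $(x_C,a_C)$ renders the players' full inputs $X_{(\cdot,1)},\ldots,X_{(\cdot,k)}$ \emph{exactly} independent under $\varphi$ (this is the paper's Claim and Corollary~\ref{clm:conditional_independence}). No hybrid coupling is needed, and the cost of conditioning on $a_C$ is what brings in the answer-length parameter $s$ via $\log\dim(A_C) = |C|sk$.

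Your step (b) is where the argument breaks. You correctly obtain from Raz's Lemma that $I_\nu(X_{(i,j)}; X_{(\overline C,-j)} \mid X_C)$ is small for good $i$. But this says only that player $j$'s \emph{single input symbol} $X_{(i,j)}$ is nearly independent of the other players' full inputs. The $k$-hop hybrid that passes from the joint $\nu_{x_C}(y_1,\ldots,y_k \mid u)$ to the product $\prod_j \nu_{x_C}(y_j \mid u_j)$ requires instead that each player's \emph{entire} input column $y_j = X_{(\overline C,j)}$ be nearly independent of the others' given $(x_C,u_j)$; the relevant quantity at hop $j$ is of the form $I_\nu(X_{(\overline C,j)}; X_{(\overline C,>j)} U_{-j} \mid X_C, U_j)$, which your Raz bound does not control. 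A sanity check: your argument never invokes $s$, so if it worked it would yield $\val(G^{\otimes n}) \leq (1-\eps^2)^{\Omega(n/k)}$ with no alphabet dependence. The paper notes (citing Feige--Verbitsky) that the dependence on output alphabet size is necessary for classical parallel repetition even for free games, so such a bound cannot hold in general.
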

\begin{proof}
	Let $C\subseteq [n]$. Let $V:\X \times \A \to \{0,1\}$ denote the acceptance predicate used by the referee in the single game $G$. Define the following probability distribution on $\X_{\overline{C}} \times \X_C \times \A_C \times \{0,1\}$:
	$$
		\psi(x_{\overline{C}},x_C,a_C,r) := \left \{ 
		\begin{array}{ll}
		\Pr(x_{\overline{C}}) \Pr(x_C) & a_C = h(x)|_C, \text{and } r = V(x_C,a_C) \\
		0									& \text{otherwise}. 
		\end{array}
		\right.
	$$
	where $\Pr(x_{\overline{C}})$ and $\Pr(x_C)$ is the probability of the input $x_{\overline{C}}$ and $x_C$ respectively, under the distribution $\mu$; and $h(x)|_C := ( h_1(x_1)|_C,\ldots,h_k(x_k)|_C )$, where $h_j(x_j)|_C$ denotes the restriction of player $j$'s output to the coordinates in $C$. Slightly abusing notation, by $V(x_C,a_C)$ we mean $\prod_{i \in C} V(x_{(i,\cdot)},a_{(i,\cdot)})$.
	Now define the conditioned distribution
	$$
		\varphi(x_{\overline{C}},x_C,a_C) = \psi(x_{\overline{C}},x_C,a_C| r = 1),
	$$
	or in other words, the distribution of $\psi$ conditioned on $r = 1$ (i.e. the predicate $V$ accepting). Call this event $\mathcal{W}$. In what follows, we will use $X_{\overline{C}}$, $X_C$, $A_C$ to denote the random variables whose joint probability distribution is given by either $\varphi$ or $\psi$, depending on the context. 
	
	Let $\lambda = \Pr(\text{Win $C$}) = \Pr_\psi(r = 1)$. If $\lambda < 2^{-c \eps n}$, we are done. Assume otherwise. By Fact~\ref{fact:max_divergence}, we have $\log 1/\lambda \geq S_\infty (\varphi \| \psi)$. In what follows, we use $\varphi^{X_C}$ to denote the marginal distribution of $\varphi$ on $x_C$. Similarly, $\varphi^{X_{(i,\cdot)}}$ and $\psi^{X_{(i,\cdot)}}$ are the marginals on the appropriate variables.

We now argue the existence of a probability distribution $\sigma^{X_C A_C}$ such that
	$$S_\infty (\psi^{X_{\overline{C}} X_C A_C} \| \psi^{X_{\overline{C}}} \otimes \sigma^{X_C A_C}) \leq 2 |C| sk$$
where $s$ is the maximum length of a single player's output.

Consider the following two-party communication protocol: Alice receives $X_{\overline{C}}$ as input. Bob possesses the random variables $X_C$. Bob creates a copy of this random variable; call it $X'_C$. He sends Alice $X'_C$. Alice then applies the strategy $\{h_j\}$ on the inputs $X_{\overline{C}} X_C$ to produce outputs $A_C$. She sends Bob the random variable $A_C$. The final state of the protocol, where we have ``traced out'' (i.e. marginalized over) $X'_C$, is precisely $\psi^{X_{\overline{C}} X_C A_C}$. Using Theorem~\ref{thm:nayak_salzman}, and the argumentation used in the proof of Claim~\ref{clm:relative_min_entropy}, we obtain such a distribution $\sigma^{X_C A_C}$.\, \footnote{Technically, the arguments in Claim~\ref{clm:relative_min_entropy} were for quantum min-entropy. However, this gives only a stronger result in the classical setting.}

Using Fact~\ref{fact:relative_min_entropy_chain_rule2}, we get that $S(\varphi^{X_{\overline{C}} X_C A_C} \|  \psi^{X_{\overline{C}}} \otimes \sigma^{X_C A_C}) \leq \log 1/\lambda + 2 |C| sk$. Using Facts~\ref{fact:divergence_chain_rule} and~\ref{fact:divergence_split_rule}, we get that
	\begin{align}
			\Ex_{i \notin C} \Ex_{(x_C,a_C) \leftarrow \varphi^{X_C A_C}} S (\varphi_{x_C,a_C}^{X_{(i,\cdot)}} \| \psi^{X_{(i,\cdot)}}) \leq (\log 1/\lambda + 2|C| sk)/|\overline{C}|,\label{eq:x_i_div_classical}
	\end{align}
where $\varphi^{X_{(i,\cdot)}}_{x_C,a_C}$ is the marginal distribution of the random variable $X_{(i,\cdot)}$ under $\varphi$ conditioned on $X_C = x_C$, and $A_C = a_C$. 
Let $\delta = 2 \left( \log 1/\lambda + 2|C| sk \right)/|\overline{C}|$. 

We now describe a protocol for the $k$ players to play game $G$. The players receive $u \in \X$, drawn from the product distribution $\mu$. Player $j$ receives $u_j \in \X_j$. The players have access to shared randomness. In the protocol below, if a player $j$ decides to abort, it outputs a random answer symbol in $\A_j$.

\begin{figure}[H]
\begin{center}
\textbf{Protocol C} \\
\medskip
\framebox{
\begin{minipage}{0.9\textwidth}
	\textbf{Input}: $u \in \X$. Player $j$ receives $u_j$. \\
\textbf{Strategy for player $j$}:
\begin{enumerate}
	\item Use shared randomness to pick an index $i \in \overline{C}$ uniformly at random.
	\item Use shared randomness to sample $(x_C,a_C) \leftarrow \varphi(x_C,a_C)$.
	\item If $\varphi(x_{(i,j)} = u_j | x_C,a_C) = 0$, then abort. Otherwise, continue.
	\item Sample $x_{(\cdot,j)} \leftarrow \varphi(x_{(\cdot,j)} | x_C,a_C,x_{(i,j)} = u_j)$.
	\item Output the $i$th coordinate of $h_j(x_{(\cdot,j)})$.
\end{enumerate}

\end{minipage}
}

\end{center}
\end{figure}

\begin{claim}
	Fix $x_C,a_C$ in the support of $\varphi^{X_CA_C}$. For all $x_{(\cdot,1)} \cdots x_{(\cdot,k)}$,
	$$
		\varphi(x_{(\cdot,1)} \cdots x_{(\cdot,k)} | x_C,a_C) = \prod_j \varphi(x_{(\cdot,j)} | x_C,a_C).
	$$
\end{claim}
\begin{proof}
	Fix $x_C, a_C$ in the support of $\varphi^{X_CA_C}$. Then, for all $x_{(\cdot,1)} \cdots x_{(\cdot,k)}$,
	$$
		\varphi(x_{(\cdot,1)} \cdots x_{(\cdot,k)} | x_C,a_C) = \prod_j \varphi(x_{(\cdot,j)} | x_{(\cdot,<j)}, x_C, a_C),
	$$
	where $x_{(\cdot,<j)} = x_{(\cdot,1)} \cdots x_{(\cdot,j-1)}$. Suppose for now that the left hand-side is non-zero. Thus, every factor on the right hand side must also be non-zero. Fix a $j$. Since $(x_C,a_C)$ determines the event $\mathcal{W}$, we have
	\begin{align*}
		\varphi(x_{(\cdot,j)} | x_{(\cdot,<j)}, x_C,a_C) &= \psi(x_{(\cdot,j)} | x_{(\cdot,< j)}, x_C,a_C) \\
		&= \frac{\psi(x_{(\cdot,< j)} | x_{(\cdot,j)}, x_C, a_C) \psi(x_{(\cdot, j)} | x_C, a_C)}{\psi(x_{(\cdot,< j)} | x_C, a_C) }.
	\end{align*}
	Since for all $j'$, $a_{(C,j')}$ is determined by $x_{(\cdot,j')}$, and under $\psi$ the inputs to the players are all independent, we have that $\psi(x_{(\cdot,< j)} | x_{(\cdot,j)}, x_C, a_C) = \psi(x_{(\cdot,< j)} | x_C,a_{(C,< j)}) = \psi(x_{(\cdot,< j)} | x_C, a_C)$. This implies that $\varphi(x_{(\cdot,j)} | x_{(\cdot,<j)}, x_C,a_C) = \psi(x_{(\cdot,j)} | x_C,a_C) =  \varphi(x_{(\cdot,j)} | x_C,a_C)$. This proves the claim in the case that $\varphi(x_{(\cdot,1)} \cdots x_{(\cdot,k)} | x_C,a_C) > 0$.

	


When $\varphi(x_{(\cdot,1)} \cdots x_{(\cdot,k)} | x_C,a_C) = 0$, let $j$ be the least such $j$ with $\varphi(x_{(\cdot,j)} | x_{(\cdot,<j)}, x_C, a_C) = 0$. If $j = 1$, then we are done. If $\Pr(x_{(\cdot,j)}| x_C, a_C) = 0$, we are also done. Otherwise, $\psi(x_{(\cdot,<j)} | x_C, a_C) > 0$ and $\psi(x_{(\cdot,j)}| x_C, a_C) > 0$, and this implies that $\psi(x_{(\cdot,\leq j)} | x_C, a_C) > 0$ (because the players' inputs are all independent), a contradiction. This concludes the proof.
\end{proof}

\begin{corollary}
	\label{clm:conditional_independence}
	Fix $x_C,a_C$ in the support of $\varphi^{X_CA_C}$. Let $i\notin C$ and $u \in \X$ be such that $\varphi(x_{(i,\cdot)} = u | x_C,a_C) > 0$. Then for all $x_{(\cdot,1)} \cdots x_{(\cdot,k)}$,
	$$
		\varphi(x_{(\cdot,1)} \cdots x_{(\cdot,k)} | x_C,a_C,x_{(i,\cdot)} = u) = \prod_j \varphi(x_{(\cdot,j)} | x_C,a_C,x_{(i,j)} = u_j).
	$$
\end{corollary}
\begin{proof}
	Note that 
\begin{align*}
\varphi(x | x_C,a_C,x_{(i,\cdot)} = u) &= \frac{\varphi(x \wedge x_{(i,\cdot)} = u | x_C,a_C)}{\varphi(x_{(i,\cdot)} = u | x_C,a_C)} \\
&= \frac{\prod_j  \varphi(x_{(\cdot,j)} \wedge x_{(i,j)} = u_j | x_C,a_C)}{\varphi(x_{(i,\cdot)} = u | x_C,a_C)}.
\end{align*}
The proof of the previous Claim shows $\varphi(x_{(i,\cdot)} = u | x_C,a_C) = \prod_j \varphi(x_{(i,j)} = u_j | x_C,a_C)$. This completes the argument.
\end{proof}

We now relate the winning probability of protocol C with the quantity $\omega = \Ex_{i \notin C} \Pr \left( \text{Win $i$}  \big | \text{Win $C$} \right)$. For $i \notin C$: 
\begin{align*}
	\Pr \left( \text{Win $i$}  \big | \text{Win $C$} \right) &= \Ex_{x \leftarrow \varphi^{X}} V(x_{(i,\cdot)},h(x)_i) \\
	&= \Ex_{x_C,a_C} \left [ \Ex_{x_{(i,\cdot)} | x_C,a_C}\left[ \Ex_{x | x_C,a_C,x_{(i,\cdot)}} V(x_{(i,\cdot)},h(x)_i) \right] \right]
\end{align*}
where we view the sampling of $x \leftarrow \varphi^{X}$ as a three stage process:
\begin{enumerate}
	\item $(x_C,a_C) \leftarrow \varphi(x_C,a_C)$
	\item $x_{(i,\cdot)} \leftarrow \varphi(x_{(i,\cdot)} | x_C,a_C)$
	\item $x \leftarrow \varphi(x | x_C,a_C,x_{(i,\cdot)})$
\end{enumerate}

\noindent Let $\kappa$ denote the winning probability of Protocol C. Fix a $u \in \X$, $i \in \overline{C}$, and $x_C, a_C$. Let $P_{u,i,x_C,a_C}$ denote the probability that the players win the game, given that their input is $u$, they sampled $i$ and $(x_C,a_C)$ in the protocol. Suppose that $\varphi(x_{(i,\cdot)} = u,x_C,a_C) > 0$, then the players do not abort, and so the $x \in \X^n$ sampled by all the players in Protocol C in this case is distributed as $\prod_j \varphi(x_{(\cdot,j)} | x_C,a_C,x_{(i,j)} = u_j)$. Then, by Corollary~\ref{clm:conditional_independence}, this is equal to $\varphi(x | x_C,a_C, x_{(i,\cdot)} = u)$. Thus we have the following:
\begin{align*}
	\kappa &\geq \Ex_{u \leftarrow \mu}  \Ex_{i} \Ex_{x_C,a_C} P_{u,i,x_C,a_C} \\
	&\geq \left( \Ex_{i} \Ex_{x_C,a_C} \left [\Ex_{x_{(i,\cdot)} | x_C,a_C}   P_{x_{(i,\cdot)},i,x_C,a_C} \right] \right )  - 4\delta\\
	 &=\left(  \Ex_{i}  \Ex_{x_C,a_C}\left [ \Ex_{x_{(i,\cdot)} | x_C,a_C} \left [ \Ex_{x|x_C,a_C,x_{(i,\cdot)}} V(x_{(i,\cdot)},h(x)_i) \right]\right] \right) - 4\delta
\end{align*}
where $(x_C,a_C) \leftarrow \varphi^{X_CA_C}$, and $x_{(i,\cdot)}$ is distributed according to $\varphi^{X_{(i,\cdot)}}_{x_C,a_C}$. In the inequality we used line~\eqref{eq:x_i_div_classical} and Lemma~\ref{lem:brrrs-divergence}. This is precisely $\omega - 4\delta$. Since Protocol C is a valid strategy for game $G$, we have that $\Ex_{i \in \overline{C}} \Pr(\text{Win $i$} | \text{Win $C$}) \leq \val(G) + 4\delta$. By our assumption on $|C|$ and $\Pr(\text{Win $C$})$, we can choose an appropriate constant $c$ so that there exists an $i \notin C$ with $\Pr(\text{Win $i$} | \text{Win $C$}) \leq \val(G) + 4\delta \leq 1 - \eps/2$. 
\end{proof}

\begin{theorem}
	$\val(G^{\otimes n}) \leq (1 - \eps^2)^{\Omega(n/sk)}$.
\end{theorem}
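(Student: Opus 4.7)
The plan is to mirror the iterative construction used in the CQ-game theorem earlier in the paper, but now invoking the classical version of the key lemma just established. Specifically, I would build a ``hard'' subset $C \subseteq [n]$ one coordinate at a time, starting from $C = \emptyset$, maintaining as an invariant that $\Pr(\text{Win }C) \geq (1-\eps/2)^{|C|}$.

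The iteration step is as follows. Suppose we have already constructed some $C$ with $|C| \leq c\eps n/(sk)$. We apply the preceding lemma: either $\Pr(\text{Win }C) < 2^{-c\eps n}$ (in which case we halt and output $C$), or there exists an index $i \notin C$ such that $\Pr(\text{Win }i \mid \text{Win }C) \leq 1 - \eps/2$, in which case we add $i$ to $C$ and repeat. Since $\eval(G^{\otimes n}) \leq \val(G^{\otimes n}) \leq \Pr(\text{Win }C)$ for every $C$, the first halting case already gives $\val(G^{\otimes n}) \leq 2^{-c\eps n}$, which is $(1-\eps^2)^{\Omega(n/sk)}$ for $\eps \leq 1$. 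In the second halting case, the process runs until $|C| > c\eps n/(sk)$; by Bayes' rule and the invariant maintained at each step,
\[
\Pr(\text{Win }C) \;=\; \prod_{\ell=1}^{|C|} \Pr(\text{Win }i_\ell \mid \text{Win }\{i_1,\dots,i_{\ell-1}\}) \;\leq\; (1-\eps/2)^{|C|} \;\leq\; (1-\eps/2)^{\Omega(\eps n/sk)},
\]
which is again at most $(1-\eps^2)^{\Omega(n/sk)}$ after absorbing constants. Combining the two halting cases yields the theorem.

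The only potential subtlety is making sure the lemma can be reapplied at each stage with the \emph{same} constant $c$ — i.e., that the hypothesis $|C| \leq c\eps n/(sk)$ holds throughout the iteration and that the ``exit'' size threshold is consistent with the exponent obtained from the product bound. This is automatic from how the lemma is phrased: as long as $|C|$ has not yet exceeded $c\eps n/(sk)$, the dichotomy fires and either lets us extend $C$ or terminates with an exponentially small winning probability; and once $|C|$ crosses $c\eps n/(sk)$, the product bound kicks in and delivers the desired $(1-\eps^2)^{\Omega(n/sk)}$ rate directly. There is no genuine ``hard part'' here beyond bookkeeping — the real work of the theorem has already been done in the lemma, whose proof used Raz's Lemma, the Nayak–Salzman communication bound, and a quantum-strategy-rounding argument specialized to the classical setting.
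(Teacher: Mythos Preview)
Your approach is correct and essentially identical to the paper's intended proof, which simply repeats the iterative construction already given for the CQ-game theorem (build $C$ one coordinate at a time via the lemma's dichotomy, then bound $\Pr(\text{Win }C)$ either by $2^{-c\eps n}$ or by $(1-\eps/2)^{|C|}$ once $|C| > c\eps n/(sk)$). One minor slip: you write $\eval(G^{\otimes n}) \leq \val(G^{\otimes n})$, but the inequality goes the other way; fortunately only $\val(G^{\otimes n}) \leq \Pr(\text{Win }C)$ is actually used, so the argument is unaffected.
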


\end{document}